\title{Better Extension Variables in DQBF via Independence} 
\author{Leroy Chew \and Tom\'a\v{s} Peitl}
\institute{TU Wien, Austria}
\authorrunning{Leroy Chew, Tom\'a\v{s} Peitl}
\newlength{\breite}
\newcommand{\ComplexityClassFont}[1]{\mathsf{#1}}
\newcommand{\Pe}{\ComplexityClassFont{P}}
\newcommand{\PeP}{\ComplexityClassFont{P/poly}}
\newcommand{\PSPACE}{\ComplexityClassFont{PSPACE}}
\newcommand{\Math}[1]{\ensuremath{#1}\xspace}
\newcommand{\ProofSystemsFont}[1]{\mathsf{#1}}
\newcommand{\DefineProofSystem}[2]{\expandafter\def\csname#1\endcsname{\Math{\ComplexityClassFont{#2}}}}
\newcommand{\Red}{\Math{\boldsymbol{\forall}\kern .04ex\ProofSystemsFont{red}}}
\newcommand{\red}{\Math{\,\raisebox{.2ex}{$\scriptstyle+$}\,\Red}}
\newcommand{\Frege}[1][]{\Math{\ifthenelse{\isempty{#1}}{\ProofSystemsFont{Frege}}{#1\text{-}\ProofSystemsFont{Frege}}}}
\newcommand{\Res}[1][]{\Math{\ifthenelse{\isempty{#1}}{\ProofSystemsFont{Res}}{#1\text{-}\ProofSystemsFont{Res}}}}
\newcommand{\FregeRed}[1][]{\Frege\!\!\red}
\newcommand{\eFregeRed}[1][]{\eFrege\!\!\red}
\newcommand{\dFregeRed}[1][]{\dFrege\!\!\red}
\newcommand{\qrc}{\textsf{Q-Res}\xspace}
\newcommand{\irc}{\textsf{IR-calc}\xspace}
\newcommand{\dirc}{\textsf{DQBF-IR-calc}\xspace}
\newcommand{\idrc}{\textsf{IR(\Drrs)-calc}\xspace}
\newcommand{\Dz}[1]{\ensuremath{\mathcal{D}^{\mbox{\scriptsize \upshape #1}}}\xspace}
\newcommand{\Drrs}{\Dz{rrs}}
\newcommand{\Dstd}{\Dz{std}}
\newcommand{\lqrc}{\textsf{LD-Q-Res}\xspace}
\newcommand{\qdrc}{\textsf{Q(\Drrs)-Res}\xspace}
\newcommand{\qsrc}{\textsf{Q(\Dstd)-Res}\xspace}
\newcommand{\lqdrc}{\textsf{LD-Q(\Drrs)-Res}\xspace}
\newcommand{\qurc}{\textsf{QU-Res}\xspace}
\newcommand{\ecalculus}{$\forall$\textsf{Exp+Res}\xspace}
\newcommand{\irmc}{\textsf{IRM-calc}\xspace}
\newcommand{\lqurc}{\textsf{LQU}\textsf{-\hspace{1pt}Res}\xspace}
\newcommand{\lquprc}{\textsf{LQU}$^+$\textsf{-\hspace{1pt}Res}\xspace}
\newcommand{\qrat}{\textsf{QRAT}\xspace}
\newcommand{\instant}{\mbox{\textsf{\upshape inst}}}
\newcommand{\axiom}{\mbox{\textsf{\upshape axiom}}}
\newcommand{\res}{\mbox{\textsf{\upshape res}}}
\newcommand{\propvarset}{\mathcal{V}}
\DeclareMathOperator*{\var}{\operatorname{var}}
\newcommand{\CNF}{\textsc{CNF}}
\DeclareMathOperator*{\lv}{\operatorname{lv}}
\newcommand{\pathc}{\mathfrak{C}}
\newcommand{\pathl}{\mathfrak{L}}
\DeclareMathOperator*{\instantiate}{\textsf{inst}}
\DeclareMathOperator*{\duality}{\operatorname{ \mathsf{Duality}}}
\DeclareMathOperator*{\select}{\operatorname{\mathsf {Select}}}
\tikzstyle{uredge}=[very thick,draw=red!80!green,line cap=round]
\tikzstyle{redge}=[line cap=round,very thick,draw=red!30!green!20!blue
\tikzstyle{axiomn}=[rectangle,very thick,draw=black!50,%
\tikzstyle{calcn}=[rectangle%
\tikzstyle{infn}=[rectangle,rounded corners=1mm,thick,draw=black!50,%
\tikzstyle{botn}=[rectangle,rounded corners=1mm,very thick,draw=black,%
\tikzstyle{expcalcn}=[rectangle%
\tikzstyle{algcalcn}=[rectangle%
\tikzstyle{strongcalcn}=[rectangle%
\tikzstyle{cdclcalcn}=[rectangle%
\tikzstyle{expcalcn}=[rectangle%
\tikzstyle{strcalcn}=[rectangle%
\newlength{\radius}
\tikzstyle{decision} = [diamond, draw, fill=blue!20, 
\tikzstyle{block} = [rectangle, draw, fill=blue!20, 
\tikzstyle{line} = [draw, -latex']
\tikzstyle{cloud} = [draw, ellipse,fill=red!20, node distance=3cm,
\definecolor{violet}{RGB}{138,43,226}
\definecolor{forestgreen}{RGB}{34,139,34}
\definecolor{darkblue}{RGB}{102,0,204}
\definecolor{pink}{RGB}{255,192,20}
\definecolor{gold}{RGB}{255,215,0}
\begin{document}
	\maketitle
	\begin{abstract}
		We show that extension variables in (D)QBF can be generalised by conditioning on universal assignments. The benefit of this is that the dependency sets of such conditioned extension variables can be made smaller to allow easier refutations. This simple modification instantly solves many challenges in p-simulating the QBF expansion rule, which cannot be p-simulated in proof systems that have strategy extraction \cite{ChewCly20}. 
		Simulating expansion is even more crucial in DQBF, where other methods are incomplete. In this paper we provide an overview of the strength of this new independent extension rule. We find that a new version of Extended Frege called \dFregeRed can p-simulate a multitude of difficult QBF and DQBF techniques, even techniques that are difficult to approach with \eFregeRed.
		We show six p-simulations, that \dFregeRed p-simulates QRAT, \dirc, \idrc, Fork-Resolution, DQRAT and \Gfull which together underpin most DQBF solving and preprocessing techniques. The p-simulations work despite these systems using complicated rules and our new extension rule being relatively simple. Moreover, unlike recent p-simulations by \eFregeRed we can simulate the proof rules line by line, which allows us to mix QBF rules more easily with other inference steps.
		
	\end{abstract}
	
\section{Introduction}

Proof systems that allow extension variables are very powerful~\cite{CR79}.
We know that in propositional logic, extended resolution can p-simulate a multitude of disparate techniques, and lower bounds to extended resolution remain an open problem.
This is remarkable because resolution itself is a weak proof system with various lower bounds, but simply adding the extension rule upgrades it to be amongst the most powerful propositional proof systems and is equivalent to the checking format DRAT \cite{WetzlerHH14}. 
An extension rule for variable $v$ takes the form:
$$ v \leftrightarrow b(X): b \text{ is a Boolean function, } X \text{ is a set of existing variables} $$ 

Quantified Boolean Formulas (QBF) also use extension variables \cite{Jus07} and these too can be very powerful. 
QBFs list all variables in a quantifier prefix. This defines whether a variable is existential ($\exists$) or universal ($\forall)$. The order defines dependencies so variables only depend on other variables to their left.
When used in refutations, extension variables must be existential, otherwise it is too easy to violate extension clauses. 
The dependency set of the new variables (which is the same as the quantification order in QBF) must be careful not to introduce falsity into the formula. In fact one of the earliest approaches was to say that every new variable is quantified rightmost to conservatively give it the entire dependency set. The drawback with this approach is that it limits the power of inferences from using extension variables \cite{BCJ16ext}. In the current alternative, sometimes known as ``strong extension'', we place $\exists v$ immediately after all the variables $X$ used in $b$.
\eFregeRed uses this strong notion, and it is so powerful no unconditional lower bound can be found, unless either a long standing proof complexity or circuit complexity open problem is solved \cite{BBCP20}. 
Some problems do emerge when looking at conditional lower bounds, though. The $\select$ family of formulas can be shown by the QBF proof system QRAT to be equivalent to the law of non-contradiction on QBFs, but do not have short proofs in QBF Extended Frege unless $\PSPACE\subseteq\PeP$ \cite{ChewCly20}.
QRAT manages this through a combination of extension variables and an explicit rule that calculates so-called spurious dependencies in quantified variables,  when nominal dependencies can be ignored when making inferences.
Note how different this is to propositional logic, in most cases where we would take a basic proof system and add even a complicated rule we would still be simulated by extended resolution.
Based on our observations in this paper we find that we can do better with extension variables. Instead of extensions being pure definitions, they are now under conditions. 
$$ \alpha \rightarrow (v \leftrightarrow b(X)): \alpha \text{ is a partial assignment of the universal variables} $$
The utility of this is not immediately obvious because $v$ is weaker than it could be, however precisely because this is weaker now the dependency of $v$ on the variables of $\alpha$ is no longer necessary for soundness. 
Because we can remove arbitrary dependencies the natural class for this proof system is for Dependency QBF (DQBF). But we can still use this rule for QBF, in fact it adds substantial clarity to existing QBF proof systems.

Our main contribution is that we propose a line based proof system, that each new line addition preserves satisfiability, so when arriving at the falsum symbol $0$ we know we started with an unsatisfiable DQBF. Unlike many QBF systems we are not able to construct countermodels from following the proof steps (unless $\Pe=\PSPACE$). We then show how it can p-simulate existing QBF and DQBF proof system rules.
Figure~\ref{fig:QBFsimstruct} shows the known p-simulations in QBF proof systems after considering the work in this paper.

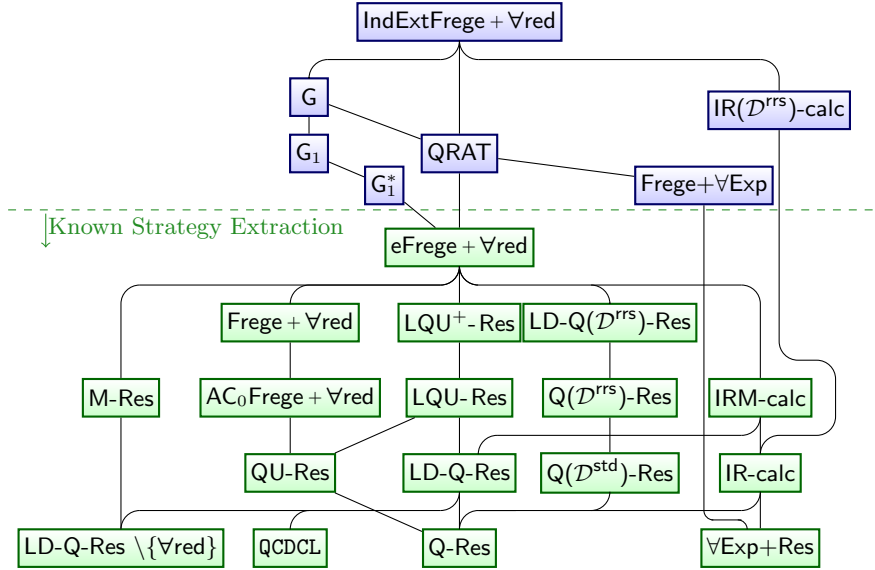
\begin{figure}
	\begin{tikzpicture}
	\node[strongcalcn](DFred) at (0,10){\dFregeRed};
	\node[strongcalcn](G) at (-2,9){\Gfull};
	\node[strongcalcn](G1) at (-2,8.25){\textsf{G}$_1$};
	\node[strongcalcn](G1t) at (-1,7.825){\textsf{G}$^*_1$};
	\node[strongcalcn](QRAT) at (0,8.25){\qrat};
	\node[expcalcn](EFred) at (0, 7){\eFregeRed};
	\node[strongcalcn](Fexp) at (3.25, 7.825){\textsf{Frege}+$\forall$\textsf{Exp}};
	\node[strongcalcn](idrc) at (4.25, 8.825){\idrc};
	\node[expcalcn](Fred) at (-2.25,6){\FregeRed};
	\node[expcalcn](Cred) at (-2.25,5){$\mathsf{AC}_0$\FregeRed};
	\node[expcalcn](LQM) at (-4.5,3){\lqrc$\setminus\{\Red\}$};
	\node[expcalcn](M) at (-4.5,5){\textsf{M-Res}};
	\node[expcalcn](QU) at (-2.25,4){\qurc};
	\node[expcalcn](Q) at (0,3){\qrc};
	\node[expcalcn](LQ) at (0, 4){\lqrc};
	\node[expcalcn](LQU) at (0, 5){\lqurc};
	\node[expcalcn](LQUP) at (0, 6){\lquprc};
	\node[expcalcn](QS) at (2, 4){\qsrc};
	\node[expcalcn](QD) at (2, 5){\qdrc};
	\node[expcalcn](LQD) at (2, 6){\lqdrc};
	\node[expcalcn](QCDCL) at (-2.25, 3){\texttt{QCDCL}};
	\node[expcalcn](e) at (4,3){\ecalculus};
	\node[expcalcn](ir) at (4,4){\irc};
	\node[expcalcn](irm) at (4,5){\irmc};
	
	\draw[dashed, forestgreen](-6,7.5)--(5.5,7.5);
	\draw[->,forestgreen](-5.5,7.45)--(-5.5,7);
	\node(selabel) at (-3.5,7.25){\textcolor{forestgreen}{Known Strategy Extraction}};
	\draw(Q)--(QU)--(Cred)--(Fred);
	\draw(LQUP)--(EFred)--(QRAT);
	\draw(e)--(ir)--(irm);
	\draw(DFred)--(QRAT)--(Fexp);
	\draw(idrc)--(4.25, 5.75) arc (0:90:-0.5\radius)--(4.75, 5.5) arc (270:180:-0.5\radius)--(5, 4.75) arc (0:-90:0.5\radius)--(4.25,4.5) arc (90:180: 0.5 \radius);
	\draw(Fexp)--(3.25, 3.5);
	
	\draw (3.25, 3.5) arc(0:90:-0.25\radius) (3.35,3.375) ;
	\draw (3.35,3.375)--(3.75,3.375);
	\draw (3.70,3.375) arc(-90:-180:-0.25\radius) (e) ;
	\draw(QD)--(LQD);
	\draw(QS)--(QD);
	\draw(1.75,3.5) arc(-90:0:0.5\radius) (QS);
	
	\draw(LQ)--(LQU)--(LQUP);
	
	
	\draw(0.25,3.5)--(3.75,3.5);
	\draw(3.75,3.5) arc(-90:0:0.5\radius) (ir);
	\draw(-0.25,3.5) arc(-90:0:0.5\radius) (ir);
	
	\draw(0.5,4.5)--(3.75,4.5);
	\draw(0.5,4.5) arc(90:180:0.5\radius) (LQ);
	\draw(3.75,4.5) arc(-90:0:0.5\radius) (irm);
	
	\draw(-2,6.5)--(-0.25,6.5);
	\draw(-2,6.5) arc(90:180:0.5\radius) (LQUP);
	\draw(-0.25,6.5) arc(-90:0:0.5\radius)-- (EFred);
	\draw(-2,3.5) arc(90:180:0.5\radius);
	
	\draw(Q)--(LQ);
	
	\draw(3.75,6.5)--(0.25,6.5);
	\draw(3.75,6.5) arc(90:0:0.5\radius) --(irm);
	\draw(1.75,6.5) arc(90:0:0.5\radius) --(LQD);
	\draw(0.25,6.5) arc(270:180:0.5\radius)-- (EFred);
	
	\draw(-4.25,6.5)--(-0.25,6.5);
	\draw(-4.25,6.5) arc(90:180:0.5\radius)--(M)--(LQM);
	\draw(-0.25,6.5) arc(-90:0:0.5\radius)-- (EFred);

	\draw(-0.25,3.5)--(-4.25,3.5);
	\draw(-4.5,3.25) arc(180:90:0.5\radius) (LQ);
	\draw(-1.75,9.5) arc(90:180:0.5\radius)--(G);
	\draw(-1.75,9.5)--(-0.25,9.5);
	\draw(-0.25,9.5) arc(-90:0:0.5\radius)--(DFred);
	\draw(0.25,9.5) arc(270:180:0.5\radius)-- (DFred);
	\draw(4, 9.5) arc(90:0:0.5\radius)--(idrc);
	\draw(0.25,9.5)--(4, 9.5);
	\draw(QU)--(LQU);
	
	\draw(G1)--(G);
	\draw(EFred)--(G1t)--(G1);
	\draw(QRAT)--(G);
	

	\draw(0.25,6.5) arc(270:180:0.5\radius)-- (EFred);
	
	\draw(0.25,3.5)--(0.75,3.5);
	\draw(0.25,3.5) arc(90:180:0.5\radius) (Q);
\end{tikzpicture}
\caption{The p-simulation structure of refutational QBF proof systems \label{fig:QBFsimstruct}}
\end{figure}

\subsection{Related Work}

The original Extended Q-Resolution was defined by Jussila et al. \cite{Jus07} and discussed the weak and strong extension variables.
QRAT by Biere, Heule and Seidl generalises extension variables in its clause additions. In the original paper \cite{HSB14} and later in Kiesl and Seidl's paper \cite{KS19}, they use similar clause additions to our independent extension definitions. Unlike our definitions, the new variables have a larger dependency set, but get round this by detecting spurious dependencies with an Extended Universal Reduction rule (EUR). 
These QRAT extension variables can capture the annotated
variables used in the QBF proof system $\forall\textsf{Exp+Res}$, but there has not been success
making the same method work for the more general QBF proof system \irc \cite{chede2021does}.
Our approach works backwards, instead of using the presence of resolution path schemes to derive expansions, we use expansions to show the validity of resolution path schemes. 

Blinkhorn proposed a DQBF proof system based on generalising the QBF proof system QRAT to DQBF \cite{Blinkhorn2020SimulatingDP}. This system too allows the addition of new variables $v$, but  when using the RAT addition rules to make definition clauses the RAT addition rule only allows the variables in the dependency set of $v$ to be used unlike in our case. Therefore we suspect their proof system is weaker. Rabe proposed DQBF proof system called Fork-Resolution that proposed extension variables but only in the case of clause-splitting \cite{rabe17}.

Chew and Clymo showed the QRAT's strongest rule violated the property of strategy extraction unless $\Pe=\PSPACE$ \cite{ChewCly20}. Later, Chew and Heule showed that the QBF sequent calculus \Gfull p-simulates QRAT \cite{CH22}. The sequent calculus \Gfull\cite{KP90} creates quantified variables of propositional and even QBF witnesses so it 
has witnessing scheme that works to simulate EUR. We use the same witnessing scheme in our work to simulate EUR with our DQBF system, hence showing strategy extraction, even for the QBF fragment, is impossible unless $\Pe=\PSPACE$.

The discussion around new variables comes after Reichl and Slivovsky's successful DQBF solver \texttt{Pedant} \cite{reichl2022pedant} introduces the similar \textit{arbiter variables}, with a few more limitations.

\subsection{Organisation}\label{sec:org}

We define the necessary preliminaries of QBF and DQBF in Section~\ref{sec:prelim}. As an example we include the DQBF complete refutation system \dirc. 
In Section~\ref{sec:maindef} we define our new extension rule and show its soundness. We integrate it into a full proof system that we call \dFregeRed. In order to show completeness we p-simulate the complete \dirc proof system.
Section~\ref{sec:psim} demonstrates the advantages of using \dFregeRed on (D)QBFs, with more p-simulations. 
In Section~\ref{sec:dres} we demonstrate a p-equivalent definition which we name \dRes based on resolution.
In Section~\ref{sec:fork} we give a definition of the clausal proof system Fork-Resolution for DQBF and show how to p-simulate that in \dFregeRed.
In Section~\ref{sec:qrat} we give a definition of the QBF interference based proof system QRAT and show how \dFregeRed can p-simulate it. In Section~\ref{sec:qdrc} we do the same for the QBF proof system \idrc which uses a reflexive resolution path dependency scheme.
Finally in Section~\ref{sec:G} we show a p-simulation of the sequent calculus \Gfull.

\section{Preliminaries}\label{sec:prelim}




We assume a countably infinite set $\propvarset$ of propositional \emph{variables} is given.
A \emph{literal} is either a variable $x \in \propvarset$ or its negation ($\neg x$), also written as $\bar x$, whereby $\bar {\bar x} = x$.
A \emph{(propositional) formula} is defined recursively: (1) literals are formulas; and (2) if $\phi$ and $\psi$ are formulas, then $\phi \land \psi$, $\phi \lor \psi$, $\phi \rightarrow \psi$, $\phi \leftrightarrow \psi$ and $\bar \phi$ are also formulas.
A \emph{circuit} is like a formula, but its recursive structure is a DAG (directed acyclic graph) instead of a tree: subformulas can be reused.
Every formula is a circuit, but transforming a circuit into a formula involves potentially exponential duplication of nodes.
The set of variables of a circuit or formula is defined recursively as $\var(\phi \circ \psi) = \var(\phi) \cup \var(\psi)$ for any operator $\circ$, $\var(\bar \phi) = \var(\phi)$, and $\var(x) = x$ if $x$ is a variable.
We use $\sqcup$ to denote the disjoint union, i.e. the union of two sets that are known to be disjoint.
A \emph{clause} is a finite set of literals, semantically interpreted as their disjunction (equivalently, a formula consisting only of literals and ``or'' connectives).
A clause $C$ is \emph{tautological}, if $\{x, \bar x\} \in C$ for some variable $x$. A clause is \emph{unit} if only contains a single literal, a clause is the \emph{empty clause} if it contains no literals. The negation $\bar C$ of a clause $C$ can be viewed as a conjunction of unit clauses $\bigwedge_{l\in C} \{\bar l\}$.

A \emph{(partial) assignment} to a circuit $\phi$ is a mapping $\alpha : V \subseteq \var(\phi) \to \{0, 1\}$.
A partial assignment $\alpha$ is \emph{complete} if it assigns every variable, i.e.\ $V = \var(\phi)$.
We can write an assignment $\alpha$ as a function, a conjunction of literals or sequence of literals i.e. $x\bar y\bar z$. 
As a function an assignment is extended to circuits by $\alpha(\phi \land \psi) = \alpha(\phi) \cdot \alpha(\psi)$, $\alpha(\phi \lor \psi) = \max ( \alpha(\phi), \alpha(\psi) )$ and $\alpha(\bar \phi) = 1 - \alpha(\phi)$.
$\alpha$ \emph{satisfies} (is a \emph{model} of) a circuit $\phi$ if $\alpha(\phi) = 1$.
A circuit is \emph{satisfiable} if it has a model, and \emph{unsatisfiable} otherwise.
Circuit $\phi$ is a \emph{tautology} if $\bar \phi$ is unsatisfiable.
Circuit $\phi$ \emph{entails} a circuit $\psi$, written $\phi \models \psi$, if every complete assignment $\alpha : \var(\phi) \to \{0,1\}$ that satisfies $\phi$ also satisfies $\psi$.
Two circuits $\phi$ and $\psi$ are \emph{(logically) equivalent}, written $\phi \equiv \psi$, if $\var(\phi) = \var(\psi)$, and $\phi \models \psi$ and $\psi \models \phi$.
Two circuits $\phi$ and $\psi$ are \emph{equisatisfiable}, written $\phi \cong \psi$, if they are both satisfiable or both unsatisfiable.

A formula is in \emph{conjunctive normal form (CNF)} if it is a (finite) conjunction of clauses.
Any circuit $\phi$ can be transformed into a logically equivalent CNF using distributivity and De Morgan's rules, but the resulting size may be exponential.
Allowing \emph{extension variables}, i.e.\ variables that do not occur in $\phi$, it is possible to transform any circuit $\phi$ into $\CNF(\phi)$, an equisatisfiable CNF in linear time~\cite{tseitin68}.
This translation has the additional property that if $\alpha$ is a model of $\CNF(\phi)$, then $\alpha|_{\var(\phi)}$ is a model of $\phi$ and vice versa, if $\alpha$ is a model of $\phi$, then there exists a model $\beta$ of $\CNF(\phi)$ such that $\beta(x) = \alpha(x)$ for all $x \in \var(\phi)$.

\subsection{Proof systems}

A \emph{proof system} as defined by Cook and Reckhow~\cite{CR79} for some non-empty language $\mathcal{L}$ is a polynomial-time computable function on strings whose range is exactly $\mathcal{L}$. Intuitively $f$ maps proofs to valid theorems, non-proofs can be mapped to some arbitrary known element in $\mathcal{L}$. Soundness comes from well-definition, and completeness is from surjectivity. Many proof systems are line-based where a finite set of rules govern the derivation of valid inferences in the language until a conclusive line is derived. 

Proof complexity measures the \emph{sizes} of proofs, i.e. number of characters in the proof string. In a line based proof systems, where we can distinguish individual lines we can measure the proof \emph{length}- the number of lines. Where lines are clauses, the clause \emph{width} is the number of literals in the clause. Given a proof system $g$ for language $\mathcal{L}$ we say that proof system $f$ \emph{p-simulates} $g$ if there is a polynomial time procedure $p$ that maps $g$ proofs to $f$ proof such that $f(p(\pi))= g(\pi)$ for all $g$-proofs. We do not require $f$ to necessarily be a proof system for $\mathcal{L}$, but a proof system for $\mathcal{L}'\supseteq\mathcal{L}$. In this paper we do this when we p-simulate QBF proof systems with DQBF proof systems. 

Frege systems are ``text-book'' style line-based proof systems for propositional logic. They consist of a finite, sound and complete set of axioms and rules where any variable can be substituted by any formula (such as Modus Ponens). 
We give an example of a Frege system in Figure~\ref{fig:frege1}.
\begin{figure}[h]
	\framebox{\parbox{0.98\textwidth}
		{
			\begin{prooftree}
				\AxiomC{}
				\UnaryInfC{$1$}
				\DisplayProof\hspace{2cm}
				\AxiomC{}
				\UnaryInfC{$x_1\rightarrow (x_2 \rightarrow x_1)$}
				\DisplayProof\hspace{2cm}
				\AxiomC{}
				\UnaryInfC{$((x_1\rightarrow 0)\rightarrow 0)\rightarrow x_1$}
			\end{prooftree}
			\begin{prooftree}
				
				\AxiomC{}
				\UnaryInfC{$(x_1\rightarrow (x_2 \rightarrow x_3))\rightarrow ((x_1\rightarrow x_2)\rightarrow(x_1\rightarrow x_3))$}
				\DisplayProof\hspace{1cm}
				\AxiomC{$x_1$}
				\AxiomC{$x_1\rightarrow x_2$}
				\BinaryInfC{$x_2$}
			\end{prooftree}
			\begin{prooftree}
				\AxiomC{}
				\UnaryInfC{$(x_1\rightarrow 0)\rightarrow \neg x_1$}
				\DisplayProof\hspace{1cm}
				\AxiomC{}
				\UnaryInfC{$ \neg x_1 \rightarrow (x_1\rightarrow 0)$}
				\DisplayProof\hspace{1cm}
				\AxiomC{}
				\UnaryInfC{$ (x_1 \vee x_2) \rightarrow (x_2\vee x_1) $}
			\end{prooftree}
			\begin{prooftree}
				\AxiomC{}
				\UnaryInfC{$x_1\rightarrow (x_1\vee x_2)$}
				\DisplayProof\hspace{1cm}
				\AxiomC{}
				\UnaryInfC{$(x_1 \vee x_2)\rightarrow (\neg x_1 \rightarrow x_2)$}
				\end{prooftree}
				\begin{prooftree}
				\AxiomC{}
				\UnaryInfC{$(x_1 \rightarrow x_2)\rightarrow (\neg x_1 \vee x_2)$}
				\DisplayProof\hspace{1cm}
				\AxiomC{}
				\UnaryInfC{$(x_1\wedge x_2)\rightarrow x_1$}
				\end{prooftree}
				\begin{prooftree}
				\AxiomC{}
				\UnaryInfC{$(x_1\wedge x_2)\rightarrow x_2$}
				\DisplayProof\hspace{1cm}
				\AxiomC{}
				\UnaryInfC{$ x_1 \rightarrow (x_2\rightarrow x_1 \wedge x_2)$}
			\end{prooftree}
	}}
	\caption{A \Frege system for connectives $0,1, \neg, \rightarrow \vee, \wedge $ \label{fig:frege1}.}
	
\end{figure}

The rules will depend on the connectives included, but Cook and Reckhow \cite{CR79} showed all Frege systems are p-equivalent. 
For example a Frege system that uses $\vee, \neg, \wedge$ can adopt the following laws without changing the proof complexity:
\begin{prooftree}
	\AxiomC{$C$}
	\RightLabel{(Weak)}
	\UnaryInfC{$C \vee p$}
	\DisplayProof\hspace{1cm}
	\AxiomC{$C\vee p$}
	\AxiomC{$C\vee \neg p$}
	\RightLabel{(Res)}
	\BinaryInfC{$C$}
	\DisplayProof\hspace{1cm}
	\AxiomC{$C\vee p$}
	\AxiomC{$C\vee q$}
	\RightLabel{(Distr)}
	\BinaryInfC{$C \vee (p\wedge q)$}
\end{prooftree}

\subsection{(Dependency) Quantified Boolean Formulas}
A \emph{quantified Boolean formula (QBF)} is a propositional formula equipped with Boolean quantifiers: $\forall $ and $\exists$. $\forall x \phi(x)\equiv \phi(0) \wedge \phi(1)$ and $\exists x \phi(x)\equiv \phi(0) \vee \phi(1)$. A QBF in prenex form $\Pi\phi$ contains a propositional \emph{matrix} $\phi$ which is quantifier-free and a \emph{prefix} $\Pi= \mathcal{Q}_1 x_1 \dots \mathcal{Q}_k x_k$ where $\mathcal{Q}_i\in \{\forall, \exists\}$ for $1 \leq i \leq k$. 
A closed QBF requires every variable to be bound to some quantifier in the prefix. We will mainly work with closed prenex QBFs in this paper. 
For variable $x$ we use $x\in \exists$ to denote that $x$ is existentially quantified somewhere in the prefix, or $x\in \forall$ to denote that $x$ is universally quantified in the prefix. The quantifier order matters, we say that existential variable $x$ depends on $u$ if $u$ in quantified left on $x$ in the prefix. In this way we can build a dependency set $D_x$ of each existentially quantified variable $x$ containing exactly the universal variables that are left of $x$.

A \emph{Skolem function} for existential variable $x$ is a Boolean function $f_x:D_x \rightarrow \{0,1\}$. This allows us to use alternative semantics to define the truth of a QBF: that a closed prefix QBF is true if and only if there is a set of Skolem functions, one for each existential variable $x$, such that for every complete assignment to all the universal variables the universal assignment completed with the values of the Skolem functions under that assignment, form a satisfying assignment to the propositional matrix. We call such a set of Skolem functions \emph{winning}. 
Dually, a closed QBF  is false if and only if there is a set of \emph{Herbrand functions}, one for each universal variable $y$, such that for every complete assignment to all the existential variables the Herbrand functions falsify the propositional matrix.

A Dependency Quantified Boolean Formula (QBF) can be defined and an S-form DQBF uses this notion of Skolem functions as its main semantics. An S-form DQBF $\Pi\phi$ has a prefix $\Pi= \forall u_1 \dots u_p \exists {x_1}(D_{x_1}) \dots {x_q}(D_{x_q})$, here the quantifier order does not matter as the dependency sets are explicitly given. Each $D_x$ can be any arbitrary subset of $\{u_1 \dots u_p\}$.
A DQBF is true if and only if there is a set of Skolem functions, one for each existential variable $x$, such that for every complete assignment to all the universal variables the universal assignment completed with the values of the Skolem functions under that assignment, form a satisfying assignment to the proposition matrix.
We sometimes write $\forall U \exists E \phi$ for an arbitrary S-form DQBF, where $U$ is the set of universal variables, $E$ the set of existential variables each with their own unspecified dependency set and $\phi$ a propositional matrix containing no quantifiers.
We define $D_u$ for a universal variable to be $\{u\}$. We can also define the dependency set of  a clause $C$ as  $(\bigcup_{\var(y)\in C} D_y)$ including universal literals.


QBF is a \textsf{PSPACE}-complete language and DQBF is \textsf{NEXPTIME}-complete. We can demonstrate a DQBF is true by exhibiting its Skolem functions as circuits and showing the matrix with the Skolem functions substituted in is a propositional tautology. To show a DQBF is false we can use a DQBF proof system such as \dirc \cite{BeyersdorffBlinkhornChewSchmidtSuda19} given in Figure~\ref{fig:irc}. We will define a new refutational proof system for DQBF in the next section.

\section{{\dFregeRed}, an S-form DQBF proof system}\label{sec:maindef}

\subsection{Independent Extension}

Consider the refutational proof system \FregeRed in QBF \cite{BBCP20}, composed of Frege rules and a $\forall$\textsf{red} rule. Frege rules allow for propositional line based inference and the reduction rule allows a universal variable $u$ to be replaced by a constant $0$ or $1$, given below:

\begin{prooftree}
	\AxiomC{$\Pi\phi \wedge L(u)$}
	\UnaryInfC{$\Pi\phi \wedge L(u)\wedge L(0)$}
	\DisplayProof
	\quad
	\AxiomC{$\Pi\phi \wedge L(u)$}
	\UnaryInfC{$\Pi\phi \wedge L(u)\wedge L(1)$}
\end{prooftree}

The side condition is that no variable appears to the right of $u$ in $L$. 
A sound DQBF version exists (proof in the Appendix).

\begin{lemma}[$\forall$\textsf{red} soundness]\label{lem:redsound}
	Suppose $\Pi \phi \wedge L(u)$ is a true S-form DQBF, and $L$ contains no existential variables $x$ such that $u$ is in the dependency set of $x$.
	Then the S-form DQBF $\Pi \phi \wedge L(u)\wedge L(0)$ is true and the S-form DQBF $\Pi \phi \wedge L(u)\wedge L(1)$ is also true.
\end{lemma}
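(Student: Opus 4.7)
The plan is to show that a single winning set of Skolem functions for the hypothesis DQBF $\Pi\phi \wedge L(u)$ already witnesses truth of both conclusions, by exploiting the side condition to argue that flipping $u$ does not change the values taken by the existentials appearing in $L$.

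First I would fix a set $\{f_x\}_{x \in \exists}$ of Skolem functions witnessing truth of $\Pi\phi \wedge L(u)$, and take an arbitrary complete universal assignment $\alpha: U \to \{0,1\}$. The goal is to show that for each $c \in \{0,1\}$, the matrix $\phi \wedge L(u) \wedge L(c)$ is satisfied under $\alpha$ together with the Skolem values $\{f_x(\alpha|_{D_x})\}$. The $\phi \wedge L(u)$ conjuncts are handled directly by the assumption, so the only work is showing $L(c)$ evaluates to $1$ under this assignment.

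For this, I would introduce the ``flipped'' universal assignment $\alpha_c$ that agrees with $\alpha$ on every universal except $u$, where $\alpha_c(u) = c$. Since $\{f_x\}$ is winning, the matrix $\phi \wedge L(u)$ is satisfied at $\alpha_c$ too, and in particular $L(u)$ evaluates to $1$ there. Under $\alpha_c$, the literal / subexpression $L(u)$ is literally $L(c)$ with existentials instantiated by $f_x(\alpha_c|_{D_x})$. Now the side condition kicks in: every existential variable $x$ occurring in $L$ satisfies $u \notin D_x$, so $\alpha|_{D_x} = \alpha_c|_{D_x}$ and hence $f_x(\alpha|_{D_x}) = f_x(\alpha_c|_{D_x})$. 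The universal variables of $L$ other than $u$ also take the same values under $\alpha$ and $\alpha_c$. Therefore $L(c)$ evaluated at $\alpha$ (with the original Skolem values) equals $L(c)$ evaluated at $\alpha_c$, which equals $L(u)$ evaluated at $\alpha_c$, which is $1$.

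The only subtlety worth being careful about is bookkeeping the dependency on $u$: the side condition is stated only for existentials, but we also need to observe that $L$ does not otherwise ``see'' $u$ in a way that differs between $\alpha$ and $\alpha_c$ beyond the distinguished occurrence that gets replaced by $c$. This is immediate from the syntactic convention that $L(u)$ denotes $L$ with a particular universal slot instantiated by $u$, and $L(c)$ denotes the same schema with $c$ substituted for $u$. Once this is spelled out, both conclusions $\Pi\phi \wedge L(u)\wedge L(0)$ and $\Pi\phi \wedge L(u)\wedge L(1)$ follow symmetrically by choosing $c=0$ and $c=1$ respectively, and the same Skolem functions serve as witnesses in both cases.
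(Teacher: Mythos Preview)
Your proposal is correct and follows essentially the same approach as the paper: reuse the winning Skolem functions and flip the universal assignment on $u$ to argue that $L(c)$ is satisfied, using the side condition to ensure the existential values in $L$ do not change. Your treatment is slightly more uniform (handling $c\in\{0,1\}$ simultaneously via $\alpha_c$ rather than the paper's case split on $\alpha(u)$), but the underlying argument is identical.
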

	\begin{proof}
	An S-form DQBF is true if and only if it has a set of satisfying Skolem functions, a function $\sigma_x$ for each of its existential variables $x$.
	Suppose $\Pi \phi \wedge L(u)$ is satisfied by the set $\{\sigma_x \mid x\in \exists\}$.
	We will show that $\Pi \phi \wedge L(u)\wedge L(0)$ is satisfied by the same functions. 
	Consider an arbitrary universal assignment $\alpha$, $\phi$ is satisfied by assumption and so is $L(u)$. If $\alpha(u)$ is $0$ then $L(u)=L(0)$ is satisfied.
	Otherwise consider $\beta$ which is identical to $\alpha$ except on $u$. $\alpha(u)=1$ and $\beta(u)=0$, but the outputs of Skolem functions of the existential variables in $L$ remain unaffected by changing between $\alpha$ and $\beta$, only the variable $u$ is affected and so $L(0)$ is satisfied.
	Therefore $\Pi \phi \wedge L(u)\wedge L(0)$ is satisfied by the same set of  Skolem functions as $\Pi \phi \wedge L(u)$. The case with $L(1)$ is symmetrical.\qed
\end{proof}
%

Universal reduction in $L(u)$ is blocked when there are existential variables in $L$ that depend on $u$. Extension variables are also existential and can end up blocking reduction through excessive dependency. We define extension variables that \textit{conditionally} represent Boolean circuits, for smaller dependency sets. We give two versions, one for conjunction and one for disjunction. We could instead use a single rule based on a functionally complete connective such as NAND 
(which we do in Section~\ref{sec:dres}), 
 but our definitions fit more nicely into the proofs of this paper.
Let $\alpha$ be a conjunction of universal literals, and $Y$ is a set of literals, both existential and universal. 

\begin{prooftree}
	\AxiomC{$\Pi\phi$}
	\UnaryInfC{$\Pi \exists v(D_v) \phi \wedge (\alpha \rightarrow (v \leftrightarrow \bigwedge_{y\in Y}  y))$}
	\DisplayProof\hspace{0.5cm}
	\AxiomC{$\Pi\phi$}
	\UnaryInfC{$\Pi \exists v(D_v) \phi \wedge (\alpha \rightarrow (v \leftrightarrow \bigvee_{y\in Y} y))$}
\end{prooftree}
The extension variable $v$ is a new variable not appearing in $\Pi$, nor in $\phi$.
$D_v$ is calculated as the union over all $D_{\var(y)}$ for $y\in Y$ and we then subtract the domain of $\alpha$. 
This means $v$ is independent of every variable in $\alpha$ and  even if some variable $x$ in $Y$ does depend on some variable $u$, that dependence will be removed if $u \in \alpha$. 
In the earlier extension rule \cite{Jus07,BBCP20}, the variables that extension variable was defined on coincided with its dependency set. In our new rule $v$ can be defined on variables and not receive its full dependency set. Having a smaller dependency set means that $v$ prevents fewer reduction steps. 
For why this is permissible, we can think of the equational part of the definition only applying once $\alpha$ is already set, therefore there is no scenario of the $\alpha$ variables where $v$ is required to consider a different input value for these variables, other than in the situation where it must consider $\alpha$.


The downside of this definition is that substituting a Boolean function $b$ for its extension variable $v$ adds condition $\alpha$.
But because we subtract the domain of $\alpha$, $v$ no longer blocks the reduction of variables from $\alpha$ so in many cases we can reduce these variables.

\begin{lemma}\label{lem:IndExt}
		Suppose $\Pi \phi$ is a true S-form DQBF, and $\Pi \exists v(D_v) \phi \wedge (\alpha \rightarrow (v \leftrightarrow b(Y)))$ is constructed according to the Independent Extension rule, where $b$ is a Boolean function.
		Then $\Pi \exists v(D_v) \phi \wedge (\alpha \rightarrow (v \leftrightarrow b(Y)))$ is a true S-form DQBF.
\end{lemma}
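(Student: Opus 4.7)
The plan is to take a set of winning Skolem functions $\{\sigma_x \mid x \in \exists\}$ for $\Pi\phi$ and extend it with a new Skolem function $\sigma_v$ for the new variable $v$ so that the extension clause $\alpha \rightarrow (v \leftrightarrow b(Y))$ holds under every universal assignment. The key idea is to define $\sigma_v$ on a universal assignment $\tau$ by first \emph{overriding} the $\text{dom}(\alpha)$-coordinates of $\tau$ to match $\alpha$, and only then evaluating $b(Y)$ using the existing $\sigma_x$'s for the existential literals in $Y$. Because of the override, the value of $\sigma_v(\tau)$ is insensitive to $\tau$'s values on $\text{dom}(\alpha)$, and the only other universal coordinates that can influence $\sigma_v(\tau)$ are those appearing in some $D_{\var(y)}$ for $y \in Y$; thus $\sigma_v$ really is a function of $D_v = \bigl(\bigcup_{y \in Y} D_{\var(y)}\bigr) \setminus \text{dom}(\alpha)$, as required by the prefix.

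Next I would verify the extension clause. If $\tau$ does not satisfy $\alpha$, then $\alpha \rightarrow (v \leftrightarrow b(Y))$ is vacuously satisfied. If $\tau$ does satisfy $\alpha$, then the override is trivial: the altered assignment $\tau'$ equals $\tau$ on $\text{dom}(\alpha)$ and on every other variable, so in particular $\tau'|_{D_x} = \tau|_{D_x}$ for every existential $x \in \var(Y)$. Hence $\sigma_v(\tau)$ is exactly the value of $b(Y)$ computed from the Skolem functions $\sigma_x$ at $\tau$, which is precisely the value $b(Y)$ takes in the intended semantics, so $v \leftrightarrow b(Y)$ holds. The rest of $\phi$ continues to be satisfied because the $\sigma_x$'s for the original existentials are unchanged and $v$ does not occur in $\phi$.

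The main obstacle, though essentially bookkeeping, is making the dependency-tracking step airtight: one has to be careful that the override operation used to define $\sigma_v$ does not secretly smuggle in a dependence on $\text{dom}(\alpha)$-variables, and that for each existential $x$ occurring in $Y$ the argument $\tau'|_{D_x}$ passed to $\sigma_x$ only involves coordinates that are either in $D_v$ or have been fixed by $\alpha$. A clean way to handle this is to note that $\sigma_v$ factors as a composition of (i) the projection onto $D_v$, (ii) the map completing the projection to a full universal assignment by appending $\alpha$, and (iii) the Boolean expression $b(Y)$ evaluated using the $\sigma_x$'s; step (i) is the step that enforces the dependency restriction. Once this factorisation is spelled out, the remainder of the argument is a straightforward case split on whether $\tau \models \alpha$, and the disjunction case of the rule is handled identically with $b = \bigvee_{y \in Y} y$ in place of $\bigwedge_{y \in Y} y$.
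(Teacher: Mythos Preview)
Your proposal is correct and follows essentially the same approach as the paper: both define the new Skolem function $\sigma_v$ by fixing the $\mathrm{dom}(\alpha)$-coordinates to $\alpha$ (the paper writes this as the substitution $b(\alpha(Y))$, you phrase it as an ``override'' of $\tau$) and then do the same case split on whether $\tau \models \alpha$. Your treatment of the dependency bookkeeping via the projection--completion--evaluation factorisation is somewhat more explicit than the paper's, but the underlying construction and verification are identical.
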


\begin{proof}
	$\Pi \phi$ is a true S-form DQBF, therefore it has Skolem functions $\sigma_x$ for each existential variable $x$. This set of Skolem functions satisfies all lines in $\phi$, but $\alpha \rightarrow (v \leftrightarrow b(Y))$ may or may not be satisfied.
	To make sure it is satisfied we use a Skolem function $\sigma_v$ for $v$. We apply substitution of $\alpha$ to $Y$ to get $\alpha(Y)$ which assigns some variables to constants. Notice those variables in the domain of $\alpha$ are now constant and no longer  in the domain of $\alpha(Y)$. The free variables of $\alpha(Y)$ are those in $Y$ but not $\alpha$. We define $\sigma_v(D_v)=b(\alpha(Y))$, which works because $b(\alpha(Y))$'s dependency set is $(\bigcup_{y\in Y} D_y) \setminus D_\alpha$.
	Note that $v$ does not appear in $\phi$, so $\phi$ remains satisfied.
	
	If $\alpha$ is not satisfied then  $\alpha \rightarrow (v \leftrightarrow b(Y))$ is automatically satisfied.
	If $\alpha$ is satisfied then $Y=\alpha(Y)$. If $b(Y)$ is true then $b(\alpha(Y))$ is true and so $\sigma_v$ is true which makes $v$ true, so $ v \leftrightarrow b(Y)$ is satisfied.
	Likewise if $b(Y)$ is false, then $b(\alpha(Y))$ is false so $\sigma_v$ and thus $v$ are false, therefore $ v \leftrightarrow b(Y)$ is satisfied.\qed
\end{proof}

	
\subsection{A Sound and Complete Proof System}
	
We define our new proof system \dFregeRed in Figure~\ref{fig:dfred}. The proof system works as a refutational proof typically does, starting with a DBQF $\forall U \exists E \phi$. $\phi$ is a propositional formula. For QBFs, \dFregeRed automatically generalises \eFregeRed, but since we also desire DQBF completeness only p-simulation of a complete DQBF proof system suffices, here we choose \dirc given in Figure~\ref{fig:irc}.
\dirc works by removing all universal literals from the formula and replacing each existential literal with an annotated literal, the annotations are partial universal assignments. The idea is that you can remove universal quantifiers by expansion, but you create multiple copies of the inner existential variables, so the annotations track which expansions have led to this literal.
A p-simulation works because these annotated variables can be defined by independent extension, where they drop the dependence on universal variables that have been expanded on. 

\begin{figure}
	\framebox{\parbox{0.95\textwidth}{
			\begin{minipage}{\linewidth}
				\vspace{0.5em}
				{\centering\textbf{Axiom rule:} \axiom$(\phi)$\vspace{1em}\\}
				\begin{minipage}{.4\linewidth}
					\begin{prooftree}
						\AxiomC{}
						\UnaryInfC{$\{a^{\tau|_{D_a}} \mid a \in C, \var(a) \in \exists\}$}
					\end{prooftree}
				\end{minipage}
				\begin{minipage}{.55\linewidth}
					\begin{itemize}
						\item $C$ is a clause in the matrix $\phi$.
						\item $\tau|_{D_a} := \{\lnot l \mid l \in C, \var(l) \in \forall, \var(l) \in D_a \}$.
					\end{itemize}
				\end{minipage}
			\end{minipage}
			\vspace{0.5em}\\
			\begin{minipage}{\linewidth}
				\vspace{1em}
				{\centering\textbf{Instantiation rule:} \instant$(C,\beta)$\\}
				\begin{minipage}{.4\linewidth}
					\begin{prooftree}
						\AxiomC{$C$}
						\UnaryInfC{$\{ a^{(\alpha \circ \beta)|_{D_a}} \mid a^\alpha \in C \}$}
					\end{prooftree}
				\end{minipage}
				\begin{minipage}{.55\linewidth}
					\begin{itemize}
						\item $\beta$ is a partial assignment to universal variables.
						\item$\alpha \circ \beta$ is $\{l \mid (l\in \alpha) \vee (l\in \beta \wedge \bar l\notin\alpha)\}$.
						\item$(\alpha \circ \beta)|_{D_a}:= \{l \in \alpha \circ \beta \mid \var(l) \in D_a\}$.
					\end{itemize}
				\end{minipage}
			\end{minipage}\\
			\begin{minipage}{\linewidth}
				\vspace{1em}
				{\centering\textbf{Resolution rule:} \res$(C_1,C_2,x^\tau)$\\}
				\begin{minipage}{.4\linewidth}
					\begin{prooftree}
						\AxiomC{$C_1$}
						\AxiomC{$C_2$}
						\BinaryInfC{$(C_1 \cup C_2) \setminus \{x^\tau,\lnot x^\tau\}$}
					\end{prooftree}
				\end{minipage}
				\begin{minipage}{.55\linewidth}
					\begin{itemize}
						\item $x^\tau \in C_1$ and $\lnot x^\tau \in C_2$.
					\end{itemize}
				\end{minipage}
			\end{minipage}
	}}
	\caption{Proof rules of \dirc~\cite{BeyersdorffBlinkhornChewSchmidtSuda19} \label{fig:irc}.}
\end{figure}

\begin{theorem}
	There is an $O(w^2l)$ \dFregeRed p-simulation of \dirc. Where $l$ is the number of lines in $\pi$, $w$ is the size of the largest clause $w=\max_{C\in \pi}(\sum_{l^\alpha\in C}1+ |\alpha|)$ ($\pi$ being the \dirc proof). 
\end{theorem}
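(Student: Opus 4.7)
The plan is to simulate the \dirc proof $\pi$ clause by clause, using independent extension variables to represent annotated literals. For every annotated literal $a^\alpha$ that appears in $\pi$ I introduce an extension variable $v_{a,\alpha}$ with dependency set $D_a \setminus D_\alpha$, which exactly captures the \dirc intuition that expanding on the universal variables in $\alpha$ drops them from the dependency of $a$. The correspondence between \dirc clauses and \dFregeRed clauses is then literal-for-literal: the \dirc clause $\{a_1^{\alpha_1},\dots,a_k^{\alpha_k}\}$ is simulated by the \dFregeRed clause $v_{a_1,\alpha_1}\vee\cdots\vee v_{a_k,\alpha_k}$.

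For the axiom rule, given a matrix clause $C$ I introduce each $v_{a,\tau|_{D_a}}$ ($a\in C\cap\exists$) via the direct extension $\tau|_{D_a}\rightarrow(v_{a,\tau|_{D_a}}\leftrightarrow a)$, derive the implication $\neg\tau|_{D_a}\vee\neg a\vee v_{a,\tau|_{D_a}}$ from it, and resolve these with $C$ one existential literal at a time. The resolved-in conditions $\neg\tau|_{D_a}$ consist of universal literals already present in $C$, so the resulting clause is $\bigvee_{l\in C\cap\forall} l \;\vee\; \bigvee_{a\in C\cap\exists} v_{a,\tau|_{D_a}}$. By construction no $v_{a,\tau|_{D_a}}$ depends on any variable in $D_\tau$, so universal reduction eliminates every remaining universal literal, yielding the \dirc axiom.

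For the instantiation rule $\instant(C,\beta)$, given the already-simulated premise $\bigvee_i v_{a_i,\alpha_i}$, I introduce each new annotated variable \emph{relative to the old one}: setting $\alpha_i'' = (\alpha_i\circ\beta)|_{D_{a_i}}$ and $\alpha_i' = \alpha_i''\setminus\alpha_i\subseteq\beta$, the extension is $\alpha_i'\rightarrow(v_{a_i,\alpha_i''}\leftrightarrow v_{a_i,\alpha_i})$. This gives $v_{a_i,\alpha_i''}$ the correct dependency $D_{a_i}\setminus D_{\alpha_i''}$ while confining conditioning literals to $D_\beta$. Resolving the resulting implications with the premise produces $\bigvee_i\neg\alpha_i' \;\vee\; \bigvee_i v_{a_i,\alpha_i''}$. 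Every universal variable in $\bigvee_i\neg\alpha_i'$ lies in $D_\beta$, and each surviving $v_{a_j,\alpha_j''}$ has dropped its dependence on all such variables, so universal reduction sweeps them away. Resolution is simulated by a single propositional resolution step on $v_{x,\tau}$, and the empty clause becomes $0$ by Frege reasoning. Each \dirc line contributes $O(w)$ extension introductions of total size $O(w^2)$ plus $O(w)$ resolution and reduction steps, giving $O(w^2 l)$ overall.

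The main obstacle is arranging the instantiation simulation so that \emph{every} universal literal introduced by the extension implications can be reduced. The naive definition $\alpha_i''\rightarrow(v_{a_i,\alpha_i''}\leftrightarrow a_i)$ creates conditioning literals from $D_{\alpha_i}$, which can still occur in the dependency of other surviving extension variables $v_{a_j,\alpha_j''}$ and block reduction. The relative definition $\alpha_i'\rightarrow(v_{a_i,\alpha_i''}\leftrightarrow v_{a_i,\alpha_i})$ confines conditioning literals to $D_\beta$, aligning the dropped-dependency invariant with the reducibility requirement.
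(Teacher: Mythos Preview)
Your approach is correct and closely parallels the paper's; the axiom and resolution cases are handled identically. The one substantive difference is how the extension variables for annotated literals are introduced. The paper defines every $x^\alpha$ \emph{uniformly from the base variable} via $\alpha\rightarrow(x^\alpha\leftrightarrow x)$---precisely what you call the ``naive'' definition---and then, for an instantiation step, resolves the two direct definitions over $x$ to obtain $\bar\alpha\vee\bar\beta'\vee x^{\alpha\sqcup\beta'}\vee\bar x^\alpha$ and reduces the $\bar\alpha$ literals \emph{inside this small clause} (which contains only $x^\alpha$ and $x^{\alpha\sqcup\beta'}$, neither of which depends on any variable of $\alpha$). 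Only the resulting linking clause $\bar\beta'\vee x^{\alpha\sqcup\beta'}\vee\bar x^\alpha$ is resolved into the full premise. Thus the obstacle you flag---$\alpha_i$-literals blocked by other $v_{a_j,\alpha_j''}$ in the big clause---never arises in the paper's argument.

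Your relative definition $\alpha_i'\rightarrow(v_{a_i,\alpha_i''}\leftrightarrow v_{a_i,\alpha_i})$ produces the same linking clause in one step, which is tidy, but it creates a bookkeeping issue the paper's uniform scheme avoids: when the same annotated literal $a^\gamma$ is reached via two different instantiation paths (or once via an axiom and once via instantiation), you cannot re-introduce $v_{a,\gamma}$, and your proposal does not say how to obtain the needed linking clause from whatever earlier definition $v_{a,\gamma}$ already has. This is fixable (unwind both chains back to $a$ and reduce, exactly as the paper does), but the paper's uniform direct definitions sidestep the problem entirely and keep the $O(w^2 l)$ accounting transparent.
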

\begin{proof}
	\dirc uses existential annotated variables. For each annotated variable $x^\alpha$ appearing in the \dirc proof  we introduce definition clauses $\bar \alpha \vee x^\alpha \vee \bar x$ and $\bar \alpha \vee \bar x^\alpha \vee  x$ based on  $\alpha\rightarrow (x^\alpha\leftrightarrow x)$. We therefore add $O(wl)$ many clauses each of width bounded by $O(w)$.
	
	\noindent\textbf{Axiom:} in \dirc an axiom involves some instantiation of a clause $C$.
	$\tau$ is the partial assignment that contradicts the universal literals in $C$.
	We replace each existential literal $l$ with $l^{\tau|_{D_l}}$ where $\tau|_{D_l}$ restricts $\tau$'s domain to variables in $D_{\var(l)}$. We obtain this by resolving with $\bar\tau|_{D_l} \vee l^{\tau|_{D_l}} \vee \bar l$.
	We accumulate universal literals from $\bar \tau$ in our axiom, but these can be reduced now there are no existential literals that block these literals.
	
	\noindent\textbf{Res:} The resolution step is easy to p-simulate as \Frege p-simulates resolution. 
	
	\noindent\textbf{Inst:} The final rule allows us to instantiate to increase the universal annotation uniformly everywhere in a clause. One $\instantiate(C,\beta)$ of size $O(w)$ can be simulated by lines that total size $O(w^2)$.  
	Instantiation may replace literal $x^\alpha$ with $x^{\alpha\sqcup\beta'}$ for some $\beta'\subseteq \beta$. We already have definitions $\bar \alpha \vee \bar x^\alpha \vee  x$, $\bar \alpha \vee \bar \beta' \vee x^{\alpha\sqcup\beta'} \vee \bar x$. Resolving over $x$ gets us $\bar \alpha \vee \bar \beta' \vee x^{\alpha\sqcup\beta'} \vee \bar x^ \alpha$. Neither $x^{\alpha}$ nor $x^{\alpha\sqcup\beta'}$ have $\alpha$'s variables in its dependency set, so we can now reduce the the literals of $\bar \alpha$ to get $\bar \beta' \vee x^{\alpha\sqcup\beta'} \vee \bar x^ \alpha$.
	
	We use these clauses to instantiate via resolving $O(w)$ many times. $\beta'$ is necessarily a sub assignment of $\beta$, but we may accumulate any of the literals of $\bar \beta$ in our clause.
	Instantiation means that there will be no existential variable remaining that will depend on the variables of $\beta$. And so all universal literals that accumulate can be reduced. 
	The lines are of $O(w)$ size and we involve $O(w)$ many of them to simulate this line.	\qed
\end{proof}

\begin{example}
	Let our DQBF prefix be $\forall u v w \exists a(u,v) b(w)$.	
	Suppose we have instantiation step $\instantiate(a^u \vee b, v\bar w)$.
	In a p-simulated \dirc proof we  already have some clauses that define annotated variables $a^u$, $a^{uv}$, $b^{\bar w}$.
	We can resolve $(\bar u \vee \bar a^u \vee a)$ and $(\bar u \vee \bar v \vee a^{uv} \vee \bar a)$ to get $(\bar u \vee \bar v \vee \bar a^u \vee a^{uv})$, now we reduce $u=1$ to get $(\bar v \vee \bar a^u \vee a^{uv})$.
	Using $(\bar v \vee \bar a^u \vee a^{uv})$ and $a^u \vee b$ we get $\bar v \vee a^{uv} \vee b$, we resolve again with $(w \vee b^{\bar w} \vee \bar b)$ to get $\bar v \vee w \vee a^{uv} \vee b^{\bar w}$. We then reduce with $v=1, w=0$ to get $a^{uv} \vee b^{\bar w}$ which is exactly what $\instantiate(a^u \vee b, v\bar w)$ becomes under our prefix.
\end{example}

Notice that unlike previous simulations of \irc such as the simulation of \irc by \eFregeRed~\cite{chew2022towards}, we are not formalising the strategy, but going line-by-line and replicating each line and its original semantic meaning. In QBF \irc relies on the base propositional inference rule being as weak as resolution for strategy extraction to be possible, but here we can use stronger forms of inference on instantiated clauses.

\begin{corollary}
	\dFregeRed is refutationally complete for S-form DQBFs.
\end{corollary}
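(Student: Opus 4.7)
The plan is to derive refutational completeness as a direct consequence of the theorem just established. By assumption (and as cited in Section~\ref{sec:prelim}), \dirc is a refutationally complete proof system for S-form DQBFs: every false S-form DQBF $\forall U \exists E\,\phi$ admits a \dirc refutation $\pi$. The theorem provides a polynomial-time procedure transforming any such $\pi$ into a \dFregeRed refutation of the same DQBF, with size bounded by $O(w^2 l)$. Composing these two facts gives a \dFregeRed refutation for every false S-form DQBF, which is precisely refutational completeness.

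Concretely, I would structure the argument in two sentences: first invoke completeness of \dirc to obtain $\pi$, then apply the p-simulation of the preceding theorem to obtain the \dFregeRed refutation $p(\pi)$. There is no genuine obstacle; the only thing worth noting is that the p-simulation is a translation of refutations into refutations (it preserves the conclusion, namely $0$), so completeness is transferred and not merely simulation strength.

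If desired one could also remark that this, combined with the soundness of each rule of \dFregeRed (Lemma~\ref{lem:redsound} for $\forall$\textsf{red}, Lemma~\ref{lem:IndExt} for Independent Extension, and standard soundness of Frege rules), establishes that \dFregeRed is a bona fide Cook--Reckhow proof system for the set of false S-form DQBFs. But for the corollary as stated, the single observation above suffices.
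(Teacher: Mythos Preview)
Your argument is essentially correct for DQBFs whose matrix is already in CNF, but there is a small gap you have overlooked: \dirc, as defined in Figure~\ref{fig:irc}, takes its axioms from \emph{clauses} of the matrix, so its completeness (and hence the p-simulation of the preceding theorem) only directly applies to S-form DQBFs with CNF matrices. The corollary, however, asserts completeness for arbitrary S-form DQBFs, whose matrix $\phi$ may be any propositional formula. The paper's own proof handles this explicitly: it notes that any propositional $\phi$ can be turned into a logically equivalent CNF by enumerating its falsifying assignments, and that this conversion can be carried out using \Frege rules inside \dFregeRed; only then does the simulation of \dirc kick in.

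So to complete your proof you should add one sentence covering the non-CNF case (derive an equivalent CNF inside \dFregeRed via \Frege rules, then invoke the \dirc simulation). With that addition your approach coincides with the paper's.
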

\begin{proof}
Technically, simulating \dirc only shows this for DQBF with CNF matrices.
Any propositional formula can be turned into a logically equivalent CNF through enumerating all falsifying assignments to $\phi$. This can be done with \Frege rules and 
the refutation can then proceed by simulating \dirc.\qed
\end{proof}

	\begin{figure}
		\framebox{\parbox{0.95\textwidth}{
				All rules and axioms, for any \Frege system of choice.
				
				\begin{prooftree}
				\AxiomC{}
				\RightLabel{(Axiom)}
				\UnaryInfC{$L$}
				\end{prooftree}
				
				$L$ is a conjunct in the propositional matrix $\phi$.
				
				\begin{prooftree}
					\AxiomC{}
					\RightLabel{(IndExt-$\wedge$)}
					\UnaryInfC{$(\alpha \rightarrow (v \leftrightarrow \bigwedge_{y\in Y}  y))$}
					\DisplayProof\hspace{1cm}
					\AxiomC{}
					\RightLabel{(IndExt-$\vee$)}
					\UnaryInfC{$(\alpha \rightarrow (v \leftrightarrow \bigvee_{y\in Y} y))$}
				\end{prooftree}
				$v$ is a fresh $\exists$ variable, $\alpha$ is a conjunction of $\forall$ literals. $D_v= (\bigcup_{y\in Y} D_y) \setminus D_\alpha$.
				
				\begin{prooftree}
					\AxiomC{$L(u)$}
					\RightLabel{($0$-\textsf{red})}
					\UnaryInfC{$L(0)$}
					\DisplayProof\hspace{1cm}
					\AxiomC{$L(u)$}
					\RightLabel{($1$-\textsf{red})}
					\UnaryInfC{$L(1)$}
				\end{prooftree}
		$u$ is a $\forall$ variable. There is no $\exists$ variable $x$ in $\var(L)$ such that $u\in D_x$.
		\\
		As an additional rule, 
		the prefix $\Pi$ may be weakened to $\Pi'$ to add a new variable that does not appear in the matrix.
		}}
		\caption{Proof rules of \dFregeRed \label{fig:dfred}.}
	\end{figure}
	
	
\begin{theorem}
	\dFregeRed is a sound refutational proof system. 
\end{theorem}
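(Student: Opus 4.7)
The plan is to prove soundness by induction on the length of the proof, showing the invariant that if the starting S-form DQBF $\forall U \exists E\, \phi$ is true, then after every step the augmented DQBF — with the prefix possibly extended by new existentials and the matrix conjoined with all derived lines — is still true. Once this invariant is established, the refutational soundness statement follows immediately: if $0$ is derived, then the final augmented matrix contains $0$ and is unsatisfiable, so the augmented DQBF is false; contrapositively, the original DQBF must have been false.

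For the base case there are no derived lines and the claim is vacuous. For the inductive step I would do a case analysis on the rule used. The Axiom rule just reasserts an existing conjunct of $\phi$, and the prefix-weakening rule introduces a fresh variable that does not appear anywhere in the matrix, so any Skolem function (e.g.\ a constant) witnesses it. The universal reduction rules $0$-\textsf{red} and $1$-\textsf{red} are exactly the content of Lemma~\ref{lem:redsound}; the side condition that no existential in $\var(L)$ has $u$ in its dependency set is precisely the hypothesis needed there. The two Independent Extension rules are handled by Lemma~\ref{lem:IndExt}, applied with $b(Y)$ being $\bigwedge_{y\in Y} y$ or $\bigvee_{y\in Y} y$ respectively; the stated dependency set $D_v=(\bigcup_{y\in Y}D_y)\setminus D_\alpha$ matches what the lemma constructs.

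The remaining case is the generic Frege rules. Here I would argue semantically: any sound Frege rule derives a propositional formula $L'$ that is a tautological consequence of its premises, and all premises are conjuncts of the current matrix (the original $\phi$ together with previously derived lines). Hence under every complete universal assignment, every propositional assignment satisfying the current matrix also satisfies $L'$. Therefore the current matrix and the matrix extended by $L'$ have the same set of propositional models, so any set of Skolem functions witnessing the inductive hypothesis also witnesses the updated DQBF. No variable is added, so the prefix is unchanged. The main obstacle here is conceptual rather than technical: one must note that our notion of ``sound Frege rule'' is the standard propositional one and that propositional soundness transfers unchanged to each universal branch of the DQBF semantics, because Skolem functions resolve existentials to truth values and thereby reduce the matrix to a purely propositional formula per universal assignment.

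Combining the five cases completes the induction, and applying the invariant at the step that derives $0$ yields the theorem. Because the Independent Extension and reduction cases are already discharged by Lemmas~\ref{lem:IndExt} and~\ref{lem:redsound}, the remaining work is essentially bookkeeping plus the semantic observation about Frege rules, so the proof can be kept short.
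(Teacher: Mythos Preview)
Your proposal is correct and follows essentially the same approach as the paper: an induction on proof length maintaining the invariant that the augmented DQBF remains true, with the Frege and Axiom cases handled by propositional model preservation, the reduction cases by Lemma~\ref{lem:redsound}, and the Independent Extension cases by Lemma~\ref{lem:IndExt}. One minor remark: prefix weakening may also introduce a fresh \emph{universal} variable (the paper extends both $U$ and $E$), though that case is trivial since the new variable does not occur in the matrix and the existing Skolem functions need not depend on it.
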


\begin{proof}
	We claim that if DQBF $\forall U \exists E \phi$ is true and a number of lines $L_1 \dots L_n$ are derived by \dFregeRed from $\forall U \exists E \phi$ , then   $\forall U' \exists E' \phi\wedge L_1\wedge  \dots \wedge L_n$ is also a true DQBF, where $E'$ and $U'$ extend the prefix only to include the new variables added by prefix weakening or the Independent Extension rules from $L_1 \dots L_n$.
	We can prove that if $\forall U' \exists E' \phi\wedge L_1\wedge  \dots \wedge L_{n}$ has a set of Skolem functions that satisfies all conjuncts then $\forall U'' \exists E'' \phi\wedge L_1\wedge  \dots \wedge L_{n+1}$ has a set of Skolem functions that satisfies all conjuncts (where $E''$ and $U''$ extend $E'$ and $U'$, respectively to include the new variables of $L_{n=1}$). 
	Axiom and Frege rules and reduction rules preserve winning Skolem functions.  
	For Axiom and Frege this is easy to see. Since these rules preserve models in propositional logic they preserve the whether a Skolem function satisfies all lines.
	For reduction, if every line is satisfied by the Skolem functions, this includes $L(u)$, and this does not change under both values of $u$ and factoring in these values does not necessitate updating any of the Skolem functions of the variables of $L$ because they do not depend on $u$ (Lemma~\ref{lem:redsound}).
	For the IndExt axioms we know they preserve DBQF truth from Lemma~\ref{lem:IndExt}.\qed
\end{proof}
	


In the following example, we demonstrate that \dFregeRed is conditionally strictly stronger than \eFregeRed. 

\begin{example}
	Let $X$ be the set of variables $\{x_1, \dots, x_{2n} \}$ and $\phi(X)$ a CNF in the variables of $X$.
	Then $\Pi \phi(X)$ with prefix $\Pi= \forall x_1 \exists x_2 \forall x_3 \dots \exists x_{2n}$ is a closed PCNF.
	We also define a second set of mirrored variables $X'= \{x'_1, \dots, x'_{2n}\}$.
	We can define two false QBF families:
	
	$\duality(\Pi\phi) = \exists x'_1 \forall x_1 \exists x_2 \forall x'_2 \dots \exists x'_{2n-1} \forall x_{2n-1} \exists x_{2n} \forall x'_{2n} \phi(X) \wedge \neg\phi(X')$,
	
	$\select(\Pi\phi) = \forall u \exists x'_1 \forall x_1 \exists x_2 \forall x'_2 \dots \exists x_{2n} \forall x'_{2n} (\phi(X) \vee u) \wedge (\neg\phi(X') \vee \neg u).$
	
	 Due to the prefix ordering, $\duality$ has an easy strategy for each variable, but $\select$ has a $\PSPACE$-hard strategy for its first variable $u$. It was shown \cite{ChewCly20} that 
	using the easy strategy $\duality$ always has a short proof in \eFregeRed.
	Because of strategy extraction, $\select$ cannot have short proofs in \eFregeRed unless $\PSPACE\subset \PeP$, but they have short proofs in \dFregeRed. 
	
	For each variable $x_i$ we create new variable $y_i$, if $x_i$ is existential in $\Pi$ then we use definition $\neg u \rightarrow (y_i\leftrightarrow x_i)$ if $x_i$ is universal then we use definition $u \rightarrow (y_i\leftrightarrow x_i')$. 
	These variables have a dependency subset of $x_i$ and $x_i'$, respectively. Importantly, they are independent of $u$, much like in the $\duality$ formula where there is no $u$ variable. 
	Now we replace all existential variables $x_i$ and $x'_i$  in $(\phi(X) \vee u) \wedge (\neg\phi(X') \vee \neg u)$ with $y_i$ variables. This gives us two conjuncts $(\phi(Y) \vee u)$ and $(\neg\phi(Y') \vee \neg u)$ where $Y=\{y_i\mid  \exists x_i\in \Pi\}\cup \{x_i \mid \forall x_i\in \Pi\}$ and  $Y'=\{y_i\mid  \exists x'_i\in \Pi'\}\cup \{x_i'\mid \forall x'_i\in \Pi'\}$.
	The conditional part of the definition is absorbed by the $\vee u$ and $\vee \neg u$ part, respectively.
	In each conjunct, $u$ can reduced due to independence from all the existential variables. Therefore we have $\phi(Y) \wedge \neg\phi(Y')$, and structurally we have the $\duality(\Pi\phi)$ formula, so we simply proceed with the short \eFregeRed proof to get a short proof. Because these proofs are short and uniform, \dFregeRed cannot have polynomial time strategy extraction unless $\Pe=\PSPACE$ and so \eFregeRed cannot p-simulate \dFregeRed unless $\Pe=\PSPACE$.

\end{example}

\subsection{A Resolution Version}\label{sec:dres}
	We can better demonstrate the simplicity of the new rule by defining an equivalent resolution system that uses only four clausal rules. We give the proof system in Figure~\ref{fig:dres}.
	\begin{figure}
	\framebox{\parbox{0.95\textwidth}{

			\begin{prooftree}
				\AxiomC{}
				\RightLabel{(Ax)}
				\UnaryInfC{$L$}
				\DisplayProof\hspace{1cm}
				\AxiomC{$C \vee u$}
				\RightLabel{(Red)}
				\UnaryInfC{$C$}
				\DisplayProof\hspace{1cm}
				\AxiomC{$E \vee \neg x$}
				\AxiomC{$F \vee x$}
				\RightLabel{(Res)}
				\BinaryInfC{$E \vee F$}
			\end{prooftree}
			
			$L$ is a clause in the propositional matrix $\phi$. 
			$u$ is a $\forall$ literal. There is no $\exists$ literal $l$ in $C$ such that $\var(u)\in D_{\var(l)}$, and there is no $\bar u\in C$.
			
			\begin{prooftree}
				\AxiomC{}
				\RightLabel{(IndExt)}
				\UnaryInfC{$(\bar\alpha \vee v \vee  y_1),(\bar\alpha \vee \bar v \vee   y_2),(\bar\alpha \vee  \bar v \vee \bar y_1 \vee \bar y_2)  $}
			\end{prooftree}
			$v$ is a fresh $\exists$ variable, $\alpha$ is a conjunction of $\forall$ literals. $D_v= (D_{y_1} \vee D_{y_2} ) \setminus D_\alpha$.
			\medskip\\
			As an additional rule, 
			the prefix $\Pi$ may be weakened to $\Pi'$ to add a new variable that does not appear in the matrix.
	}}
	\caption{Proof rules of \dRes \label{fig:dres}.}
\end{figure}

\begin{theorem}
	\dRes and \dFregeRed are p-equivalent.
\end{theorem}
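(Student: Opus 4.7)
The plan is to prove p-equivalence by establishing a p-simulation in each direction. Both calculi are line-based refutational systems for S-form DQBF that use the same independent extension idea, and the correspondence is essentially Cook--Reckhow's classical relationship between Extended Frege and Extended Resolution, lifted to the DQBF setting where the dependency sets of extension variables must be tracked.

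For the direction \dFregeRed p-simulates \dRes, the Axiom, Res, and Red rules of \dRes coincide with, or are immediately derivable from, rules of \dFregeRed (with Res being a standard Frege inference and Red coinciding on clauses). Only the \dRes IndExt rule, which outputs three defining clauses of a functionally complete gate (NAND) under condition $\alpha$, needs a non-trivial argument. I would simulate it by one application of IndExt-$\wedge$ introducing $w$ with $\alpha \rightarrow (w \leftrightarrow y_1 \wedge y_2)$, followed by IndExt-$\vee$ with singleton $Y = \{\bar w\}$ introducing $v$ with $\alpha \rightarrow (v \leftrightarrow \bar w)$, so that $v$ is the NAND of $y_1, y_2$ under $\alpha$. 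The three clauses are extracted by a constant number of Frege steps, and the dependency sets match since $D_v = D_w \setminus D_\alpha = (D_{y_1} \cup D_{y_2}) \setminus D_\alpha$.

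For the converse \dRes p-simulates \dFregeRed, I would follow the Cook--Reckhow simulation of Frege by Extended Resolution, adapted with DQBF dependency bookkeeping. For each line $L_i$ of the \dFregeRed proof, extension variables $v_\psi$ are introduced bottom-up along the subformula tree of $L_i$ via \dRes's NAND-style IndExt, each $v_\psi$ inheriting the correct dependency set from its arguments. Polynomial-size resolution derivations from the resulting defining clauses then produce the unit clause $v_{L_i}$ whenever $L_i$ is a derived line, and each Frege inference step translates to a polynomial-size resolution derivation on these extension-variable clauses. Axiom steps of \dFregeRed translate immediately, and the IndExt-$\wedge$ and IndExt-$\vee$ rules each translate to a small number of \dRes IndExt applications (building AND/OR from NAND) that preserve the conditioning $\alpha$.

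The main obstacle is simulating the reduction rule $L(u) \to L(0)$ of \dFregeRed, since it acts on an entire formula rather than a single clause. Rather than rebuild the extension tower for $L(0)$, I would introduce a new $v_{L(0)}$ directly from $v_{L(u)}$ via two \dRes IndExt steps conditioned on $\alpha = \{\bar u\}$: first $w$ as $\mathrm{NAND}(v_{L(u)}, v_{L(u)})$ (so $w = \neg v_{L(u)}$ under $\bar u$), then $v_{L(0)}$ as $\mathrm{NAND}(w, w)$ (so $v_{L(0)} = v_{L(u)}$ under $\bar u$). By the dependency rule of IndExt, both new variables are independent of $u$, so the side condition of \dRes's Red is satisfied. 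Combining the new defining clauses with the existing unit $v_{L(u)}$ by resolution produces $u \vee v_{L(0)}$, and a single Red step eliminates the literal $u$, yielding the required unit $v_{L(0)}$. The symmetric construction with $\alpha = \{u\}$ handles $L(u) \to L(1)$. Each simulated reduction adds a constant number of IndExt and resolution steps, keeping the overall simulation polynomial.
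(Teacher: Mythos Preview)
Your overall architecture matches the paper's: both directions are the Cook--Reckhow correspondence between Extended Frege and Extended Resolution lifted to DQBF, with the only extra care being dependency-set bookkeeping. The forward direction is fine; the paper does the \dRes IndExt in a single IndExt-$\vee$ on $Y=\{\bar y_1,\bar y_2\}$ rather than your two-step AND-then-NOT, but that is immaterial.

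The substantive difference is how you handle the reduction step $L(u)\to L(0)$ in the backward simulation, and here your proposal has a gap. The paper \emph{does} rebuild the Tseitin tower for $L(0)$: it takes $p$ and $q$ to be the structural extension variables for $L(u)$ and $L(0)$ respectively, observes that $L(u)\wedge\bar u\rightarrow L(0)$ is a propositional tautology, uses the Ext-Frege-by-Ext-Res simulation to derive $\bar p\vee u\vee q$, resolves with the unit $p$, and reduces the resulting $u\vee q$ (which is legal since $q$, being the circuit for $L(0)$, does not depend on $u$). Your shortcut instead manufactures a fresh $v_{L(0)}$ via conditioned IndExt that merely satisfies $\bar u\rightarrow(v_{L(0)}\leftrightarrow v_{L(u)})$ and is \emph{not} the structural variable for the formula $L(0)$. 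The problem is downstream: the Cook--Reckhow machinery you invoke for subsequent Frege inferences builds the extension variable of a line like $L(0)\rightarrow M$ from the \emph{structural} variable of its subformula $L(0)$, not from your shortcut. So when you later simulate Modus Ponens on $L(0)$ and $L(0)\rightarrow M$, the unit you hold is $v_{L(0)}$ but the defining clause of $v_{L(0)\rightarrow M}$ mentions the Tseitin variable $q$ for $L(0)$; these are different variables and the resolution does not go through. You can patch this by also deriving the unit $q$ (prove $u\vee(\bar v_{L(0)}\vee q)$ from the two towers and reduce $u$), but at that point you have rebuilt the tower anyway and the shortcut buys nothing. The paper's route avoids this bookkeeping entirely.
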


\begin{proof}
	
	($\Leftarrow$) We p-simulate each individual rule. (Ax) in \dRes is a straightforward applications of (Ax) in \dFregeRed. (Red) in \dRes can be p-simulated using $(0-\mathsf{red})$ or $(1-\mathsf{red})$, depending on whether the reduced literal was positive of negative, respectively. Using \Frege we can say a literal equal to $0$ or $\neg 1$ is equivalent to it being removed. It is well-known that the resolution rule is p-simulated by \Frege. For the (IndExt) rule we us the (IndExt-$\vee$) rule in Figure~\ref{fig:dfred} on $Y=\{\bar y_1, \bar y_2\}$. Then each of the three clauses for (IndExt) in Figure~\ref{fig:dres} is a propositional implicant of the formula derived by (IndExt-$\vee$) from Figure~\ref{fig:dfred} and can be derived using a short \Frege proof. Finally whenever the empty clause $\bot$ is derived in \dRes, the final steps either use a resolution step or a reduction step. In either case we can p-simulate and derive $0$ instead. 
	
	$(\Rightarrow)$ we interpret every line $L_i$ in \dFregeRed as an extension variable $l_i$ that is built as the circuit for the formula in the line. 
	All Frege rules and axioms can be p-simulated by the well know p-simulation of Ext Frege by Ext Res. The axiom rule for \dFregeRed will technically have to be p-simulated by deriving a singleton clause with the variable $l$ representing the disjunction, but every literal $x$ in the axiom clause will be used in a definition clause $\neg x \vee l$. and so we resolve away the literals until we get singleton $l$.
	
	For (IndExt-$\wedge$) and (IndExt-$\vee$), consider Figure~\ref{fig:dfred} where we define a new variable $v$ we represent the $\bigwedge_{y\in Y} y$ and $\bigvee_{y\in Y} y$ formulas from Figure~\ref{fig:dfred} with an extension variable $p$, we take the variables $y_1$ and $y_2$ from Figure~\ref{fig:dres} to both be $\bar p$ and then we define $v$ using (IndExt) (using the same $\alpha$). Once we define extension variable $q$ with $l\leftrightarrow (\alpha\rightarrow (v\leftrightarrow w))$ we can resolve to get singleton $l$. 
	
	Suppose we perform a reduction from $L(u)$ to $L(0)$ in \dFregeRed. Let us label $L(u)$ as $p$ and $L(0)$ as $q$. $L(u)\wedge \bar u \rightarrow L(0)$ is an obvious propositional tautology and we use the p-simulation of Ext Frege by Ext Res to derive clause $\bar p \vee u \vee q$, as we do not have weakening in \dRes, we may obtain a stronger clause, which is just as useful. Once we obtain singleton $p$ we resolve it to get $q \vee u$. $u$ is not blocked by $q$ and so we reduce to get $q$, or even the empty clause (which saves us from all subsequent lines). This works symmetrically for $(1-\mathsf{red})$ as it does for $(0-\mathsf{red})$
	
	And finally if we derive $l$ which represents the empty clause from \dFregeRed, we simply resolve with $\neg l$ which is part of the definition of $l$ to get the empty clause in \dRes.
	\qed
\end{proof}


\section{P-simulations}\label{sec:psim}

In order to show completeness we have already shown that \dFregeRed p-simulates \dirc. We can show more p-simulations that demonstrate the power of this new extension rule, including Fork Resolution, QRAT,
\Gfull, 
 and \Drrs based systems. With these
p-simulations we show that \dFregeRed can capture the vast
majority of (D)QBF solving and preprocessing techniques.

These p-simulations also demonstrate that \dFregeRed is powerful. A proof systems is roughly as powerful as the most expressive object it can ``cut''. Resolution cuts on literals, \Frege cuts on propositional formulas, bounded-depth \Frege cuts on bounded depth formulas and $\Gfull$ cuts on QBFs.  The extension variables in \eFrege and Ext Res cut on propositional circuits, as is the case in \eFregeRed.
\dFregeRed allows extension variables to be more expressive and by combining them we can express $\PSPACE$-hard objects, and we make use of this in the p-simulations of this section. 


	\subsection{P-Simulation of  Fork Resolution}\label{sec:fork}
	In this section we show how \dFregeRed can p-simulate another DQBF proof system that has a different notion of extension variables.

The Fork Resolution proof system is sound and complete for DQBFs that have a CNF matrix. Its main novelty is a Fork Extension rule~\cite{rabe17}, which is used for splitting clauses.
\begin{prooftree}
	\AxiomC{$C_1\cup C_2$}
	\RightLabel{(F-Ext)}
	\UnaryInfC{$ e \vee C_1$ \quad $\bar e \vee C_2$}
	\DisplayProof\hspace{1cm}
	\AxiomC{$ x \vee C_1$}
	\AxiomC{$\bar x \vee C_2$}
	\RightLabel{(Res)}
	\BinaryInfC{$C_1\cup C_2$}
	\DisplayProof\hspace{1cm}
	\AxiomC{$C\vee u$}
	\RightLabel{($\forall$\textsf{red})}
	\UnaryInfC{$C$}
\end{prooftree}
$e$ is a new $\exists$-variable and has dependency set $(\bigcup_{\var(y)\in C_1} D_y)\cap (\bigcup_{\var(y)\in C_2} D_y)$. 
Fork Resolution also uses a resolution rule and a reduction rule:
provided $\bar u$ is not in $C$ and $\var(u)$ is not in the dependency set of $C$. 


Given an instance of the Fork Extension on $C_1 \cup C_2$, the idea is to get an extension variable $x$ that is equivalent to $\forall \{u\in U \mid u\in (\bigcup_{y\in C_2} D_y), u \notin \bigcup_{y\in C_1} D_y)  \} C_2$. In other words $x$ is true if $C_2$ is true under all assignments to the specific universal variables that govern $C_2$ but not $C_1$.

It should be clear why this should  mean $C_1\cup C_2 \rightarrow \bar x \vee C_2$, because $x$ is a stronger version of $C_2$. $C_1\cup C_2 $ also implies $x\vee C_1$ is true because if a Skolem function always satisfies $C_1\cup C_2 $, then $C_2$ must be satisfied, whenever $C_1$ is falsified regardless of the values of any remaining dependencies of $C_2$.

\begin{lemma}
	\dFregeRed can p-simulate the Fork Extension rule (F-Ext).
\end{lemma}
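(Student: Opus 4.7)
The plan is to introduce the fork-extension variable $e$ as a definition that semantically represents $\forall W\, C_2$, where $W = (\bigcup_{y\in C_2} D_y) \setminus (\bigcup_{y\in C_1} D_y)$. Since $\forall W\, C_2$ is a conjunction of $2^{|W|}$ expansions and thus too large to define directly, I would build $e$ incrementally: with one IndExt application per variable of $W$, peel away one universal at a time by replacing a variable $e_i$ with a new variable $e_{i-1}$ whose Skolem function is the conjunction $e_i|_{u_i=0} \wedge e_i|_{u_i=1}$ and whose dependency set excludes $u_i$.

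Concretely, enumerate $W = \{u_1, \ldots, u_k\}$. First introduce $e_k$ via the unconditional rule IndExt-$\vee$ as $e_k \leftrightarrow \bigvee_{y\in C_2} y$. For $i = k, k-1, \ldots, 1$, introduce two auxiliaries by IndExt (on the singleton $Y = \{e_i\}$) with opposite conditions, namely $\bar u_i \to (e_i^0 \leftrightarrow e_i)$ and $u_i \to (e_i^1 \leftrightarrow e_i)$. By the rule, both have dependency set $D_{e_i} \setminus \{u_i\}$. Finally introduce $e_{i-1} \leftrightarrow e_i^0 \wedge e_i^1$ by an unconditional IndExt-$\wedge$. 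A straightforward induction shows $e_{i-1}$ semantically equals $\forall u_i u_{i+1} \ldots u_k\, C_2$; after $k$ steps, $e := e_0$ has dependency set $(\bigcup_{y \in C_2} D_y) \setminus W = (\bigcup_{y\in C_1} D_y) \cap (\bigcup_{y\in C_2} D_y)$, exactly the Fork-Extension dependency set.

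To derive $\bar e \vee C_2$: show by Frege that $e_{i-1} \to e_i$ for each $i$, using the axioms $e_{i-1} \leftrightarrow e_i^0 \wedge e_i^1$, $\bar u_i \to (e_i^0 \leftrightarrow e_i)$, $u_i \to (e_i^1 \leftrightarrow e_i)$ together with a case-split on $u_i$. Composing these implications with $e_k \leftrightarrow \bigvee_{y \in C_2} y$ gives $\bar e \vee C_2$. To derive $e \vee C_1$: start from the input $C_1 \cup C_2$ and use $e_k \leftrightarrow \bigvee_{y \in C_2} y$ to derive $C_1 \vee e_k$ by Frege. Then iteratively go from $C_1 \vee e_i$ to $C_1 \vee e_{i-1}$: resolve with the clause $u_i \vee \bar e_i \vee e_i^0$ (a consequence of $\bar u_i \to (e_i^0 \leftrightarrow e_i)$) to obtain $u_i \vee C_1 \vee e_i^0$, then apply $\forall$-reduction on $u_i$, yielding $C_1 \vee e_i^0$; symmetrically derive $C_1 \vee e_i^1$; combine the two with the IndExt definition of $e_{i-1}$ to obtain $C_1 \vee e_{i-1}$.

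The main obstacle is validating the $\forall$-reduction of $u_i$ in the middle step; this succeeds precisely because $u_i \in W$ is disjoint from $\bigcup_{y \in C_1} D_y$ by construction of $W$, and the conditional IndExt rule explicitly strips $u_i$ from $D_{e_i^0}$ and $D_{e_i^1}$. The simulation uses $O(|W|)$ extra extension variables and $O(|W|)$ Frege inferences of constant size each, giving a polynomial-size p-simulation.
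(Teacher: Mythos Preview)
Your proposal is correct and takes essentially the same approach as the paper: both introduce an extension variable for $C_2$, then iteratively strip one universal from the marginal set $W$ at a time via two conditional IndExt copies plus a conjunction, and both derive $e\vee C_1$ and $\bar e\vee C_2$ by the same induction (the paper just indexes in the opposite direction, starting at $e_0\leftrightarrow C_2$ and ending at $e=e_k$). One minor slip: your intermediate clauses $C_1\vee e_i^0$ are not constant size but width $|C_1|+O(1)$, so the overall bound is $O(|W|\cdot|C_1\cup C_2|)$ as the paper notes, still polynomial.
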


\begin{proof}
	We order all universal variables in the \emph{marginal} dependency set of $C_2$, $\{u_1, \dots ,u_k\}=\{u\in U \mid u\in (\bigcup_{y\in C_2} D_y), u \notin (\bigcup_{y\in C_1} D_y)  \}$. We define:
	$$e_0\leftrightarrow C_2, \enspace u_{i+1}\rightarrow (e_i^{u_{i+1}}\leftrightarrow e_i), \enspace \bar u_{i+1}\rightarrow (x_i^{\bar u_{i+1}}\leftrightarrow e_i), \enspace e_{i+1}\leftrightarrow (e_i^{u_{i+1}} \wedge e_i^{\bar u_{i+1}})$$
	And we can use the extension clauses that give these definitions.
	We make the induction hypothesis that there is a short proof of $e_i \vee C_1$ and of $\bar e_i \vee C_2$ in \dFregeRed. Starting from $i=0$ and incrementing until $i=k$.
	
	\noindent\textbf{Base Case:} We resolve off all the $C_2$ literals in $C_1 \vee C_2$ to get $C_1 \vee e_0$. $\bar e_0 \vee C_2$ is part of the definition of $e_0=C_2$.
	
	\noindent\textbf{Inductive Step:} We start with  $C_1 \vee e_i$ and $\bar e_i \vee C_2$. 
	Resolving $\bar u_{i+1} \vee e_i \vee \bar e_i^{u_{i+1}}$ with $u_{i+1} \vee e_i \vee \bar e_i^{\bar u_{i+1}}$  gets us $e_i \vee \bar e_i^{u_{i+1}}\vee \bar e_i^{\bar u_{i+1}}$ and then with $\bar e_{i+1} \vee e_{i}^{u_{i+1}}$ and $\bar e_{i+1} \vee e_{i}^{\bar u_{i+1}}$, we get that $ e_i \vee \bar x_{i+1}$. 
	We can then, with a resolution step, get $\bar e_{i+1} \vee C_2$
	
	In $C_1 \vee e_i$ we replace literal $e_i$ with $\bar u_{i+1} \vee e_i^{u_{i+1}}$. Since $\bar u_{i+1}$ is not blocked by any literal in $C_1$ and not blocked by $e_i^{u_{i+1}}$ it can be reduced to derive $C_1 \vee e_i^{u_{i+1}}$, $C_1 \vee x_i^{\bar u_{i+1}}$ is derived in the same way. We resolve both with $ e_{i+1}\vee \bar e_i^{u_{i+1}} \vee \bar e_i^{\bar u_{i+1}}$ to get $C_1 \vee e_{i+1}$
	
	This DAG-like process is linear in the number of induction steps, potentially multiplied by the width of $C_1 \cup C_2$.  
	To finalise, we take $e=e_k$, and necessarily it is not dependent on variables not in the dependency set of $C_2$ because of the initial definition of $e_0$ and any variable not in the dependency set of $C_1$ is removed by the time we get to $e_k$. \qed
	\end{proof}

\begin{corollary}
	\dFregeRed p-simulates Fork Resolution.
\end{corollary}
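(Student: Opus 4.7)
The plan is to simulate a Fork Resolution refutation line by line, invoking the previous lemma for the only non-trivial rule. Since Fork Resolution has just four kinds of steps (axiom download, F-Ext, Res, and $\forall$-red), and the matrix of the DQBF is a CNF, we can match each of them to a $\dFregeRed$ derivation of polynomial size in a uniform way.

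First I would handle the trivial cases. An axiom clause of the Fork Resolution refutation is simply a clause of the matrix $\phi$, which $\dFregeRed$ obtains directly from its own (Axiom) rule. A resolution step $(x \vee C_1),(\bar x \vee C_2) \vdash C_1 \cup C_2$ is propositional reasoning and thus has a constant-size \Frege derivation, which is part of $\dFregeRed$; no dependency issue arises because the resolvent inherits the dependencies of its parents. The universal reduction rule of Fork Resolution, which removes a literal $u$ from a clause $C \vee u$ whenever $\var(u)$ does not lie in the dependency set of any existential literal in $C$, is simulated by the $0$-\textsf{red} or $1$-\textsf{red} rule of $\dFregeRed$ (depending on the polarity of the reduced literal), since the side condition is literally the same: no $\exists$ variable remaining in the line depends on $u$.

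The only nontrivial step is the Fork Extension rule, and this has already been settled by the preceding lemma: given a line $C_1 \cup C_2$ in the $\dFregeRed$ proof, we can in a number of steps polynomial in $|C_1 \cup C_2|$ and in the number of universal variables in the marginal dependency set of $C_2$ introduce a fresh existential variable $e$ with the correct dependency set $(\bigcup_{\var(y)\in C_1} D_y)\cap (\bigcup_{\var(y)\in C_2} D_y)$ and derive both $e \vee C_1$ and $\bar e \vee C_2$. We just chain this construction into the global simulation, using a fresh name for each F-Ext step so that the extension variables do not collide.

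Putting these pieces together, each line of the Fork Resolution proof translates into at most a polynomial number of $\dFregeRed$ lines, and the translation is computable in polynomial time, so $\dFregeRed$ p-simulates Fork Resolution. I do not expect any real obstacle: the F-Ext case has already been discharged, and the remaining rules are either propositionally trivial or have a direct counterpart in $\dFregeRed$ with matching side conditions. The only bookkeeping concern is ensuring that the extension variables introduced for successive F-Ext applications are kept distinct and placed correctly in the DQBF prefix via prefix weakening, which is a purely syntactic point.
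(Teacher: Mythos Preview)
Your proposal is correct and matches the paper's approach: the corollary is stated without proof there, relying on the preceding lemma for F-Ext and the evident simulations of resolution and universal reduction, exactly as you spell out. One small quibble: the fresh extension variable for each F-Ext step is introduced via the IndExt rule itself (as in the lemma's construction of $e_0,\dots,e_k$), not via prefix weakening, but this does not affect the argument.
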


Unlike in \eFregeRed where new variables are essentially propositional circuits, here we show how a new variable can be efficiently constructed to resemble a QBF properly. \dFregeRed shares the ability to cut over a QBF with \Gfull (as we will see in more detail in Section~\ref{sec:G}), which goes some way to explain why \dFregeRed is so powerful. 

\subsection{P-simulation of QRAT}\label{sec:qrat}
In this section with show our DQBF proof system \dFregeRed can p-simulate the QBF proof system QRAT\cite{HSB14}, which governs many preprocessing steps in QBF.
\subsubsection{Definition of QRAT}

In propositional logic, interference based proof systems modify a CNF into another equisatisfiable CNF without preserving models. Resolution, is not an interference based proof system because it preserves logical equivalence.
An example of an interference based proof system is DRAT \cite{WetzlerHH14}, where clause addition can destroy an existing model, but only when there is a guarantee that another model exists and is preserved.
QRAT has similar rules to DRAT, but since QRAT is a QBF proof system, quantifier order must be respected. Since QRAT works in QBF, the notion of preserving individual models, is understood as preserving Skolem functions. QRAT does not preserve Skolem functions.

\begin{definition}[\cite{Blinkhorn2020SimulatingDP}]\label{def:O}  Fix a QBF prefix $\Pi$. Now consider a clause $D$ and a literal $l$ (not necessarily in $D$) we define,
	$O_D=\{k\in D \mid \lv(k)\leq \lv(l), \var(k)\neq \var(l)\}$, where $\lv(k)\leq \lv(l)$ indicates that $k$ has a lower quantifier level than $l$, i.e $k$ is left of $l$ or, because we include equality, both are part of the same quantifier block (i.e. same dependency set).  
	$O_D$ is called the  \emph{outer clause}. 
	For a DQBF we define the outer clause $O_D$ differently for existential and universal literals $l$. 
	
	When $l$ is existential:
	$O_D:=\{x\in D \mid x\in \exists, D_{\var(x)}\subseteq D_{\var(l)}, \var(x)\neq \var(l)\}\cup \{x\in D \mid x\in \forall, \var(x)\in
	 D_{\var(l)}\}$
	 
	 When $l$ is universal we define a set of existential variables $\mathcal{S}=\{x\in \exists \mid \var(l)\in D_x\}$ from that we define a set of universal variables $T=\bigcap_{x\in S} D_x \setminus\{\var(l)\}$, and we can then define the outer clause of $D$:
	 $O_D:=\{x\in D \mid x\in \exists, D_{\var(x)}\subseteq T\}\cup \{x\in D \mid x\in \forall, \var(x)\in T\}$.
\end{definition}

The outer clause is a concept used to evaluate whether new clauses can be added, as it is the critical part that decides soundness. This is in the same vein as how extension clauses rely on the outer variables of the definition.

For some rules in QRAT, they can only be applied soundly if some semantic implication is known, but for a proof system we need to know if these implications are true in polynomial time. To do this we use the sound but incomplete unit propagation procedure. 

\begin{definition}
	$\phi \vdash_1 \bot$ means we arrive at the empty clause through the following process:
	For every clause in $\phi$ of size one (in other words a unit clause), $\{l\}$ we add $l$ to a partial assignment $\alpha$  ($\var(l)$ is assigned the polarity of literal $l$). Now we remove literals from clauses in $\phi$ whenever they conflict with $\alpha$. 
	This may produce more unit clauses, or even an empty clause, we repeat this process until no more unit clauses are produced. 
\end{definition}

When it comes to simulate QRAT, fortunately unit propagation is something that can be easy to p-simulate in a sequence of rules.
QRAT's first rule uses unit propagation very directly to infer clauses propositionally, in this rule we do not have to consider quantifier type or order.

\begin{definition}[Asymmetric Tautology Addition (ATA)]\label{def:ATA}
	Let $\phi$ be a CNF with $\Pi$ a prefix.
	Let $C$ be a clause not in $\phi $. 
	Let $\Pi'$ be a prefix including the variables of $C$ and $\phi$, $\Pi\subset \Pi'$.
	
	Suppose $ \phi\wedge  \bar{C}\vdash_{1} \bot $. 
	Then we can infer $\Pi' \phi \wedge C$ from $\Pi \phi$.
	
\end{definition}

The next rule, QRATA deals with adding or removing a clause, but this time the Skolem function for a particular existential literal $l$ changes as a result of this rule. This means that QRATA preserves truth but does not necessarily preserve the strategies. QRATA is a generalisation of the extension rule in \eFregeRed, and as such we have to respect the quantification order.

\begin{definition}[Quantified Resolution Asymmetric Tautology Addition (QRATA)]\label{def:QRATA}
	Let $\Pi \phi$ be a PCNF with closed prefix $\Pi$ and  CNF matrix $\phi$. Let $C$ be a clause not in $\phi $. Let $\Pi_1$ and $\Pi_2$ be disjoint  prefixes and $x$ a variable such that  $\Pi \subseteq \Pi_1\exists x\Pi_2$ and a literal $l$ with $\var(l)=x$. The difference in prefix is simply to allow new variables coming from $C\vee l$.
	For every clause $D \in \phi$ with $\bar l\in D$ if
	$ \phi\wedge  \bar{C}\wedge \bar{l} \wedge \bar{O}_D\vdash_{1} \bot $,
	then we can derive $\Pi_1\exists x \Pi_2 \phi \wedge (C \vee l)$ from $\Pi \phi$.
%
%
	
\end{definition}

The next two rules QRATU and EUR remove a universal literal from a clause, with side conditions we will soon introduce.
	\begin{prooftree}
	\AxiomC{$\Pi_1\forall u\Pi_2 \phi \wedge (C\vee l) $}
	\RightLabel{(QRATU)}
	\UnaryInfC{$\Pi_1\forall u\Pi_2 \phi \wedge C$}
	\DisplayProof\hspace{2cm}
	\AxiomC{$ \Pi_1 \forall u \Pi_2  \phi \wedge (C \vee l)$}
	\RightLabel{(EUR)}
	\UnaryInfC{$\Pi_1 \forall u \Pi_2 (\phi \wedge C)$}
	\end{prooftree}	
For QRATU the condition is similar to that of QRATA but instead of adding a blocked clause over an existential literal, it removes a blocked universal literal. EUR has a different condition based on resolution paths.

\begin{definition}[Quantified Resolution Asymmetric Tautology 
	Universal (QRATU)]\label{def:QRATU}
	Let $\Pi_1\forall u\Pi_2\phi$ be a PCNF with closed prefix $\Pi_1 \forall u \Pi_2$ and  CNF matrix $\phi$. Let $C\vee l$ be a clause with universal literal $l$, with $\var(l)= u$. 
	If for every $D \in \phi$ with $\bar{l} \in D$, 
	$\phi\wedge  \bar{C}\wedge \bar{O}_D\vdash_{1} \bot $,
	then we can derive $\Pi_1\forall u\Pi_2 \phi \wedge C$ from $\Pi_1\forall u\Pi_2 \phi \wedge (C\vee l) $.

\end{definition}

For EUR, we consider potential resolution steps connecting clauses to one another. Clauses being in different connected components indicate independence from one another and we can expand on this idea to calculate when a universal literal is locally pure and can only be considered one polarity.  

\begin{definition}\label{def:rpath}
	
	Consider a CNF $\phi$ and subset $\chi$ of clauses in $\phi$ and a subset $\mathcal{S}$  of variables.  $\pathl(\phi, \chi, \mathcal{S})$ lists the $\mathcal{S}$-\emph{literals on the resolutions paths from $\chi$} and $\pathc(\phi, \chi, \mathcal{S})$ lists \emph{the clauses on the the resolution paths from $\chi$}.
	These are found using an iterative procedure until reaching a fix-point. 
	
	\noindent\textbf{Initialisation.}
	We start with the clauses in $\chi$ and the $\mathcal{S}$ literals in those clauses. $\pathl(\phi, \chi, \mathcal{S}) \gets \{l \mid \text{there is } C\in\chi \text{ s.t. } l \in C, \var(l)\in \mathcal{S} \}$ and $\pathc(\phi, \chi, \mathcal{S})  \gets \chi $.
	
	\noindent\textbf{Adding a clause.}
	If there if some $D\in \phi$ such that $\bar {p}\in D$ and $p\in \pathl(\phi, \chi, \mathcal{S}) $, then we can update $\pathl(\phi, \chi, \mathcal{S}) $ and $\pathc(\phi, \chi, \mathcal{S}) $. $\pathl(\phi, \chi, \mathcal{S}) \gets \pathl(\phi, \chi, \mathcal{S}) \cup \{q\in D \mid q\neq \bar{p}, \var(q)\in \mathcal{S}\}$ and $\pathc(\phi, \chi, \mathcal{S})  \gets \pathc(\phi, \chi, \mathcal{S})  \cup \{ D\}$. 	
	We continue this until we reach fix-point, in other words for all $p \in \pathl(\phi, \chi, \mathcal{S}) $ if $ D\in \phi$ and $\bar p \in D$, then $\{q\in D \mid q\neq \bar{p} ,\var(q)\in \mathcal{S}\} \subset \pathl(\phi, \chi, \mathcal{S}) $ and $D\in \pathc(\phi, \chi, \mathcal{S}) $. Fix-point is reached in polynomial time.
	
\end{definition}

\begin{definition}\label{def:EUR}
	Let $\Pi_1\forall u\Pi_2\phi$ be a PCNF with closed prefix $\Pi_1 \forall u \Pi_2$ and  CNF matrix $\phi$. Let $C\vee l$ be a clause with universal literal $l$, with $\var(l)= u$.

	If the resolution path $\pathc(\phi\wedge C, C, \mathcal{S})$  contains no clause $D$ such that $\bar{l} \in D$, when $\mathcal{S}$ is the set of existential variables right of $l$ in the prefix (i.e. in $\Pi_2$), then we can derive $\Pi_1 \forall u \Pi_2 (\phi \wedge C)$ from $ \Pi_1 \forall u \Pi_2  \phi \wedge (C \vee l)$.

\end{definition}

The refutational system QRAT includes ATA, QRATA, QRATU, EUR rules as well as arbitrary clause deletion and that the prefix allows for new variables to be added.

\subsubsection{Interference-based Reasoning via Independent Extension Variables}

Our p-simulation is somewhat surprising, as QRAT includes a complicated Extended Universal Reduction rule that performs reductions according to global resolution path connectivity between clauses. On the surface level \dFregeRed would seem to be ill-equipped to talk globally about a formula, as it does not have any global conditions. In fact \dFregeRed is monotonic in the sense that no deletion rule is necessary. The way we p-simulate clause deletion is simply to ignore clauses that would have been in the QRAT proof. This has a small technicality, QRAT can eliminate a variable and then introduce it again with a different meaning, but in \dFregeRed we have to take care to create new variables when we refresh the meanings of variables. In fact when we use an interference based rule such as QRATA, QRATU or EUR we create a new copy of the formula.  

We begin with p-simulations of DQBF generalisations of ATA and QRATA, in fact we do not need the independence power of the new Independent Extension rule for this. 
\begin{lemma}[\cite{kiesl2018extended}]\label{lem:ATA}
	Suppose we have a CNF $\phi$ and a clause $C \notin \phi$. If $\phi \wedge \bar C \vdash_1 \bot $ then there is a short resolution + weakening  proof  starting from $\phi$ to $\phi\wedge C$.
\end{lemma}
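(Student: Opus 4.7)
The plan is to convert the unit-propagation derivation witnessing $\phi \wedge \bar C \vdash_1 \bot$ into a resolution derivation of $C$ from $\phi$, step by step, using weakening only to pad clauses with literals from $C$ when needed.

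First I would fix an execution of unit propagation on $\phi \wedge \bar C$, producing an ordered sequence $u_1,u_2,\dots,u_m$ of derived literals (either the assumed units of $\bar C$ themselves, or literals obtained by reducing some $D \in \phi$ to a unit using previously derived units) terminating when the empty clause is produced. For each step $i$, I would track a set $S_i \subseteq C$ which records the literals of $C$ whose negations (which lie in $\bar C$) were needed, directly or transitively, to derive $u_i$. If $u_i = \bar l$ for some $l \in C$ is a unit assumption from $\bar C$, then $S_i = \{l\}$; if $u_i$ is derived from a clause $D \in \phi$ by resolving out literals whose negations were previously derived as $u_{i_1},\dots,u_{i_k}$, then $S_i := \bigcup_{j} S_{i_j}$.

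Next, the main invariant I would prove by induction on $i$ is that $\phi$ admits a short resolution + weakening derivation of the clause $u_i \vee (C \setminus \{\bar u_i\})'$, where $(\,\cdot\,)'$ denotes taking the subset $S_i$ of $C$. Concretely, in the base case ($u_i \in \bar C$), the clause is simply $u_i \vee \bar u_i = $ tautology, which can be obtained by weakening any clause of $\phi$; but in fact this base case is vacuous because no resolution step is needed — the literal is already \emph{in} $C$. In the inductive case, I take the clause $D \in \phi$ used for the propagation, apply weakening to add exactly the literals of $\bigcup_j S_{i_j}$ that are not already present, and then successively resolve with each previously-derived clause $u_{i_j} \vee S_{i_j}$ on the pivot $u_{i_j}$. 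This produces the desired clause in $k+1$ steps (one weakening plus $k$ resolutions).

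Finally, when unit propagation reaches $\bot$ at step $m$, the same construction yields a resolution derivation of $\bigvee_{l \in S_m} l \subseteq C$; one last weakening step produces $C$ itself, which we adjoin to $\phi$ to get $\phi \wedge C$. The total length is linear in the number of unit-propagation steps times the width of the clauses involved, hence polynomial in $|\phi| + |C|$. The main bookkeeping obstacle is ensuring that the accumulated sets $S_i$ are always subsets of $C$ (so that the final clause is $C$ and not some larger disjunction), which follows because every literal introduced into any $S_i$ traces back to a unit of $\bar C$, and every such unit corresponds to a literal of $C$. Thus no resolution pivot introduces extraneous literals, and the simulation has polynomial size. \qed
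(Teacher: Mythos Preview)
Your proposal is correct and follows the same standard construction the paper cites: mirror each unit-propagation step by a resolution step, carrying along the literals of $C$ whose negations came from $\bar C$, so that the final $\bot$ becomes a subclause of $C$ which is then weakened to $C$. The paper's own proof is only a two-sentence sketch and in fact glosses over exactly the point you handle carefully---that the units from $\bar C$ are not available as clauses of $\phi$ and must instead be absorbed as retained $C$-literals---so your write-up is more complete; the one stumble is the throwaway remark that the base-case tautology $u_i \vee \bar u_i$ ``can be obtained by weakening any clause of $\phi$'' (it cannot), but you immediately and correctly note the base case is vacuous, so this does not affect the argument.
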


\begin{proof}
	Any simplification of a clause can be derived via resolution. If we have a unit clause $x$ and clause $C\vee \bar x$, then the clause $C$ can be derived in a single step.
	The number of literals removed before the empty clause is bounded above by the total number of individual literals. Hence the short proof.\qed
\end{proof}
%

\begin{lemma}\label{lem:QRATA}
	Suppose we have a DQBF: $\forall U \exists E \phi$.
	Let $l$ be a literal of some existential variable in $E$ and $C$ be a clause not in $\phi$, where $C$ contains literal $l$.
	
	With respect to existential literal/variable $l$, recall the DQBF definition of an outer clause $O_K$ of some clause $K$, according to Definition~\ref{def:O}.
	If for every clause $D\in \phi$ with $\bar l \in D$ satisfies:
	$\phi \wedge \bar C \wedge \bar O_D \wedge \bar l \vdash_1 \bot$,
	then using a polynomial bounded (in the size of $\forall U \exists E \psi$) number of rules of \dFregeRed we can extend $\exists E$ to $\exists E'$, which includes a new variable $l'$ such that  $D_{l'}\subseteq D_{\var(l)}$ and derive the clauses of $(\phi \wedge C)'$ which is a copy of $\phi\wedge C$ where every $l$ literal in $\mathcal{S}$ is replaced by $l'$ and every $\bar l$ literal in $\mathcal{S}$ is replaced by $\bar l'$.
\end{lemma}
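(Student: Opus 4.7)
The plan is to introduce a fresh existential variable $l'$ via one or two applications of the Independent Extension rule so that $l'$ realises the modified Skolem function QRATA implicitly prescribes for $l$ upon adding $C$, and then to derive each renamed clause of $(\phi \wedge C)'$ by combining the extension clauses with Lemma~\ref{lem:ATA}.

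For the construction, split $C$ as $C = l \vee C_{\mathrm{in}} \vee C_{\mathrm{out}}$, where $C_{\mathrm{in}}$ gathers the literals of $C$ that are visible to $l$ (existentials whose dependency set lies in $D_{\var(l)}$ and universals whose variable lies in $D_{\var(l)}$) and $C_{\mathrm{out}}$ the remaining outer universal literals. First introduce an auxiliary existential $a$ via IndExt-$\wedge$ with $\alpha = \emptyset$ and $Y = \{\bar c : c \in C_{\mathrm{in}}\}$, defining $a \leftrightarrow \bigwedge \bar c$. Then introduce $l'$ via IndExt-$\vee$ with $\alpha = \emptyset$ and $Y = \{l,a\}$, defining $l' \leftrightarrow l \vee a$. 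By Lemma~\ref{lem:IndExt} this is sound, and the dependency sets satisfy $D_a \subseteq D_{\var(l)}$ and $D_{l'} \subseteq D_l \cup D_a \subseteq D_{\var(l)}$. Semantically, $l'$ agrees with $l$ whenever some literal of $C_{\mathrm{in}}$ is true, and is forced to $1$ precisely when $l$ is false and every $c \in C_{\mathrm{in}}$ is false -- this is exactly the Skolem flip that QRATA needs to satisfy $C$ without breaking any $D \in \phi$ containing $\bar l$.

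To derive the clauses of $(\phi \wedge C)'$ I treat three groups. The renamed new clause $C' = C_{\mathrm{in}} \vee C_{\mathrm{out}} \vee l'$ follows by a short Frege chain from the extension clauses $a \vee C_{\mathrm{in}}$ and $\bar a \vee l'$, producing $l' \vee C_{\mathrm{in}}$ which weakens to $C'$. For $K \in \phi$ with $l \in K$, a single resolution of $K$ against the extension clause $\bar l \vee l'$ yields $K' = (K \setminus \{l\}) \vee l'$ exactly. For clauses $K$ with neither $l$ nor $\bar l$, $K' = K$ is an axiom. The main case is $D \in \phi$ with $\bar l \in D$: applying Lemma~\ref{lem:ATA} to the premise $\phi \wedge \bar C \wedge \bar O_D \vdash_1 \bot$ gives a short derivation from $\phi$ of $C \vee O_D$. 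Resolving this against $D$ on $l$ produces $C_{\mathrm{in}} \vee C_{\mathrm{out}} \vee (D \setminus \{\bar l\})$ (using $O_D \subseteq D \setminus \{\bar l\}$). Sequentially resolving against the extension clauses $\bar c_i \vee \bar a$ eliminates each $c_i \in C_{\mathrm{in}}$ and accumulates a single $\bar a$, giving $C_{\mathrm{out}} \vee (D \setminus \{\bar l\}) \vee \bar a$; resolving further against $l \vee a \vee \bar{l'}$ on $a$ and then against $D$ on $l$ yields $C_{\mathrm{out}} \vee (D \setminus \{\bar l\}) \vee \bar{l'}$, after which the $C_{\mathrm{out}}$ universals are reduced away to give $D'$.

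The main obstacle I expect is the universal reduction step that removes the $C_{\mathrm{out}}$ literals from the derived clause: the reduction rule requires that no existential in the remaining clause depends on the reduced universal. This is immediate for $\bar{l'}$ and $\bar a$ by construction (their dependency sets lie inside $D_{\var(l)}$ and thus exclude $C_{\mathrm{out}}$'s variables), but has to be argued for the existentials inherited from $D$; the crucial input is that the set $\mathcal{S}$ of clauses where the rewrite is performed is chosen so that the DQBF outer-clause structure of Definition~\ref{def:O} precludes such a dependency in the surviving literals of $D \setminus \{\bar l\}$. Once that technical point is settled, the polynomial size bound is straightforward: $O(|C|)$ extension clauses are introduced, Lemma~\ref{lem:ATA} yields a derivation of size polynomial in the unit-propagation trace, and per renamed clause one spends $O(|C| + |D|)$ resolution and reduction steps, keeping the whole simulation polynomial in $|\forall U \exists E \phi|$.
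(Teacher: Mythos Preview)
Your choice of the new variable is the wrong one, and the ``main obstacle'' you flag is in fact fatal. You set $l' \leftrightarrow l \vee \bar C_{\mathrm{in}}$, built from the clause $C$ being added. The paper instead sets $l' \leftrightarrow l \vee \bigwedge_{D\in\phi,\ \bar l\in D} O_D$, built from the outer clauses of the $\bar l$-clauses already in $\phi$. The point of that choice is that for any $K\in\phi$ with $\bar l\in K$ one has $O_K\subseteq K\setminus\{\bar l\}$, so $K\setminus\{\bar l\}\vee\bar O_K$ is a tautology, hence $K\setminus\{\bar l\}\vee\bigvee_D\bar O_D$ is derivable; combined with $K$ itself this gives $K\setminus\{\bar l\}\vee(\bar l\wedge\bigvee_D\bar O_D)=K\setminus\{\bar l\}\vee\bar l'$. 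No universal reduction is needed at all in that case.

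With your definition the derivation of $D'$ necessarily leaves the residue $C_{\mathrm{out}}$, and your proposed fix---reducing those universals---does not go through. Nothing in Definition~\ref{def:O} prevents $D\setminus\{\bar l\}$ from containing an existential $x$ with some $u\in D_x$ where $u$ is a variable of $C_{\mathrm{out}}$; a concrete instance is $D_{\var(l)}=\emptyset$, $D_x=\{u\}$, $D=\bar l\vee x$, $C=l\vee u$: then $C_{\mathrm{in}}=\emptyset$, $C_{\mathrm{out}}=\{u\}$, and after your steps you are stuck at $u\vee x\vee\bar l'$ with reduction of $u$ blocked by $x$. Your appeal to ``the set $\mathcal{S}$'' is a misreading: in this lemma $\mathcal{S}$ is not a restricted set of clauses with special dependency structure (that is the next lemma), the replacement is done everywhere in $\phi$. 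A secondary gap is that your partition $C=l\vee C_{\mathrm{in}}\vee C_{\mathrm{out}}$ silently assumes every literal of $C\setminus\{l\}$ outside $C_{\mathrm{in}}$ is universal; QRATA places no such restriction, so $C$ may contain existentials with dependency sets not contained in $D_{\var(l)}$, which fit neither bucket.
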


\begin{proof}
	The aim is to replace $l$ with the substitution $l'= l \vee \bigwedge_{D\in \phi}^{\bar l \in D} O_D$ and this will respect all existing clauses and add $C'$. $l'$ has the same dependency set as $l$ because of the care taken when selecting literals for the outer clause. We can prove the clauses of $\phi'$ by cases:
	
	\begin{enumerate}
		\item Clauses $K\in \phi$ that do not contain $l$ nor $\bar l$ can remain unchanged  in $\phi'$.
		\item Clauses $K\in \phi$ that contain $l$ but not $\bar l$ can be weakened to $ K\vee \bigwedge_{D\in \phi}^{\bar l \in D} O_D$ and the replacement of $l\vee \bigwedge_{D\in \phi}^{\bar l \in D} O_D$ with $l'$ gains us $K'= K\setminus\{l\}\cup\{l'\}$.
		\item  Clauses $K\in \phi$ that contain $\bar l$ but not $l$ have the property that the tautology $ O_K \vee \bar O_K$ implies $K\setminus\{\bar l \} \vee \bigvee_{D\in \phi}^{\bar l \in D} \bar O_D$. Combine with a conjunction of $K$ we get $K\setminus\{\bar l \} \vee (\bar l \wedge \bigvee_{D\in \phi}^{\bar l \in D} \bar O_D)$ and now  can replace $(\bar l \wedge \bigvee_{D\in \phi}^{\bar l \in D} \bar O_D)$ with $\bar l'$ to get $K'= (K\setminus\{\bar l\})\cup\{\bar l'\}$.
		\item Clauses $K\in \phi$ that contain both $l$ and $\bar l$ then  since $l' \vee \bar l'$ is also tautological we can derive $K'= K\setminus\{l, \bar l\}\cup\{l', \bar l'\}$.
	\end{enumerate}
	
	We use the short proof of
	$\phi\rightarrow C \vee l\vee \bigwedge_{D \in \phi}^{\bar u\in D} O_D$ from unit propagation to derive $C \vee l\vee \bigwedge_{D \in \phi}^{\bar u\in D} O_D$ and then replace $l\vee \bigwedge_{D \in \phi}^{\bar u\in D} O_D$ with $l'$ to get $C'\vee l'$.\qed
\end{proof}

Instead of p-simulating EUR and QRATU separately, what we will show here is a simulation of a powerful mixed reduction rule that acts like a combination of QRATU and EUR and even works for DQBF. QRATU and EUR will both be special cases when the prefix gives a QBF. 

\begin{lemma}\label{lem:mixedred}
	Suppose we have a DQBF: $\forall U \exists E \psi$.	
	Let $u$ be a literal of some universal variable in $U$ and $C \vee u$ be a clause in $\psi$, i.e $\psi=\phi\wedge (C\vee u)$.
	Recall Definition~\ref{def:O} for the definition of $\mathcal{S}$
	Let $\chi_0=\pathc(\psi, C\vee u, \mathcal{S})$, the set of clauses reachable from $C\vee u$ via resolution path in $\mathcal{S}$. 
	
	If for every clause $D\in \chi_0$ with $\bar u \in D$ satisfies:
	$\phi \wedge \bar C \wedge \bar O_D \vdash_1 \bot$,
	then using a polynomial bounded (in the size of $\forall U \exists E \psi$) number of rules of \dFrege we can extend $\exists E$ to $\exists E'$, which includes for every variable $x$ in $\mathcal{S}$ a variable $x'$ such that  $D_{x'}\subseteq D_x$ and derive the clauses of $(\phi \wedge C)'$ which is a copy of $\phi\wedge C$ where every $x$ in $\mathcal{S}$ is replaced by $x'$.
	
\end{lemma}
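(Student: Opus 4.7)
The plan is to mirror the substitution technique of Lemma~\ref{lem:QRATA}, but apply it simultaneously to every $x \in \mathcal{S}$ and combine it with universal reduction to strip away $u$. First, for each $x \in \mathcal{S}$, I would use the Independent Extension rule with universal condition $\alpha = \bar u$ and body $b = x$, adding the two def clauses that together realise $\bar u \to (x' \leftrightarrow x)$. Because $D_\alpha = \{u\}$, the rule yields $D_{x'} = D_x \setminus \{u\}$, so every primed variable becomes independent of $u$, which is the structural feature that later licenses the reduction of $u$-literals.

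Next, I would derive the primed version of every clause. For $C \vee u$ itself, and for any clause $K \in \phi$ that either lies outside $\chi_0$ or lies inside $\chi_0$ but does not contain $\bar u$, the derivation is direct: substitute each $\mathcal{S}$-literal of $K$ via the def clauses, picking up an extra $u$-literal in the process, to derive $K' \vee u$; then apply the reduction rule to eliminate $u$, which is valid because no existential in $K'$ depends on $u$. Applied to $C \vee u$ this produces the primed $C'$. For a clause $K = K'' \vee \bar u$ inside $\chi_0$ the naive substitution collapses to a tautology $K''_p \vee \bar u \vee u$, so a different route is needed. Here I invoke the QRATU-style hypothesis $\phi \wedge \bar C \wedge \bar O_K \vdash_1 \bot$, which by Lemma~\ref{lem:ATA} delivers a short resolution+weakening derivation of $C \vee O_K$ from $\phi$. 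The pivotal observation is that the DQBF outer clause $O_K$ for a universal literal only contains variables whose dependency sets sit in $T = \bigcap_{x \in \mathcal{S}} D_x \setminus \{u\}$, so $O_K$ contains no $\mathcal{S}$-literals and is identical in primed and unprimed form; combining $C \vee O_K$ with the already-derived $C'$, with $C \vee u$, and with the def clauses via Frege then bridges the unprimed--primed boundary on the $u = 1$ branch and yields $K''_p \vee \bar u$.

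The main obstacle is orchestrating this bridge for the $\bar u$-clauses in $\chi_0$. On the $u = 1$ branch the def clauses are vacuous, so the primed clause cannot come out of substitution alone, and one must weave the QRATU consequence (which lives entirely in non-$\mathcal{S}$ literals and is therefore already in the primed world) into the Frege proof. The resolution-path closure of $\chi_0$ under $\mathcal{S}$-literals is what makes the primed substitution globally consistent, while the QRATU condition closes things off at the $\bar u$-boundary; together they mirror the new Skolem functions used in the semantic soundness argument of the mixed reduction. Counting lines: each IndExt definition contributes $O(1)$ clauses, every substitution-plus-reduction per clause costs $O(|K|)$ Frege lines, and every QRATU-style subproof is polynomial in $|\phi|$ by Lemma~\ref{lem:ATA}, giving the claimed polynomial bound in $|\forall U \exists E \psi|$.
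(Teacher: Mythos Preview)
Your single-sided definition $\bar u \to (x' \leftrightarrow x)$ is the fatal gap. This forces the Skolem function $\sigma_{x'}=\sigma_x|_{u=0}$, and more importantly it means that \emph{syntactically} the extension clauses $u\vee\bar x\vee x'$ and $u\vee x\vee\bar x'$ are vacuous whenever $u=1$: under that branch there is no link at all between $x'$ and anything else in the formula. Now take any clause $K=K''\vee\bar u$ containing an $\mathcal S$-literal. Your target is $K'=K''_p\vee\bar u$, so under $u=1$ you must establish $K''_p$. But every ingredient you list for the ``bridge''---$C\vee O_K$, the already-derived $C'$, $C\vee u$, and the def clauses---is either trivially satisfied by $u=1$ or contains no primed $\mathcal S$-literal from $K''$ at all ($O_K$ has none by construction, and $C'$ speaks only about $C$, not $K$). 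There is simply nothing to cut against to obtain $K''_p$; the step ``via Frege then bridges the unprimed--primed boundary'' cannot be carried out. The same problem hits your ``direct'' case: a clause outside $\chi_0$ may still contain $\bar u$ together with $\mathcal S$-literals (cf.\ $C_6$ in the paper's example), and there the substitution yields the tautology $K''_p\vee\bar u\vee u$, on which reduction is useless.

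The paper's proof fixes exactly this by introducing \emph{both} polarities $x^{\bar u}$ and $x^u$, then setting $l^{*}=(\bar u\vee\bar{\mathcal O}\to l^{\bar u})\wedge(u\wedge\mathcal O\to l^u)$ where $\mathcal O=\bigvee_{D\in\chi_0,\bar u\in D}\bar O_D$. The point is that $l^{*}$ switches to $l^u$ precisely when $u=1$ and some outer clause is violated; this is what makes the $\bar u$-clauses in $\chi_0$ derivable (case~3 uses the tautology $O_K\vee\bar O_K$ rather than any unit-propagation bridge). For clauses outside $\chi_0$ a further refinement $l'=l^{*}\vee l$ (for $l\in L$, $\bar l\notin L$) is needed so that one copy with $l$ and one with $l^{*}$ can be combined by distributivity. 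Your proposal is missing both the two-sided extension and the $\mathcal O$-switch; without them the derivation of any $\bar u$-clause's primed version is blocked.
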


As specified in the lemma the proof is done by replacing literals with different extension variables. The origins of the replacement scheme used here trace back to the simulation of QRAT by \Gfull~\cite{CH22}, although this simulation here simulates a more powerful rule and works in the DQBF setting. This is despite being seemingly handicapped by only being able to create definitions that are existential variables, whereas \Gfull can use complicated functions to witness universal variables. To overcome this we hide functions that previously witness universal variables into new definitions of the existential variables that depend on them. 


\begin{proof}
	We start by introducing the extensions $\bar u \rightarrow (x^{\bar u} \leftrightarrow x)$ and $u \rightarrow (x^u \leftrightarrow x)$ for every $x\in \mathcal{S}$.
	Let $\mathcal{O} = \bigvee_{D\in \chi_0}^{\bar u \in D} \bar O_D$.
	For each literal in $L=\pathl(\psi, C\vee u, \mathcal{S})$ and their complement we now define
	$l^* = \left( \bar u \lor \bar{\mathcal{O}} \rightarrow l^{\bar u} \right) \land \left( u \land \mathcal{O} \rightarrow l^u \right)$.
	In other words, $l^{*}$ is $l^{\bar u}$ except in the circumstance that $u$ is true and some outer clause is falsified, and in that case and only that case we have $l^{*}$ takes the value from $l^{ u}$.
	The definition is consistent with negation (i.e. $\neg l^{*}= (\neg l)^*$). The dependency set of $l^*$ is no greater than that of $l$.
	
	One may notice that if there were Skolem functions that satisfied $\forall U \exists E \psi$, these Skolem functions extended to the new $*$-variables would also satisfy $\chi_0^{*}$, which replaces all $x$ variables with $x^{*}$ in $\chi_0$. This is because if $u$ is false every $l^{*}$ plays as $l^{\bar u}$ which is how $l$ plays anyway. 
	If $u$ is true and some $O_D$ for ${D\in \chi_0}, {\bar u \in D}$ is falsified, all $l^{*}$ variables play consistently as $l^u$ which is consistent with $l$ and $u$ and thus satisfies all clauses.
	If $u$ is true and every $O_D$ for ${D\in \chi_0}, {\bar u \in D}$ is satisfied, $l^{*}$ plays as $l^{\bar u}$ 
	which will still satisfy every clause in $\chi_0$ except these $D$, but each of them is already satisfied from their outer clause. 
	
	We have to actually construct $\chi_0^{*}$ using rules from \dFrege. There are four types of clauses $K$ in $\chi_0^{*}$. For each we have to construct $K^*$ which replaces every $S$-literal $l$ with $l^*$ in $K$: \textbf{1. }$K$ contains no $u$ literal. \textbf{2. }$K$ contains a positive $u$ literal only. \textbf{3. }$K$ that contains a negative $u$ literal only. \textbf{4. }$K$ contains a tautology over $u$.
	
	\begin{enumerate}
		\item For each $\mathcal{S}$-literal  $l$ of $K$ we can use definition $\bar u\rightarrow(l^{\bar u}\leftrightarrow l)$.
		 For $K$ we replace each of these literals $l$ for $l^{\bar u} \vee u$ via resolving with the definitions. At the end of this process we can denote this clause as $K^{\bar u} \vee u$. There is no literal that blocks $u$ from reduction here so we can derive $K^{\bar u}$. 
		We do this similarly with definitions $ u\rightarrow(l^{u}\leftrightarrow l)$ to create $K^{ u}$.
		$( u \wedge \mathcal{O}) \vee K^*\vee \bar K^{\bar u} $ is  a tautology once the definition of $l^*$ is considered which can be proven in a linear size \Frege proof.
		Likewise 	$\bar u \vee \bar{\mathcal{O}} \vee K^*\vee \bar K^{ u} $ is a tautology which can be proven in a linear size \Frege proof.
		And by resolving disjuncts together we are left with $K^*$.
		\begin{prooftree}
			\AxiomC{$K^{\bar u} \vee u$}
			\UnaryInfC{$K^{\bar u}$}
			\AxiomC{$(u \wedge {\mathcal{O}})\vee K^*\vee \bar K^{\bar u}$}
			\BinaryInfC{$(u \wedge {\mathcal{O}})\vee K^*$}
			\AxiomC{$K^{u} \vee \bar u$}
			\UnaryInfC{$K^{u}$}
			\AxiomC{$\bar u \vee \bar{\mathcal{O}}\vee K^*\vee \bar K^{ u}$}
			\BinaryInfC{$\bar u \vee \bar{\mathcal{O}}\vee K^*$}
			\BinaryInfC{$K^*$}
		\end{prooftree}
		
		\item 
		For $K$, we replace each of these literals $l$ for $l^{\bar u} \vee u$. At the end of this process we can denote this clause as $K^{\bar u}$, we do not have to include the extra $u$ literal as that is already included.
		We do this similarly with definitions $ u\rightarrow(l^{u}\leftrightarrow l)$ to create $K^{ u}\vee \bar u$. The $\bar u$ need not be reduced as it will be absorbed into the $\bar u$ disjunct later.
		$( u \wedge {\mathcal{O}}) \vee K^*\vee \bar K^{\bar u} $ is  a tautology once the definition of $l^*$ is considered which can be proven in a linear size \Frege proof.
		Likewise 	$( u \vee \bar{\mathcal{O}}) \vee K^*\vee \bar K^{ u} $  is  a tautology which can be proven in a linear size \Frege proof.
		And by resolving disjuncts together we are left with $K^*$.
\item 
We can derive $(u \wedge\bigvee_{D\in \chi_0}^{\bar u\in D} \bar O_D )\vee K^*$ simply because $\bar O_K \vee O_K$ and $u \vee \bar u$ are tautological clauses and that $K^*$ which replaces every $S$-literal $l$ with $l^*$, contains $\bar u $ and $O_K$ as subclauses. The clause is derived using distributivity.

We replace each of these literals $l$ for $l^{u} \vee \bar u$. At the end of this process we can denote this clause as $K^{u}$ we do not have to include the extra $\bar u$ literal as that is already included.
$ u \vee \bar{\mathcal{O}}  \vee K^*\vee \bar K^{ u} $  is  a tautology which can be proven in a linear size \Frege proof. $K^*$ can be derived by resolving disjuncts.
\begin{prooftree}
	\AxiomC{$\bar O_K \vee O_K$}
	\UnaryInfC{${\mathcal{O}} \vee K^*$}
	\AxiomC{$u \vee \bar u$}
	\UnaryInfC{$u \vee K^*$}
	\AxiomC{$K^{u}$}
	\AxiomC{$\bar u \vee \bar{\mathcal{O}}\vee K^*\vee \bar K^{\bar u}$}
	\BinaryInfC{$\bar u \vee \bar{\mathcal{O}}\vee K^*$}
	\BinaryInfC{$\bar{\mathcal{O}}\vee K^*$}
	\BinaryInfC{$K^*$}
	\end{prooftree}

\item $K^*$ is a tautological clause.
\end{enumerate}
We are not yet done, because we have only derived clauses for $\chi_0$ not the whole of $\phi$. 
To do this we need a slightly better replacement scheme than $l$ to $l^*$.
To start with, for $\mathcal{S}$ variables $x$ where neither $x$ nor $\bar x$ appear in $L$, we can use trivial replacement  $x'=x$. Now consider $L= \pathl(\psi, C\vee u, \mathcal{S})$. For any literal $l$ such that $l, \bar l \in L$ then we use replacement $l$ to $l^*$. For any literal $l\in L$ such that $\bar l \notin L$, we use replacement $l$ to $l'$ where $l'= l^*\vee l$. Note that $l'$ and $l$ have the same dependency set.

Consider $\chi_0^*$ and weaken every $l^*$ literal such that $l\in L$ and $\bar l \notin L$ to $l \vee l^*$. 
Now consider all clauses in $\chi_1=\pathc(\psi,\phi\setminus\chi_0,\mathcal{S})$ and for each clause weaken every $l$ literal such that $l\in L$ and $\bar l \notin L$ to $l \vee l^*$. 
It remains to replace $\bar l \notin L$ when $l\in L$. We will have to use $\bar l'=\bar l^* \wedge \bar l$ to match  with $l'$. This can be done by distributivity in \Frege, if every clause $K\vee \bar l$  we have a copy of $K'\vee \bar l$ and $K' \vee \bar l^*$ where $K'$ is a copy of $K$ with every literal $l\in \mathcal{S}$ replaced by $l'$.

This turns out to be basically the case because $K$ can only contain  literals $k$ such that $k\in L$ and $\bar k\notin L$. $k\in L$ because we can extend the path from $C\vee u$ to $l$ through $\bar l$. If $\bar k \in L$ then $\bar l$ would be in $L$ as well, against our assumption. If $K\vee \bar l\in \chi_0\cap \chi_1$ we already have $K'\vee \bar l^*$ and $K' \vee \bar l$. It is possible that  $K\vee \bar l\in \chi_0$ but not $\chi_1$ in which case we create $K' \vee \bar l$ by weakening each $\mathcal{S}$ literal in $K\vee \bar l$ that is not $\bar l$.

Finally we need clause $C'$ which is the subclause $C$ where every $\mathcal{S}$-literal $l$ is replaced by $l'$. Note that all $\mathcal{S}$-literals in $C$ are by definition in $L$, so first we can derive $C^*$ and then weaken to get $C'$.
To do this we need the unit propagation proofs of $\phi \wedge \bar C \wedge \bar O_D \vdash_1 \bot $ for each $D\in \chi_0$ and $\bar l\in D$. We prove $\phi \wedge \bar C \wedge \bar O_D \rightarrow \bot $ in $O(|\phi|)$ many \Frege steps. We can then derive $ C \vee \bigwedge_{D\in \chi_0}^{\bar u\in D} O_D$ from $\phi$ next. Now we can replace all $\mathcal{S}$-literals $l$ in $C$ with $l^u\vee \bar u$ where $u \rightarrow(l^u\leftrightarrow l)$. This gets us $\bar u \vee \bigwedge_{D\in \chi_0}^{\bar u\in D} O_D \vee C^u$ i.e $\bar u \vee \bar {\mathcal{O}} \vee C^u$.
We also use $\bar u \rightarrow(l^{\bar u}\leftrightarrow l)$ and $C \vee u$ to replace every $\mathcal{S}$-literal $l$ with $l^{\bar u}$ and get $C^{\bar u} \vee u$. The $u$ literal can be reduced, (since this rule generalises the original universal reduction this is crucial).
$( u \wedge {\mathcal{O}}) \vee C^*\vee \bar C^{\bar u} $ is  a tautology once the definition of $l^*$ is considered which can be proven in a linear size Frege proof.
Likewise 	$ \bar u \vee \bar{\mathcal{O}} \vee C^*\vee \bar C^{ u} $  is  a tautology which can be proven in a linear size Frege proof.
And by resolving disjuncts together we are left with $C^*$. We can get always get $C'$ through weakening $C^*$.
\begin{minipage}{\textwidth}
\begin{prooftree}
	\AxiomC{$C^{\bar u}$}
	\AxiomC{$(u \wedge {\mathcal{O}})\vee C^*\vee \bar C^{\bar u}$}
	\BinaryInfC{$(u \wedge {\mathcal{O}})\vee C^*$}
	\AxiomC{$\bar u\vee \bar{\mathcal{O}}  \vee C^{ u}$}
	\AxiomC{$\bar u \vee \bar{\mathcal{O}}\vee C^*\vee \bar C^{u}$}
	\BinaryInfC{$\bar u \vee \bar{\mathcal{O}}\vee C^*$}
	\BinaryInfC{$C^*$}
	\UnaryInfC{$C'$}
\end{prooftree}
\end{minipage}\qed
\end{proof}
\newif\ifexample
\exampletrue
\ifexample
\begin{example}
	Consider the QBF with the prefix $\exists x \forall u \exists y, z, a$ and the matrix
	\[
		\psi = 
		(\overbrace{u \lor z                        }^{C \lor u}) \land
		(\overbrace{x \lor \bar u \lor y \lor \bar z}^{C_1}     ) \land
	    (\overbrace{\bar y \lor z \lor a     }^{C_2}     ) \land
		(\overbrace{x \lor \bar a                   }^{C_3}     ) \land
	    (\overbrace{x \lor y \lor z                 }^{C_4}     ) \land
	    (\overbrace{\bar y \lor \bar z              }^{C_5}     ) \land
        (\overbrace{\bar x \lor \bar u \lor a       }^{C_6}     )
	\]
	We have $\mathcal{S} = \{y, z, a\}$, $\chi_0=\pathc(\psi, C\vee u, \mathcal{S})=\{C \lor u, C_1, \dots, C_5\}$, and $L=\pathl(\psi, C\vee u, \mathcal{S})=\{y,\bar y, z, \bar z, a\}$.
	We are looking at outer clauses of those clauses in $\chi_0$ that contain $\bar u$, which is only $C_1$, with $O_{C_1} = \{x\}$.
	Notice that there is also the clause $C_6$ with $\bar u$ in it, but $C_6 \notin \chi_0$ because the only possible predecessor to $C_6$ is $C_3$ and it cannot be both entered and exited via its only $\mathcal{S}$-literal $\bar a$.
	We need to verify whether $C_1 \land \dots \land C_6 \land \bar C \land \bar O_{C_1}  \vdash_1 \bot$.
	Setting $\bar C \land \bar O_{C_1} = \bar x \land \bar z$ propagates $\bar a$ from $C_3$, then $\bar y$ from $C_2$, and finally $\bot$ from $C_4$.
	The preconditions of Lemma~\ref{lem:mixedred} are therefore satisfied, and $C$ can be soundly obtained by reduction from $C \lor u$.

	Let us now perform the substitutions from the proof of Lemma~\ref{lem:mixedred}.
	First, we introduce the extensions $y^{\bar u}, y^u, z^{\bar u}, z^u, a^{\bar u}, a^u$.
		 Next, we define the *-variables using $\mathcal{O} = \bigvee_{D \in \chi_0}^{\bar u \in D} \bar O_D = \bar x$:
	\begin{align*}
		\label{eq:star-defs}
		y^* = \left(\bar u \lor x \rightarrow y^{\bar u} \right) \land \left( u \land \bar x \rightarrow y^u \right) & &\quad
		z^* = \left(\bar u \lor x \rightarrow z^{\bar u} \right) \land \left( u \land \bar x \rightarrow z^u \right) \\
		a^* = \left(\bar u \lor x \rightarrow a^{\bar u} \right) \land \left( u \land \bar x \rightarrow a^u \right) \nonumber & &
	\end{align*}
	We have four types of clauses in $\chi_0$ depending on which literals of $u$ they contain.
	Let us start with the clauses that do not contain any $u$ literal, namely $C_2, C_3, C_4, C_5$.
	By resolution with the extension definitions and subsequent reduction of $u$, we obtain
	\begin{align*}
		C_2^u = & \; \bar y^u  \lor z^u \lor a^u & 
		C_3^u = & \; x \lor \bar a^u  &           
		C_4^u = & \; x \lor y^u \lor z^u   &
		C_5^u = & \; \bar y^u \lor \bar z^u  \\        
		C_2^{\bar u}  = & \;  \bar y^{\bar u} \lor z^{\bar u} \lor a^{\bar u} & C_3^{\bar u} = & \; x \lor \bar a^{\bar u} & C_4^{\bar u} = & \; x \lor y^{\bar u} \lor z^{\bar u} & C_5^{\bar u} = & \; \bar y^{\bar u} \lor \bar z^{\bar u}        
	\end{align*}
	For the sake of brevity, we shall demonstrate the next step only on $C_3$.
	Now $\bar u \lor \bar{\mathcal{O}} \lor C_3^* \lor \bar C_3^{\bar u} = \bar u \lor x \lor a^* \lor \bar a^u$
	can be simplified to $\bar u \lor x \lor a^u \lor \bar a^u$, which is obviously a tautology.
	It can be derived as follows: from the definition of $a^*$, we obtain $u \land \bar x \rightarrow a^* = a^u$, and thus $u \land \bar x \land \bar a^u \rightarrow \bar a^*$, or in other words, the clause above.
	Other *-replacements for the clauses $C_1, C_2, C_4, C_5$ are performed in a similar fashion.

	For the $y, z$ the variables $y^*, z^*$ are the final replacements, but for the variable $a$ we need to create the variable $a' = a^* \lor a$.
	We have $\chi_1 = \pathc\left( \psi, \{C_6\}, \mathcal{S} \right) = \{C_3, C_6\}$.
	We take $C_3^* = x \lor \bar a^*$ and $C_3 = x \lor \bar a$ and apply distributivity to obtain $C_3' = x \lor (\bar a^* \land \bar a)$.
	For other clauses that contain $a$ positively, we simply weaken to obtain $C_i'$. Note that if $C_3$ contained, say a $z$ literal, we would not be able to apply distributivity, however a literal like $z$ would also cause a path from $C$ to $C_6$. 

	Finally, we need to prove $C \lor \bar{\mathcal{O}} = x \lor z$, which we get by resolving the clauses on the unit propagation path: $C_4, C_3$, and $C_2$. 
	This will give us $x \lor z$.
	By resolving with the extension definitions for $z^u$ and $z^{\bar u}$, we obtain $u \lor x \lor z^{\bar u}$ and $\bar u \lor x \lor z^u$.
	Because $z^u$ and $z^{\bar u}$ do not depend on $u$, we can apply universal reduction to get the clauses $C^u = z^u$ and $C^{\bar u} = z^{\bar u}$.
	Finally, from the definition of $z^*$ we have $z^{\bar u} \rightarrow (z^* \leftrightarrow (u \land \bar x \rightarrow z^u))$, and thus $z^{\bar u} \land z^u \rightarrow z^*$, and we obtain the desired Frege proof of $C^* = \{ z^* \}$.
\end{example}

\fi

We have demonstrated ability to replace a conjunctive normal form with another and preserve DQBF satisfiability, by replacing variables with new variables from the Independent Extension rule.
Changing the formula, but preserving satisfiability is a key part of preprocessing, and this gives us a key connections to interference proof systems like DRAT, QRAT and DQRAT, whose introduction was to certify preprocessing steps.

\begin{lemma}\label{lem:spcases}
	Both QRATU and EUR are special cases of the rule p-simulated in Lemma~\ref{lem:mixedred}.
\end{lemma}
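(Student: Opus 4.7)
The plan is to verify that both QRATU and EUR have side conditions that imply the preconditions of Lemma~\ref{lem:mixedred}. Because the mixed rule derives $(\phi \wedge C)'$, a consistently renamed copy of $\phi \wedge C$ in which each $\mathcal{S}$-variable is replaced by an Independent Extension variable of the same dependency set, the tail of any refutation that would have followed a QRATU or EUR application can be executed verbatim on the primed clauses. This yields a step-by-step \dFregeRed p-simulation of a single QRATU or EUR inference.

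For QRATU (Definition~\ref{def:QRATU}), the side condition is that $\phi \wedge \bar{C} \wedge \bar{O}_D \vdash_1 \bot$ for every $D \in \phi$ with $\bar{l} \in D$. Since $\chi_0 = \pathc(\psi, C \vee l, \mathcal{S}) \subseteq \phi$, this immediately implies the analogous condition restricted to $D \in \chi_0$ with $\bar{l} \in D$, which is exactly what Lemma~\ref{lem:mixedred} asks for. In the pure QBF setting the set $\mathcal{S}$ from Definition~\ref{def:O} (existential variables whose dependency set contains $\var(l)$) coincides with the existential variables to the right of $l$ in the prefix, so the two notions of outer clause match.

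For EUR (Definition~\ref{def:EUR}), the side condition asserts that $\pathc(\phi \wedge C, C, \mathcal{S})$ contains no clause $D$ with $\bar{l} \in D$. I would argue that this set equals $\chi_0$: the initial $\mathcal{S}$-literals of $C$ and of $C \vee l$ are identical, because $l$ is universal while $\mathcal{S}$ consists of existential variables; and the fixed-point iteration of Definition~\ref{def:rpath} proceeds identically, since the only clause in $\psi$ that is absent from $\phi \wedge C$ is the starting clause $C \vee l$ itself. Therefore the set $\{D \in \chi_0 : \bar{l} \in D\}$ is empty and the precondition of Lemma~\ref{lem:mixedred} is vacuously satisfied.

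The main obstacle I expect is the bookkeeping needed to confirm that the two path-closure formulations truly coincide, and to justify that the renaming to $(\phi \wedge C)'$ is a benign substitute for $\phi \wedge C$ during the remainder of the refutation. The latter follows because the primed variables are introduced by Independent Extension clauses that functionally identify them with the originals under the relevant universal assignments, so after applying the substitution globally any subsequent \dFregeRed inferences mirror the QRAT derivation line-by-line.
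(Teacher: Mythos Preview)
Your proposal is correct and follows essentially the same approach as the paper: for QRATU you observe that the unit-propagation condition on \emph{all} clauses with $\bar l$ trivially restricts to the subset $\chi_0$, and for EUR you observe that the side condition makes the set $\{D \in \chi_0 : \bar l \in D\}$ empty so that the hypothesis of Lemma~\ref{lem:mixedred} holds vacuously. The paper's proof is exactly this two-line argument; your additional remarks about matching the two path closures and about the primed renaming are helpful elaborations but not required for the lemma itself (the renaming point belongs to the surrounding corollary). One very small imprecision: $\chi_0$ is a subset of $\psi=\phi\wedge(C\vee l)$, not of $\phi$, but since $C\vee l$ cannot contain $\bar l$ this does not affect the QRATU argument.
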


\begin{proof}
	\noindent\textbf{QRATU:} Suppose $C\vee u$ reduces to $C$ via QRATU. 
	We know for every clause $D\in \phi$ with $\bar u \in D$, $\phi\wedge \bar O_D \wedge \bar C \vdash_1 \bot$. Therefore no matter how many of these $D$ belong to $\chi_0$ in  Lemma~\ref{lem:mixedred}, every one satisfies the condition.
	
	\noindent\textbf{EUR:} Suppose $C\vee u$ reduces to $C$ via EUR. Via the side condition of EUR, $\chi_0$ does not contain any clauses $D\in \phi$ with $\bar u \in D$ therefore the side condition of the rule in  Lemma~\ref{lem:mixedred}.\qed
\end{proof}

With the lemmas we have shown, we now have the p-simulation of QRAT.  

\begin{corollary}
	\dFregeRed can p-simulate QRAT for false QBFs.
\end{corollary}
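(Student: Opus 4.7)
The plan is to combine the four lemmas already established — the ATA simulation via resolution and weakening, Lemma~\ref{lem:QRATA} for QRATA, Lemma~\ref{lem:mixedred} for the mixed reduction rule, and Lemma~\ref{lem:spcases} showing that QRATU and EUR are both special cases of that mixed rule — into a single line-by-line simulation of a full QRAT refutation. Given a QRAT refutation $\pi = \pi_1, \dots, \pi_m$ of a false QBF $\Pi\phi$, I will maintain the invariant that after simulating the first $i$ steps, the working matrix $\phi_i$ together with the fresh existential variables introduced so far has been derived by \dFregeRed from $\Pi\phi$, with each variable in $\phi_i$ replaced by a corresponding ``current'' extension variable that tracks its meaning up to step $i$.

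First I would handle each rule type in turn. For an ATA step, I apply Lemma~\ref{lem:ATA}: since \Frege{} p-simulates resolution and weakening, the short derivation of the added clause follows directly. For a QRATA step, I invoke Lemma~\ref{lem:QRATA} to replace the targeted literal $l$ by a fresh literal $l'$ with the same dependency set, re-deriving every clause that previously used $l$ under this new replacement and additionally deriving the new clause $C'$. For a QRATU or EUR step, I invoke Lemma~\ref{lem:spcases} together with Lemma~\ref{lem:mixedred} to replace every variable in the appropriate connected component $\mathcal{S}$ by a fresh copy and re-derive the relevant clauses without the universal literal. For prefix extension steps, I use the prefix-weakening rule of \dFregeRed{} to introduce new variables. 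Clause deletion requires no action at all: since \dFregeRed{} is monotonic, we simply leave the deleted clause in the working proof and never reference it again, noting that the next simulated inference only needs the clauses that the QRAT proof itself still considers active.

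The main technicality, as already flagged in the paragraph before Lemma~\ref{lem:ATA}, is that QRAT's clause deletion followed by redefinition of the same variable name must be handled carefully: in \dFregeRed, variables cannot change meaning. This is exactly what the replacement schemes in Lemmas~\ref{lem:QRATA} and \ref{lem:mixedred} accomplish — every interference-based step produces a fresh ``primed'' copy of each affected variable, so the simulation always works with the most recent copies and never revisits stale ones. Tracking the renaming consistently across the whole proof is bookkeeping, but it is polynomial because each QRAT step introduces at most polynomially many fresh variables.

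Finally, when QRAT derives the empty clause, its primed copy under our renaming is still the empty clause, which in \dFregeRed{} corresponds to falsum $0$. Summing the costs of the per-step simulations from Lemmas~\ref{lem:ATA}, \ref{lem:QRATA}, and \ref{lem:mixedred} — each of which is polynomial in the size of $\Pi\phi$ and of the current QRAT line — and multiplying by the number $m$ of QRAT lines gives an overall polynomial bound on the \dFregeRed{} proof size. The hard part has already been done in Lemma~\ref{lem:mixedred}; the remaining obstacle is purely organizational, namely to track the cascading fresh-variable substitutions so that the invariant on $\phi_i$ is preserved through every rule application.
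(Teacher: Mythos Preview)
Your proposal is correct and follows essentially the same approach as the paper: handle ATA via Lemma~\ref{lem:ATA}, QRATA via Lemma~\ref{lem:QRATA}, QRATU and EUR via Lemma~\ref{lem:mixedred} together with Lemma~\ref{lem:spcases}, ignore deletions by monotonicity, and track the fresh-variable renamings produced by each interference step. Your write-up is in fact more explicit than the paper's about the invariant and the bookkeeping, but the underlying argument is the same.
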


\begin{proof}
	When adding a clause through ATA. We simply use Lemma~\ref{lem:ATA} to add the clause.
	Instead of deleting a clause, we simply ignore it. \dFregeRed does not have any global conditions for any of its rules, except that a new extension rule require a new name for the variable, but this is immaterial to the proof complexity.
	For simulating QRATA, QRATU and EUR we create a new copy of the entire formula and then forget about clauses and variables outside of this. For QRATA we use Lemma~\ref{lem:QRATA} and for QRATU and EUR we use different special cases of Lemma~\ref{lem:mixedred} which is allowed by Lemma~\ref{lem:spcases}.\qed
\end{proof}

\subsection{P-simulation of the {\Drrs} Rule}\label{sec:qdrc}

The EUR rule uses resolution paths to calculate when it is safe to drop a single isolated literal. But in QBF and DQBF a more general approach can be taken, where resolution paths can be used to calculate that the dependency between a universal literal and a potential blocking existential literals is always spurious, in all clauses and throughout all stages of the proof.

The \Drrs relation \cite{Slivovsky-sat14} calculates $(u,x)$ only if there is a resolution path from $u$ to $\bar u$ through existential $x$. \Drrs is a sound way of modifying a DQBF prefix. We can add this to sound DQBF proof systems, in fact a number of QBF and DQBF proof system are already constructed in this way.

\subsubsection{Definition of \idrc}

\qdrc~\cite{Slivovsky-sat14} is a QBF proof system using the \Drrs rule, we can extend this to \idrc~\cite{BeyersdorffBlinkhornChewSchmidtSuda19} which uses \dirc as its proof system after modifying the proof system via \Drrs. The rules can be seen in Figure~\ref*{fig:idrc}.

\begin{figure}
	\framebox{\parbox{0.95\textwidth}{
			\begin{minipage}{\linewidth}
				\vspace{0.5em}
				{\centering\textbf{Axiom rule:} \axiom$(\phi)$\vspace{1em}\\}
				\begin{minipage}{.4\linewidth}
					\begin{prooftree}
						\AxiomC{}
						\UnaryInfC{$\{a^{\tau|_{D^{\textsf{rrs}}_a}} \mid a \in C, \var(a) \in \exists\}$}
					\end{prooftree}
				\end{minipage}
				\begin{minipage}{.55\linewidth}
					\begin{itemize}
						\item $C$ is a clause in the matrix $\phi$.
						\item $\tau|_{D^{\textsf{rrs}}_{a}} := \{\lnot l \mid l \in C, \var(l) \in \forall, (\var(l), \var(a)) \in D^{\textsf{rrs}} \}$.
					\end{itemize}
				\end{minipage}
			\end{minipage}
			\vspace{0.5em}\\
			\begin{minipage}{\linewidth}
				\vspace{1em}
				{\centering\textbf{Instantiation rule:} \instant$(C,\beta)$\\}
				\begin{minipage}{.4\linewidth}
					\begin{prooftree}
						\AxiomC{$C$}
						\UnaryInfC{$\{ a^{(\alpha \circ \beta)|_{D^{\textsf{rrs}}_a}} \mid a^\alpha \in C \}$}
					\end{prooftree}
				\end{minipage}
				\begin{minipage}{.55\linewidth}
					\begin{itemize}
						\item $\beta$ is a partial assignment to universal variables.
						\item$\alpha \circ \beta$ is $\{l \mid (l\in \alpha) \vee (l\in \beta \wedge l\notin\alpha)\}$
						\item$(\alpha \circ \beta)(a) := \{l \in \alpha \circ \beta \mid (\var(l), \var(a)) \in D^{\textsf{rrs}}\}$.
					\end{itemize}
				\end{minipage}
			\end{minipage}\\
			\begin{minipage}{\linewidth}
				\vspace{1em}
				{\centering\textbf{Resolution rule:} \res$(C_1,C_2,x^\tau)$\\}
				\begin{minipage}{.4\linewidth}
					\begin{prooftree}
						\AxiomC{$C_1$}
						\AxiomC{$C_2$}
						\BinaryInfC{$(C_1 \cup C_2) \setminus \{x^\tau,\lnot x^\tau\}$}.
					\end{prooftree}
				\end{minipage}
				\begin{minipage}{.55\linewidth}
					\begin{itemize}
						\item $x^\tau \in C_1$ and $\lnot x^\tau \in C_2$.
					\end{itemize}
				\end{minipage}
			\end{minipage}
	}}
	\caption{Proof rules of \idrc~\cite{BeyersdorffBlinkhornChewSchmidtSuda19} \label{fig:idrc}.}
\end{figure}

\subsubsection{Definition of DQRAT}

Another proof system that uses \Drrs is DQRAT. Here the \Drrs rule is a prefix modification rule used in conjunction with rules similar to QRAT. 
The rules of DQRAT are given in full in \cite{Blinkhorn2020SimulatingDP} and we provide a simplified overview here. 

\noindent\textbf{ATA:} Add a clause $C$ if $\phi\wedge \bar C \vdash_1 \bot$.\\
\textbf{DQRAT$_{\exists}$:} Add a clause $C$ if $l$ is an existential literal in $C$ and if for all $D\in \phi$ with  $\bar l \in D$ that $\phi\wedge \bar C\wedge O_D \vdash_1 \bot$. Where the outer clause $O_D$ with respect to $l$ is the set of all universal  literals $u$ in $D$, that are in $D_{\var(l)}$, the dependency set of $l$ and all existential literals $x$ in $D$, whose dependency set is no larger than $l$'s, i.e. $D_{\var(x)}\subset D_{\var(l)}$. \\
\textbf{UR:} Modify a clause $C\vee u$ to $C$ if $u$ is a universal literal and not in the dependency set of $C$.\\
\textbf{DQRAT$_{\forall}$:} Modify a clause $C\vee u$ to $C$, if $u$ is a universal literal and if for all $D\in \phi$ with  $\bar u \in D$ that $\phi\wedge \bar C\wedge O_D \vdash_1 \bot$. 
Where the outer clause $O_D$ with respect to $u$ is the set of all existential literals $x$ in $D$, that have variables in the set $S$ and the set of universal literals $v$ in $D$ that are in the dependency set of every variables in $S$ with $\var(v)\neq \var(u)$. $S$ is given as the set of all existential variables that depend on $u$.\\
\textbf{BPM:} A new variable is added to the prefix. In the existential case an arbitrary dependency set can be used.\\
\textbf{DRRS:} The dependency set of every existential variable can be modified to remove any spurious dependencies calculated by the Reflexive Resolution Path dependency scheme (\Drrs). \\
\textbf{DEL:} Delete a clause.
\subsubsection{P-simulations using \Drrs}
	
\begin{lemma}
	Suppose we have a DQBF $\forall U \exists E \psi$.
	We can calculate the relation \Drrs on $\forall U \exists E \psi$ between universal variables and existential variables. 
	Suppose there is a universal variable $u$ and we let  $\mathcal{S}$ define the set of these variables that depend on $u$,  $\mathcal{S}= \{x \in X \mid u \in D_x\}$.
	Using \dFregeRed we can derive the clauses of $\psi'$ that replaces each variable $x\in \mathcal{S}$ such that $(u,x)\notin \Drrs$ with $x'$ such that $D_{x'}\subset D_{x}\setminus{u}$.
\end{lemma}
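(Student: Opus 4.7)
The plan is to adapt the technique of Lemma~\ref{lem:mixedred} to the setting of the reflexive resolution path scheme. Whereas there the goal was to strip one universal literal from a single clause, here we want to strip one universal variable $u$ from the dependency sets of several existential variables at once, safeguarded by the \Drrs-condition. The construction will proceed variable-by-variable in $\mathcal{S}' := \{x \in \mathcal{S} \mid (u,x) \notin \Drrs\}$, so the main work is in showing how to replace one such $x$ with an $x'$ satisfying $D_{x'} \subseteq D_x \setminus \{u\}$, while preserving all clauses.

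First I would introduce the conditional extensions analogous to those used in the proof of Lemma~\ref{lem:mixedred}: for each $x \in \mathcal{S}'$, define $x^u$ with $u \rightarrow (x^u \leftrightarrow x)$ and $x^{\bar u}$ with $\bar u \rightarrow (x^{\bar u} \leftrightarrow x)$, both via Independent Extension so that their dependency set is $D_x \setminus \{u\}$. Next I would use the \Drrs characterisation to compute a partition of the clauses of $\psi$ into a $u$-reachable set $\chi_u$ and a $\bar u$-reachable set $\chi_{\bar u}$, where reachability is taken along resolution paths through $\mathcal{S}$-literals. Soundness of \Drrs ensures that for each $x \in \mathcal{S}'$, the clauses containing $x$ with a resolution-path connection to $u$ cannot simultaneously be connected, through $\bar x$, to $\bar u$; this is precisely the property that lets us consistently choose $x^u$ on one side and $x^{\bar u}$ on the other.

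The witnessing step uses Independent Extension to define, for each $x \in \mathcal{S}'$, a single variable $x'$ that equals $x^u$ on clauses in the $u$-side and $x^{\bar u}$ on the $\bar u$-side, controlled by an extension variable over the outer clauses of the boundary much as $l^*$ was defined from $\mathcal{O}$ in Lemma~\ref{lem:mixedred}. Since the conditioning is only over $u$ together with a disjunction of outer clauses already in the formula, $D_{x'} \subseteq D_x \setminus \{u\}$ as required. Deriving the clauses of $\psi'$ then follows by the same four-case analysis as in Lemma~\ref{lem:mixedred}: clauses containing neither $u$ nor $\bar u$, clauses with $u$ only, clauses with $\bar u$ only, and tautological $u \lor \bar u$ clauses. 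In each case one weakens, resolves against the defining clauses of $x^u$ and $x^{\bar u}$, reduces $u$ (which is no longer blocked once the $\mathcal{S}'$-literals have been replaced), and concludes with a short \Frege argument distributing $x^u$, $x^{\bar u}$ into $x'$.

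The main obstacle, as in Lemma~\ref{lem:mixedred}, is handling clauses that sit on the boundary between the two components, where the naive replacement of $x$ by a single version would lose information. The key idea is the same: we must weaken in both $x^u$ and $x^{\bar u}$ disjunctively and rely on the tautology $(u \land \mathcal{O}) \lor (\bar u \lor \bar{\mathcal{O}})$, where $\mathcal{O}$ is a polynomial-size disjunction of outer clauses of boundary witnesses, to collapse them into $x'$. The polynomial bound on the simulation relies on the fact that the \Drrs path closure can be computed in polynomial time and that we perform the replacement for each $x \in \mathcal{S}'$ independently on a fresh copy of the formula, so the total work is polynomial in $|\psi|\cdot|\mathcal{S}'|$.
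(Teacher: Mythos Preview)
Your construction of $x'$ has a real gap. The $l^*$-style switch from Lemma~\ref{lem:mixedred} uses $u$ itself as part of the selector (the two cases are $u \wedge \mathcal{O}$ versus $\bar u \vee \bar{\mathcal{O}}$), so the resulting extension variable keeps $u$ in its dependency set. That was fine in Lemma~\ref{lem:mixedred}, whose goal was only to drop a single $u$-literal from one clause, not to make the replacement variables independent of $u$. Here the entire point is $D_{x'}\subseteq D_x\setminus\{u\}$, so any switch that inspects $u$ is self-defeating. Your sentence ``the conditioning is only over $u$ together with a disjunction of outer clauses'' conflates the conditioning assignment $\alpha$ of the IndExt rule with the input set $Y$: if $u$ goes into $\alpha$ you define $x'$ under only one value of $u$ and cannot reuse the same fresh name for the other; if $u$ goes into $Y$ it remains in $D_{x'}$. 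There is also no natural $\mathcal{O}$ available, since the \Drrs condition provides no outer-clause or unit-propagation witnesses; and $\chi_u,\chi_{\bar u}$ are not a partition of $\psi$.

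The paper avoids any $u$-dependent selector. Its case split is not on clauses but on the variable $x$, according to which of $x,\bar x$ lie in $L_u=\pathl(\psi,\chi_u,\mathcal{S})$ and $L_{\bar u}=\pathl(\psi,\chi_{\bar u},\mathcal{S})$, where $\chi_u,\chi_{\bar u}$ are simply the clauses containing $u,\bar u$. In the cases where $\bar x$ lies in neither $L_u$ nor $L_{\bar u}$ the replacement is just $x'=x^{\bar u}\vee x^u$, which is $u$-free by construction; positive occurrences weaken directly, and for any clause $D$ containing $\bar x$ one first observes that every other $\mathcal{S}$-literal $l\in D$ has $\bar l\notin L_u\cup L_{\bar u}$, so after replacing those $l$ by $l^{\bar u}\vee l^u$ one can duplicate $D$, substitute $\bar x$ by $u\vee\bar x^{\bar u}$ in one copy and by $\bar u\vee\bar x^{u}$ in the other, reduce the now unblocked $u$-literals, and combine via distributivity to obtain $\bar x'=\bar x^{\bar u}\wedge\bar x^u$. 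In the remaining case ($x,\bar x\in L_u$ but neither in $L_{\bar u}$, or symmetrically) one takes $x'=x^{\bar u}$; this leaves stray $u$-literals, which are removed by the EUR special case of Lemma~\ref{lem:mixedred} after arguing that no resolution path from the affected clauses reaches a $\bar u$-clause.
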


\begin{proof}
	We focus on the variables $x\in \mathcal{S}$ that are declared independent to $u$ via \Drrs and we need to replace it and its negation in all clauses. If there is resolution path from $u$ to $x$ there is no resolution path from $\bar u$ to $\bar x$ and if there is a resolution path from $\bar u$ to $x$ there is no resolution path from $u$ to $\bar x$.
	We can write this formally, let $\chi_u$ be the subset of $\psi$ which contains every clause with a $u$ literal and let $\chi_{\bar u}$ be the subset of $\psi$ which contains every clause with a $\bar u$ literal. Define $L_{u}= \pathl(\psi, \chi_u, \mathcal{S})$
	and $L_{\bar u}= \pathl(\psi, \chi_{\bar u}, \mathcal{S})$, then for $(u,x)\notin \Drrs$, $(x\notin L_{u} \vee \bar x\notin L_{\bar u})\wedge(x\notin L_{\bar u} \vee \bar x\notin L_{ u})$ . This leaves the following possibilities (up to some symmetries):
	
	\begin{minipage}{0.95\textwidth}
		\begin{multicols}{2}
			\begin{enumerate}
				\item $x\notin L_{u}$, $\bar x\notin L_{u}$, $x\notin L_{\bar u}$ and $\bar x\notin L_{ \bar u}$
				\item $x\in L_{u}$, $\bar x\notin L_{u}$, $x\notin L_{\bar u}$ and $\bar x\notin L_{ \bar u}$
				\item $x\in L_{u}$, $\bar x\notin L_{u}$, $x\in L_{\bar u}$ and $\bar x\notin L_{ \bar u}$
				\item $x\in L_{u}$, $\bar x\in L_{u}$, $x\notin L_{\bar u}$ and $\bar x\notin L_{ \bar u}$
				
			\end{enumerate}
		\end{multicols}
	\end{minipage}
	\\
	
	We also need the symmetric cases for $\bar u$ in case 3 using  $u \rightarrow (x^{u}\leftrightarrow x)$ . 
	In cases 1, 2 and 3, we use both $\bar u \rightarrow (x^{\bar u}\leftrightarrow x)$ and $u \rightarrow (x^{u}\leftrightarrow x)$ to get clauses $u \vee \bar x \vee x^{\bar u}$ and $\bar u \vee \bar x \vee x^{ u}$. Now resolve them to get $\bar x\vee x^{\bar u} \vee x^{ u} $.
	Replace every $x$ literal with  $x^{\bar u} \vee x^{ u} $ via resolution. We consider $x'=x^{\bar u} \vee x^{ u}$.
	Now consider the presence of $\bar x$ in some clause $D\in \psi$.
	Every other $\mathcal{S}$-literal $l\in D$ cannot have $\bar l\in L_{u}$ nor $\bar l\in L_{\bar u}$, because it means $\bar x\in L_{ u}$ or
	$\bar x\in L_{\bar u}$ which cannot happen outside of case 4.
	We should have a copy of $D$ where every $\mathcal{S}$-literal $l$ is replaced by $l^u\vee l^{\bar u}$ except $\bar x$. We duplicate this and in one copy replace  $\bar x$ with $ u \vee \bar x^{\bar u}$, the $u$ literal can be reduced. In the other copy replace $\bar x$ with $ \bar u \vee \bar x^{ u}$ and here the $\bar u$ literal can be reduced. Now by taking a conjunction we effectively make the final $S$-literal replacement of this clause, which is to replace $\bar x$ with $\bar  x^{\bar u}\wedge \bar x^{ u}$.
	
	In Case 4 use definition $\bar u \rightarrow (x^{\bar u}\leftrightarrow x)$ to replace every $x$ literal with $u \vee x^{\bar u}$ and every $\bar x$ literal with $u \vee \bar x^{\bar u}$. 
	We have one remaining issue. Clauses $K$ with variables from case 4, may have an additional universal literal. 
	Firstly we claim that both $u$ and $\bar u$ literals are not present, this can only be introduced with another case 4 literal for $\bar u$ and this can only happen if there was a path from $\bar u$ to a literal $k$ in $K$ and a path from $\bar u$ to $\bar k$, but since $u \vee x^{\bar u}$, there would also be a path from $u$ to $k$ meaning $k$ is not case 4. 
	We also claim that you can always remove this universal literal $u$ with Lemma~\ref{lem:mixedred}, specifically the DQBF version of EUR. Suppose after the replacement there is a resolution path from $K$ to some clause with $\bar u$ in it, this $\bar u$ comes from either a clause originally with $\bar u $ in it, or a clause with a case 4 variable in it of the opposite polarity.
	Either way, there would be a resolution path from $\bar u$ to literal $x$ contradicting our assumptions in case 4.
	Therefore $x'=x^{\bar u}$ suffices.\qed
\end{proof}

\begin{corollary}
	\dFregeRed can p-simulate \idrc for false QBFs.
\end{corollary}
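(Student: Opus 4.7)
The plan is to reduce the p-simulation of \idrc to the already-established p-simulation of \dirc by \dFregeRed. Inspecting Figures~\ref{fig:irc} and~\ref{fig:idrc}, the only difference between \dirc and \idrc is that the latter annotates existential literals with the restriction of the falsifying assignment to the \emph{\Drrs-dependency set} $D^{\textsf{rrs}}_a$ rather than the full syntactic set $D_a$; the resolution rule is identical, and the instantiation rule differs only in the same way. Hence, if we can first transform the input DQBF into an equisatisfiable DQBF whose syntactic dependency sets already coincide with the \Drrs-dependencies of the original formula, then each line of the \idrc proof becomes literally a \dirc line on the rewritten formula, and the previously established theorem finishes the job.

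First I would compute \Drrs on the input QBF $\forall U \exists E \phi$ in polynomial time. Next I would invoke the preceding lemma to rewrite the matrix. Recall that for each universal $u$, the lemma produces fresh existential variables $x'$ for every $x$ with $u \in D_x$ and $(u,x) \notin \Drrs$, together with \dFregeRed-derivations of a copy $\psi'$ of the matrix in which these spurious dependencies have been stripped. Iterating this construction over all $u \in U$ yields a DQBF $\forall U \exists E^\ast \phi^\ast$ in which the syntactic dependency set of each (final) existential variable equals its original \Drrs-dependency set.

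Once the rewriting is complete, every annotated literal $a^\tau$ occurring in the given \idrc refutation can be identified with the annotated literal $(a^\ast)^\tau$ on $\phi^\ast$, where now $\tau$ ranges exactly over $D_{a^\ast}$. Every axiom, instantiation, and resolution step of the \idrc proof is then a bona fide \dirc step on $\phi^\ast$, and the existing p-simulation of \dirc by \dFregeRed (proved in Section~\ref{sec:maindef}) converts the whole proof into a \dFregeRed refutation of polynomial size.

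The main technical delicacy will be the interaction between successive applications of the lemma: each application rewrites the matrix, so a priori the \Drrs-relation on the intermediate matrix might differ from the target one computed on the original $\phi$. I expect this to be resolvable either by observing that the lemma's replacement preserves the relevant resolution-path structure (the new variables $x'$ inherit the resolution-path connectivity needed to mirror \Drrs), or by packaging the argument as a single bulk construction that introduces all fresh variables and clauses in one shot using the \Drrs computed on the original formula. Either route keeps the total overhead polynomial in $|\phi| \cdot |U|$ times the per-lemma cost, so the simulation stays polynomial.
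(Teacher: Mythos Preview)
Your approach is correct and is exactly the argument the paper intends: the corollary is stated without proof, and the implied reasoning is precisely to apply the preceding lemma to strip all spurious \Drrs-dependencies and then invoke the already-proved p-simulation of \dirc. Your identification of the iteration delicacy (that intermediate rewrites might alter the resolution-path structure) is a point the paper leaves entirely implicit; your first suggested resolution---that the replacement variables inherit the relevant connectivity---is the right one, since the $x'$ are built from $x^u, x^{\bar u}$ via disjunction/conjunction and hence occur in clauses that mirror the original occurrences of $x$.
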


\begin{corollary}
	\dFregeRed can p-simulate DQRAT for false DBQFs.
\end{corollary}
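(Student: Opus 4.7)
The plan is to match each rule of DQRAT with a corresponding p-simulation already established in the preceding lemmas, mirroring the QRAT simulation in Section~\ref{sec:qrat}. The simple rules are handled directly: ATA is p-simulated by Lemma~\ref{lem:ATA}; UR is the standard universal reduction already built into \dFregeRed (see Figure~\ref{fig:dfred}); BPM is the prefix weakening rule of \dFregeRed; DEL requires no action because \dFregeRed is monotonic, so we simply ignore clauses that DQRAT would have deleted; and DRRS is p-simulated by the lemma immediately preceding this corollary, which replaces every existential variable that has been declared independent of $u$ via \Drrs with a fresh extension variable of smaller dependency set.

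For the two interference-based rules DQRAT$_{\exists}$ and DQRAT$_{\forall}$, I would invoke Lemma~\ref{lem:QRATA} and Lemma~\ref{lem:mixedred} respectively. Lemma~\ref{lem:QRATA} is already stated for DQBFs with existential literal $l$ and the DQBF outer clause $O_D$ of Definition~\ref{def:O}, and so applies directly to DQRAT$_{\exists}$. For DQRAT$_{\forall}$ I would argue an analogue of Lemma~\ref{lem:spcases}: the DQRAT$_{\forall}$ side condition requires $\phi \wedge \bar C \wedge \bar O_D \vdash_1 \bot$ for \emph{every} clause $D \in \phi$ with $\bar u \in D$, which in particular covers every such clause belonging to $\chi_0$, so the premise of Lemma~\ref{lem:mixedred} is automatically satisfied.

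As in the QRAT simulation, each application of an interference-based rule (DQRAT$_{\exists}$, DQRAT$_{\forall}$, or DRRS) introduces a fresh batch of extension variables $x'$ and derives a new copy of the active formula on these primed variables; later rules then operate only on the most recent copy while the previous clauses are simply abandoned. Because \dFregeRed has no global side conditions and no deletion rule, this line-by-line replication is unproblematic and only incurs a polynomial blow-up in the number of copies.

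The main technical obstacle is verifying that the DQBF outer-clause notions used by DQRAT agree with those in Definition~\ref{def:O} so that Lemma~\ref{lem:QRATA} and Lemma~\ref{lem:mixedred} can be applied directly. For an existential literal $l$ the two definitions coincide: existential literals $x$ with $D_{\var(x)} \subseteq D_{\var(l)}$ together with universal literals $v$ with $\var(v) \in D_{\var(l)}$. For a universal literal $u$ the set $S = \{x \in \exists \mid u \in D_x\}$ and the intersection $T = \bigcap_{x \in S} D_x \setminus \{u\}$ coincide on both sides, so again the outer-clause conditions match. Once this correspondence is checked, the corollary follows by composing the rule-by-rule simulations.
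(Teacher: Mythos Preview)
Your proposal is correct and follows the same approach the paper takes. The paper states this corollary without proof, leaving it as an immediate consequence of the preceding lemmas (Lemma~\ref{lem:ATA}, Lemma~\ref{lem:QRATA}, Lemma~\ref{lem:mixedred}, and the \Drrs replacement lemma just above the corollary); your rule-by-rule breakdown, mirroring the QRAT simulation and using monotonicity to handle DEL, is exactly the intended argument.
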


\subsection{P-simulation of the Sequent Calculus}\label{sec:G}
Let $\Gamma$ and $\Delta$ each be sets of logical formulas. A \emph{sequent} $\Gamma \vdash \Delta$ expresses that any Boolean assignment to the free variables that satisfies every formula in $\Gamma$ also satisfies at least one formula in $\Delta$. A \emph{sequent calculus} is a proof system that uses sequents as lines. There are different sequent calculi, and sequent calculi can work for different logics. The QBF sequent calculus we are interested in is \Gfull, because of its position the the proof system hierarchy near the top (Figure~\ref{fig:QBFsimstruct}).

\subsubsection{\Gfull the QBF Sequent calculus} 

In a \Gfull sequent ($\Gamma \vdash \Delta$) $\Gamma$ and $\Delta$ are sets of QBFs. Note here that these QBFs are not necessarily in prenex form and they are also not necessarily closed, so they may contain a mix of bound and free variables.
The rules of \Gfull are given in Figure~\ref{fig_Gfull}. 
\begin{figure}[!h]
	\framebox{\parbox{\breite}
		{\small
			\begin{prooftree}
				\AxiomC{}
				\RightLabel{($\vdash$)}
				\UnaryInfC{$A\vdash A$}
				\DisplayProof\hspace{0.7cm}
				\AxiomC{}
				\RightLabel{($\bot\vdash$)}
				\UnaryInfC{$\bot\vdash$}
				\DisplayProof\hspace{0.7cm}
				\AxiomC{}
				\RightLabel{($\vdash\top$)}
				\UnaryInfC{$\vdash\top$}
			\end{prooftree}
			\begin{prooftree}
				\AxiomC{$\Gamma\vdash\Sigma$}
				\RightLabel{($\bullet\vdash$)}
				\UnaryInfC{$\Delta,\Gamma\vdash\Sigma$}
				\DisplayProof\hspace{1cm}
				\AxiomC{$\Gamma\vdash\Sigma$}
				\RightLabel{($\vdash\bullet$)}
				\UnaryInfC{$\Gamma\vdash\Sigma,\Delta$}
			\end{prooftree}
			\begin{prooftree}
				\AxiomC{$\Gamma\vdash\Sigma,A$}
				\RightLabel{($\neg\vdash$)}
				\UnaryInfC{$\neg A,\Gamma\vdash\Sigma$}
				\DisplayProof\hspace{0.7cm}
				\AxiomC{$A,\Gamma\vdash\Sigma$}
				\RightLabel{($\vdash\neg$)}
				\UnaryInfC{$\Gamma\vdash\Sigma,\neg A$}
			\end{prooftree}
			\begin{prooftree}
				\AxiomC{$A,\Gamma\vdash\Sigma$}
				\RightLabel{($\bullet\wedge\vdash$)}
				\UnaryInfC{$B\wedge A,\Gamma\vdash\Sigma$}
				\DisplayProof\hspace{0.7cm}
				\AxiomC{$A,\Gamma\vdash\Sigma$}
				\RightLabel{($\wedge\bullet\vdash$)}
				\UnaryInfC{$A\wedge B,\Gamma\vdash\Sigma$}
				\DisplayProof\hspace{0.7cm}
				\AxiomC{$\Gamma\vdash\Sigma,A$}
				\AxiomC{$\Lambda\vdash\Delta,B$}
				\RightLabel{($\vdash\wedge$)}
				\BinaryInfC{$\Gamma,\Lambda\vdash\Sigma_,\Delta ,A\wedge B$}
			\end{prooftree}
			\begin{prooftree}  
				
				
				\AxiomC{$\Gamma\vdash\Sigma,A$}
				\RightLabel{($\vdash\bullet\vee$)}
				\UnaryInfC{$\Gamma\vdash\Sigma,B\vee A$}
				\DisplayProof\hspace{0.5cm}
				\AxiomC{$\Gamma\vdash\Sigma,A$}
				\RightLabel{($\vdash\vee\bullet$)}
				\UnaryInfC{$\Gamma\vdash\Sigma,A\vee B$}
				\DisplayProof\hspace{0.7cm}
				\AxiomC{$A,\Gamma\vdash\Sigma$}
				\AxiomC{$B,\Lambda\vdash\Delta$}
				\RightLabel{($\vee\vdash$)}
				\BinaryInfC{$A\vee B,\Gamma, \Lambda\vdash\Sigma, \Delta$}
			\end{prooftree}
			\begin{prooftree}
				\AxiomC{$\Gamma\vdash\Sigma,A$}
				\AxiomC{$A,\Lambda\vdash\Delta$}
				\RightLabel{(cut)}
				\BinaryInfC{$\Gamma, \Lambda\vdash\Sigma, \Delta$}
			\end{prooftree}
			
			\begin{prooftree}
				\AxiomC{$A(B), \Gamma \vdash\Sigma$}
				\RightLabel{$(\forall \vdash)$}
				\UnaryInfC{$\forall x A(x),\Gamma   \vdash\Sigma$}
				\DisplayProof\hspace{0.7cm}
				\AxiomC{$\Gamma,\vdash\Sigma,  A(p)$}
				\RightLabel{$(\vdash\forall)$}
				\UnaryInfC{$\Gamma\vdash\Sigma, \forall x A(x)$}
			\end{prooftree}
			
			\begin{prooftree}
				\AxiomC{$A(p), \Gamma \vdash\Sigma$}
				\RightLabel{$(\exists \vdash)$}
				\UnaryInfC{$\exists x A(x),\Gamma   \vdash\Sigma$}
				\DisplayProof\hspace{0.7cm}
				\AxiomC{$\Gamma,\vdash\Sigma,  A(B)$}
				\RightLabel{$(\vdash\exists)$}
				\UnaryInfC{$\Gamma,   \vdash\Sigma, \exists x A(x)$}
			\end{prooftree}
			\centering{\text{ $A$, $B$ are QBFs and $\Gamma, \Lambda,\Sigma, \Delta$ are sets of QBFs,}  \text{Variable $p$ is not free on the lower sequents in ($\exists \vdash$), ($\vdash \forall$).} \\ \text{Variable $x$ replaces $p$ (or $B$) only and does not appear elsewhere in $A$.}
				\text{The free variables of $p$ and $B$ are not bound in $A$.}}
			
			\caption{Rules of the sequent calculus \Gfull \cite{KP90}.}
			\label{fig_Gfull} 
	}}
\end{figure}

\Gfull is a powerful QBF proof system that can p-simulate all the rules of QRAT and therefore does not have strategy extraction unless $\Pe=\PSPACE$. $\Gfull$'s strength comes from its ability to use QBFs as non-deterministic objects either as witnesses for quantifier introduction or from QBF cuts.

\subsubsection{QBFs as extension variables}
	Each sequent is made up of two sets of QBFs, a left hand side and a right hand side. As we are building up to a p-simulation by \dFregeRed, we will eventually represent a sequent as a clause, but first we will define each QBF as an extension variable using our independent extension variable rule:

\begin{definition}\label{def:extqbf}
We represent non-prenex QBF with potential free variables using universal variables and existential variables in the following recursive way:

\begin{itemize}
	\item the correct interpretation of atoms in sequents are universal variables (we can always add more universal variables to the prefix in \dFregeRed).
	\item $\neg \phi$. We take (possibly extension) variable $a$ representing $\phi$ and create new extension variable $c$ defined by $c \leftrightarrow (\neg a)$. $c$ represents $\neg \phi$.
	\item $\phi \wedge \psi$. We take (possibly extension) variables $a,b$ representing $\phi$ and $\psi$ respectively and create new  extension variable $c$ defined by $c \leftrightarrow (a \wedge b)$.  $c$ represents $\phi \wedge \psi$.
	\item $\phi \vee \psi$. We take (possibly extension) variables $a,b$ representing $\phi$ and $\psi$ respectively and create new  extension variable $c$ defined by $c \leftrightarrow (a \vee b)$. $c$ represents $\phi \vee \psi$.
	\item $\forall x \phi(x)$: 
	we take $a$ the existential variable representing $\phi$ and we make three definitions:
	$x\rightarrow (a^x\leftrightarrow a)$,\quad
	$\neg x\rightarrow (a^{\bar x}\leftrightarrow a)$,\quad $(a^{\forall x}\leftrightarrow a^{\bar x} \wedge a^x)$.
	$a^{\forall x}$ represents $\forall x \phi(x)$.
	\item $\exists x \phi(x)$: 
	we take $a$ the existential variable representing $\phi$ and we make three definitions:
	$x\rightarrow (a^x\leftrightarrow a)$,\quad
	$\neg x\rightarrow (a^{\bar x}\leftrightarrow a)$,\quad $(a^{\exists x}\leftrightarrow a^{\bar x} \vee a^x)$.
	$a^{\exists x}$ represents $\exists x \phi(x)$.

\end{itemize}

\begin{proposition}
	The dependency set $D_a$ of an extension variable $a$ representing a QBF $\phi$  in Definition~\ref{lem:aequiv} is exactly the set of free variables of $\phi$.
\end{proposition}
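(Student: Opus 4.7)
The plan is to prove the proposition by structural induction on the QBF $\phi$, mirroring the recursive clauses of Definition~\ref{def:extqbf}. For each case I will compute $D_a$ directly from the Independent Extension rule (recall $D_v = (\bigcup_{y\in Y} D_y) \setminus D_\alpha$) and compare it to the set of free variables $\mbox{free}(\phi)$.

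For the base case, $\phi$ is an atom, which is interpreted as a universal variable $u$; then $a = u$ and $D_u = \{u\} = \mbox{free}(\phi)$ by definition of dependency sets on universal variables. For the Boolean connective cases ($\neg\phi$, $\phi\wedge\psi$, $\phi\vee\psi$), the extension is introduced with empty condition $\alpha$, so $D_c$ is exactly the union of the dependency sets of the subformula witnesses. The induction hypothesis gives these as $\mbox{free}(\phi)$ (and $\mbox{free}(\psi)$), whose union is $\mbox{free}(\neg\phi)$, $\mbox{free}(\phi\wedge\psi)$, or $\mbox{free}(\phi\vee\psi)$, as required.

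The interesting cases are the quantifier rules, where the conditioning $\alpha$ does actual work. For $\forall x\,\phi(x)$ (the $\exists$ case is symmetric), the induction hypothesis gives $D_a = \mbox{free}(\phi)$. The two conditional extensions $x \rightarrow (a^x\leftrightarrow a)$ and $\bar x \rightarrow (a^{\bar x}\leftrightarrow a)$ both have $\alpha$ containing the single literal on $x$, so $D_{a^x} = D_{a^{\bar x}} = D_a \setminus \{x\} = \mbox{free}(\phi)\setminus\{x\}$. The outer conjunction $a^{\forall x}\leftrightarrow a^{\bar x}\wedge a^x$ has empty $\alpha$, so $D_{a^{\forall x}} = D_{a^x}\cup D_{a^{\bar x}} = \mbox{free}(\phi)\setminus\{x\} = \mbox{free}(\forall x\,\phi(x))$, closing the induction.

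The only mild subtlety is to check that $x$ is in fact a universal variable in $\Pi$ at the point the conditioning is applied, since $\alpha$ in the Independent Extension rule must be a conjunction of universal literals. This is granted by Definition~\ref{def:extqbf}, since atoms are always interpreted as universal variables of the ambient prefix (and the $\Pi$ may be freely weakened to introduce them if needed). No other obstacles arise; the whole argument is a routine bookkeeping exercise that fully exploits the fact that our extension rule permits precisely the dependency removal matching quantifier binding.
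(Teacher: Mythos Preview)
Your proof is correct and follows essentially the same structural induction as the paper's proof. You give more detail in the quantifier case by explicitly computing the dependency sets of the intermediate variables $a^x$ and $a^{\bar x}$, whereas the paper simply asserts $D_{a^{\forall x}} = D_a \setminus \{x\}$, but the underlying argument is identical.
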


\begin{proof}
	For atom $u$, $u$ is its sole free variable and $D_u=\{u\}$.
	For propositional connectives $\neg, \wedge, \vee $ dependency and free variables are both given by the union of their defining subformulas.
	For quantification by $x$, WLOG $\forall x \phi(x)$, let $a$ be the extension variable representing $\phi(x)$ then $D_{a^{\forall x}}= D_a\setminus \{x\}$, but also the free variables of $\forall x \phi(x)$ are exactly to free variables of $\phi(x)$ minus $x$.\qed
\end{proof}

\end{definition}
\subsubsection{Sequents as clauses}

A sequent $\phi_1, \dots \phi_n \vdash \psi_1, \dots \psi_m$ can be represented by a clause:
$$\neg a_1 \vee\dots \vee \neg a_n \vee b_1 \vee\dots \vee b_m$$ where for $1\leq i \leq n$, $a_i$ is the variable representing $\phi_i$ and for $1\leq j \leq m$, $b_j$ is the variable representing $\psi_j$ using Definition~\ref{def:extqbf}.

\subsubsection{Sequent rules as clausal inferences}
With this interpretation, the propositional inferences fall into place very nicely, we can see this in Figure~\ref{fig_simG}, but this is a restatement of the well known fact that Frege rules can p-simulate the propositional fragment of \Gfull \cite{Kra95} (also known independently as LK).

\begin{lemma}[\cite{Kra95}]\label{lem:LK}
	All propositional rules ($\vdash$), ($\bot \vdash$),($\vdash\top$),($\bullet\vdash$), ($\vdash\bullet$) , ($\neg\vdash$), ($\vdash\neg$ ), ($\bullet\wedge\vdash$), ($\wedge\bullet\vdash$), ($\vdash\wedge$), ($\vdash\vee\bullet$), ($\vdash\bullet\vee$) ($\vee\vdash$) and (cut) can be p-simulated by \dFregeRed.
\end{lemma}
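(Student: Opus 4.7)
The plan is to translate each propositional sequent rule into a short Frege derivation operating on the clausal representation of sequents. Since each QBF $\phi$ appearing in the \Gfull proof has been named by an extension variable $a$ via Definition~\ref{def:extqbf}, we have defining clauses on hand encoding $a \leftrightarrow \phi'$ for each immediate structural decomposition $\phi'$ of $\phi$. For the propositional rules, these definitions reduce to plain biconditionals (no universal condition $\alpha$ is triggered), so all that is needed is ordinary Frege reasoning with the defining clauses available as hypotheses.

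I would dispatch the rules in three tiers. The trivial rules are immediate: the axiom $A \vdash A$ gives the tautological clause $\neg a \vee a$; $\bot \vdash$ follows from the defining clause of $\bot$; $\vdash \top$ is symmetric; the two weakening rules are literally clause weakening; and the negation rules $(\neg\vdash)$, $(\vdash\neg)$ use the defining clauses of the extension variable $c$ for $\neg A$, which encode $c \leftrightarrow \neg a$, to flip a literal across the turnstile in one resolution step. The unary connective rules $(\bullet\wedge\vdash)$, $(\wedge\bullet\vdash)$, $(\vdash\bullet\vee)$, $(\vdash\vee\bullet)$ each replace a literal in the clausal representation by the appropriate conjunct/disjunct using the clauses encoding $c \leftrightarrow a \wedge b$ or $c \leftrightarrow a \vee b$. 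The two-premise rules $(\vdash\wedge)$ and $(\vee\vdash)$ form the disjunction of the two premise clauses and then resolve twice against the defining clauses of the top connective to collapse the pair of literals $a,b$ into the single literal $c$. Finally, the cut rule is exactly propositional resolution on the variable representing the cut formula, which is well known to be p-simulated by \Frege in linear size.

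The main obstacle is purely bookkeeping: the contextual sets $\Gamma, \Lambda, \Sigma, \Delta$ may be large, producing wide clauses, but each sequent rule touches only a constant number of literals while the rest are passively carried along. Consequently each sequent step is simulated using $O(|\text{sequent}|)$ \dFregeRed steps beyond the extension clauses already introduced by Definition~\ref{def:extqbf}, yielding an overall polynomial p-simulation of the propositional fragment of \Gfull. This recovers the classical fact that Frege systems p-simulate propositional LK~\cite{Kra95}, lifted to the clausal-plus-extension format used throughout this paper.
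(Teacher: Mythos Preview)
Your proposal is correct and matches the paper's approach: the paper does not give a detailed proof but simply cites \cite{Kra95} and displays Figure~\ref{fig_simG}, which is precisely the clausal translation of each LK rule that you spell out. Your tier-by-tier breakdown (tautologies/weakening, single-literal replacement via the defining clauses of the connective's extension variable, two resolutions for the binary two-premise rules, and resolution for cut) is exactly the standard Frege simulation of propositional LK lifted to the extension-variable encoding of Definition~\ref{def:extqbf}, and is in fact more explicit than what the paper itself provides.
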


\begin{figure}[h]
	\framebox{\parbox{\breite}
		{\small
			\begin{prooftree}
				\AxiomC{}
				\RightLabel{($\vdash$)}
				\UnaryInfC{$\neg a\vee a$}
				\DisplayProof\hspace{0.7cm}
				\AxiomC{}
				\RightLabel{($\bot\vdash$)}
				\UnaryInfC{$\neg 0$}
				\DisplayProof\hspace{0.7cm}
				\AxiomC{}
				\RightLabel{($\vdash\top$)}
				\UnaryInfC{$1$}
			\end{prooftree}
			\begin{prooftree}
				\AxiomC{$\Gamma\vee\Sigma$}
				\RightLabel{($\bullet\vdash$)}
				\UnaryInfC{$ \Delta\vee \Gamma\vee\Sigma$}
				\DisplayProof\hspace{1cm}
				\AxiomC{$\Gamma\vee\Sigma$}
				\RightLabel{($\vdash\bullet$)}
				\UnaryInfC{$\Gamma\vee \Sigma\vee\Delta$}
			\end{prooftree}
			\begin{prooftree}
				\AxiomC{$ \Gamma\vee\Sigma\vee a$}
				\RightLabel{($\neg\vdash$)}
				\UnaryInfC{$\neg \neg a\vee \Gamma\vee\Sigma$}
				\DisplayProof\hspace{0.7cm}
				\AxiomC{$\neg a\vee\Gamma\vee\Sigma$}
				\RightLabel{($\vdash\neg$)}
				\UnaryInfC{$ \Gamma\vee\Sigma\vee\neg a$}
			\end{prooftree}
			\begin{prooftree}
				\AxiomC{$\neg a\vee \Gamma\vee\Sigma$}
				\RightLabel{($\bullet\wedge\vdash$)}
				\UnaryInfC{$\neg (b \wedge a)\vee 
					\Gamma\vee\Sigma$}
				\DisplayProof\hspace{0.7cm}
				\AxiomC{$\neg a\vee 
					 \Gamma\vee\Sigma$}
				\RightLabel{($\wedge\bullet\vdash$)}
				\UnaryInfC{$\neg (a\wedge b)\vee\Gamma\vee\Sigma$}
				\DisplayProof\hspace{0.7cm}
				\AxiomC{$ \Gamma\vee\Sigma\vee a$}
				\AxiomC{$ \Lambda\vee\Delta\vee b$}
				\RightLabel{($\vdash\wedge$)}
				\BinaryInfC{$ \Gamma\vee \Lambda\vee\Sigma\vee\Delta \vee(a\wedge b)$}
			\end{prooftree}
			\begin{prooftree}  
				
				
				\AxiomC{$ \Gamma\vee\Sigma\vee a$}
				\RightLabel{($\vdash\bullet\vee$)}
				\UnaryInfC{$ \Gamma\vee\Sigma\vee (b\vee a)$}
				\DisplayProof\hspace{0.5cm}
				\AxiomC{$\Gamma\vee\Sigma\vee a$}
				\RightLabel{($\vdash\vee\bullet$)}
				\UnaryInfC{$\Gamma\vee\Sigma\vee (a\vee b)$}
				\DisplayProof\hspace{0.7cm}
				\AxiomC{$\neg a\vee
					 \Gamma\vee\Sigma$}
				\AxiomC{$\neg b\vee 
					\Lambda\vee\Delta$}
				\RightLabel{($\vee\vdash$)}
				\BinaryInfC{$\neg (a\vee  b)\vee\Gamma\vee \Lambda\vee\Sigma\vee \Delta$}
			\end{prooftree}
			\begin{prooftree}
				\AxiomC{$\Gamma\vee\Sigma\vee a$}
				\AxiomC{$\neg a\vee \Lambda\vee\Delta$}
				\RightLabel{(cut)}
				\BinaryInfC{$\Gamma\vee \Lambda\vee\Sigma\vee \Delta$}
			\end{prooftree}
			
%
			\centering{\text{ $a$, $b$ are variables and $\Gamma, \Lambda,\Sigma, \Delta$ are subclauses} 
				}
			
			\caption{P-simulating rules of the sequent calculus LK }
			\label{fig_simG} 
	}}
\end{figure}

\subsubsection{Quantifier introduction rules}

Each quantifier introduction rule involves a substitution. For this we need to know that a certain substitutions $A[x/b]$ which replace $b$ with $x$, are equivalent to $A$ when $x=b$. This is not trivial (unlike the quantifier free case), because if $A$ contains quantifiers we can run into quantifier order problems. We use the following lemma with a suitable side condition to avoid this. 

\begin{lemma}\label{lem:aequiv}
	Let $x$ be a universal variable and let $b$ either be a universal variable or an existential variables with dependency set $D_b$. 
	Now we have two extension variables $a$ and $a'$ built on other extension variables: $\{t_1, \dots t_k\}$ and $\{t_1', ,t_k'\}$ where the definition of $t_i'$ only differs from $t_i$ by replacing inputs $t_j$ for $j<i$ with $t_j'$ and input $x$ with $b$, when these inputs occur.
	Suppose in every one of the extension clauses $\alpha\rightarrow (v \rightarrow B)$ that iteratively define $a$ and $t_i$ that neither variable $x$ nor any variable from $D_b$ appears in $\alpha$, then there is a $O(w^2 k)$ size \dFregeRed proof of 
	$(x \leftrightarrow b)\rightarrow(a \leftrightarrow a')$, where $w$ is the maximum width of the clauses involved. 
\end{lemma}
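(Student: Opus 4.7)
The plan is to prove the stronger statement $(x \leftrightarrow b) \to (t_i \leftrightarrow t_i')$ by induction on $i$ from $1$ to $k$, and then specialise at $i = k$ (since $a = t_k$ and $a' = t_k'$ up to renaming). The central observation making the induction go through is that because no $\alpha$ mentions $x$ or any variable of $D_b$, the two parallel extension clauses that define $t_i$ and $t_i'$, namely $\alpha_i \to (t_i \leftrightarrow B_i(t_1,\dots,t_{i-1},x,\dots))$ and $\alpha_i \to (t_i' \leftrightarrow B_i(t_1',\dots,t_{i-1}',b,\dots))$, share literally the same antecedent $\alpha_i$, so the conjoined context $\bigwedge_j \alpha_j \wedge (x \leftrightarrow b)$ can be threaded through the entire induction without being disturbed by any of the substitutions.

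For the base case, $B_1$ refers only to pre-existing variables and possibly $x$; under $(x \leftrightarrow b)$ one proves $B_1(x,\dots) \leftrightarrow B_1(b,\dots)$ by a straightforward $O(w)$-size Frege derivation that swaps $x$ for $b$ one occurrence at a time along the parse DAG of $B_1$, and then chaining with the two extension clauses (resolving on $\alpha_1$ and using the biconditionals) delivers $t_1 \leftrightarrow t_1'$ in a further constant number of steps. For the inductive step, the IH provides $t_j \leftrightarrow t_j'$ for every $j < i$; applying substitution-of-equivalents inside $B_i$ — an $O(w)$-size Frege derivation per variable swap, hence $O(w^2)$ overall for this level — gives $B_i(t_1,\dots,t_{i-1},x,\dots) \leftrightarrow B_i(t_1',\dots,t_{i-1}',b,\dots)$, after which the two extension clauses for $t_i$ and $t_i'$ yield $t_i \leftrightarrow t_i'$. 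Each level contributes $O(w^2)$ lines, and performing $k$ of them yields the claimed $O(w^2 k)$ total.

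The main obstacle is bookkeeping around the conditional parts $\alpha_i$: in \dFregeRed an extension clause only constrains its defined variable when its guard holds, so strictly each equivalence proved is a biconditional under $\alpha_i \wedge (x \leftrightarrow b)$. The hypothesis that $x$ and the variables of $D_b$ never appear in any $\alpha$ is exactly what rescues this: the substitution $x \mapsto b$ leaves every $\alpha_i$ syntactically unchanged, the same $\alpha_i$ therefore guards both $t_i$ and $t_i'$, and the conjunction $\bigwedge_j \alpha_j$ can be carried forward as a single uniform side-assumption. A secondary but related point is variable capture: the same side condition guarantees that inserting $b$ for $x$ inside a later $B_i$ cannot cause $b$ to fall under the scope of a guard that mentions one of its dependencies, so the resulting $t_i'$ really do have the expected dependency sets and the whole construction remains a legal \dFregeRed derivation.
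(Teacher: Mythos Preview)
Your inductive outline matches the paper's up to the point where the guards $\alpha_i$ enter, but your handling of those guards is where the argument breaks. You propose to ``thread $\bigwedge_j \alpha_j$ through the entire induction'' as a uniform side assumption, arriving at best at $\bigwedge_j \alpha_j \wedge (x \leftrightarrow b) \to (a \leftrightarrow a')$. But the lemma asks for the unconditional $(x \leftrightarrow b) \to (a \leftrightarrow a')$, and you never explain how the accumulated $\alpha_j$ are discharged. They cannot simply be dropped at the end: the only tool in \dFregeRed for removing universal literals is $\forall$-reduction, and that is blocked whenever some existential variable in the line depends on the universal being reduced. The top-level variable $a$ may well depend on a universal appearing in some \emph{earlier} guard $\alpha_j$ with $j<k$, since the IndExt rule only strips the variables of its \emph{own} guard from $D_a$. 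Concretely: take $\alpha_1 = u_1$ and define $t_1$ from some $y$ with $u_2\in D_y$; take $\alpha_2=u_2$ and define $t_2$ from $t_1$ together with some $z$ with $u_1\in D_z$. Then $u_1\in D_{t_2}$, so the literal $\neg u_1$ cannot be reduced from any line containing $t_2$, and your accumulated side assumption is stuck.

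The paper's proof avoids this by reducing $\alpha_i$ \emph{immediately} at step $i$, before moving on. After combining the inductively obtained equivalences with the two extension clauses for $t_i$ and $t_i'$, one has the line $\neg\alpha_i \vee \neg(x\leftrightarrow b) \vee (t_i\leftrightarrow t_i')$. At this moment the only existential variables present are $t_i$, $t_i'$ and possibly $b$; the IndExt rule guarantees that $D_{t_i}$ and $D_{t_i'}$ omit the variables of $\alpha_i$, and the lemma's hypothesis guarantees the same for $D_b$ (and that $x$ is not among them). Hence $\forall$-reduction applies and yields the clean $(x\leftrightarrow b)\to(t_i\leftrightarrow t_i')$ as the induction hypothesis for the next step. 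This per-step use of reduction is the one genuinely DQBF ingredient in the argument---everything else is propositional Frege---and it is precisely what is missing from your proposal.
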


\begin{proof}
	We can demonstrate this by induction on the number of extensions needed to define $a$. 
	
	Let $Y$ be a set of literals possibly containing variables $x$ or $t_i$ for $1\leq i\leq k$ (the base case is when $k=0$). $Y'$ will be the same as $Y$ but replacing the variables $x$ with $b$ and $t_i$ with $t_i'$.
	Suppose, without loss of generality we use extension definition $\alpha\rightarrow (a \leftrightarrow \bigwedge_{y\in Y} y)$, then we also have extension definition $\alpha\rightarrow (a' \leftrightarrow \bigwedge_{y'\in Y'} y')$.
	Note that for each $y=t_i$ we can prove  
	$(x \leftrightarrow b)\rightarrow(t_i \leftrightarrow t_i')$ by the induction hypothesis, and so we prove  $(x \leftrightarrow b)\rightarrow(\bigwedge_{y\in Y} y \leftrightarrow \bigwedge_{y\in Y'} y')$ in $O(w)$ many steps.
	Using the extension definitions we get 
	$\alpha\wedge (x \leftrightarrow b)\rightarrow (a \leftrightarrow a')$.
	Our side condition is that $\alpha$ does not contain variables from $D_b$ nor $x$. 
	Additionally $\alpha$ does not contain any variables in $D_a$ nor $D_{a'}$ because the IndExt rule removes them, therefore we can reduce $\neg \alpha$ to get $(x \leftrightarrow b)\rightarrow (a \leftrightarrow a')$.\qed
\end{proof}

Now observe the right universal rule. Our derivation will look something like.

\begin{prooftree}
\AxiomC{$\Gamma \vee \Sigma \vee A$}
\UnaryInfC{$\Gamma \vee \Sigma \vee \forall x A[x/p]$}
\end{prooftree}
With universal variable $p$ not present in the dependency sets of $\Gamma$ and $\Sigma$, $x$ and $p$ not being bound in the definition of $A$. $A$ and $\forall x A[x/p]$ will be represented by an extension variable.

\begin{lemma}\label{lem:Aintro}
	\dFregeRed can p-simulate $(\vdash \forall)$ and $(\exists \vdash)$.
\end{lemma}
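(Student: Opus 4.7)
The plan is to treat $(\vdash \forall)$ in detail; the case of $(\exists \vdash)$ is entirely dual. In our encoding, the premise sequent $\Gamma \vdash \Sigma, A(p)$ is a clause of the form $C \vee a_p$, where $a_p$ is the extension variable representing $A(p) = A[x/p]$, and $C$ encodes $\neg \Gamma \vee \Sigma$. The target is the clause $C \vee a^{\forall x}$, where $a^{\forall x}$ is built as in Definition~\ref{def:extqbf} from a variable $a$ representing $A(x)$.

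First I would introduce a fresh universal variable $x$ via prefix weakening and then build $a$ by replaying the construction of $a_p$ with $x$ substituted for $p$ throughout. I would then add the three extension definitions $x \rightarrow (a^x \leftrightarrow a)$, $\bar x \rightarrow (a^{\bar x} \leftrightarrow a)$ and $a^{\forall x} \leftrightarrow a^x \wedge a^{\bar x}$. By the proposition following Definition~\ref{def:extqbf}, the dependency set of each extension variable coincides with the free variables of the QBF it represents, so the side condition of the sequent rule (that $p$ is not free in $\Gamma, \Sigma$) guarantees $p \notin D_y$ for every existential literal $y$ appearing in $C$; and the IndExt rule guarantees $x, p \notin D_{a^x} \cup D_{a^{\bar x}}$.

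Next I would invoke Lemma~\ref{lem:aequiv} with its $x$ instantiated to the fresh universal $x$ and its $b$ to the universal variable $p$, to obtain a short derivation of $(x \leftrightarrow p) \rightarrow (a \leftrightarrow a_p)$. The side condition holds because $x$ is fresh and because $p$ is not free in $A(x)$, so $p$ was never used in any conditioning $\alpha$ during the construction of $a$. Chaining this equivalence with the conditional definitions of $a^x$ and $a^{\bar x}$ via a short Frege derivation yields the two clauses $\bar x \vee \bar p \vee \bar a_p \vee a^x$ and $x \vee p \vee \bar a_p \vee a^{\bar x}$.

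Cutting each of these against the premise $C \vee a_p$ produces $C \vee \bar x \vee \bar p \vee a^x$ and $C \vee x \vee p \vee a^{\bar x}$. Since neither $x$ nor $p$ lies in the dependency set of any existential literal remaining in these clauses, two applications of the reduction rule to each clause strip the universal literals, yielding $C \vee a^x$ and $C \vee a^{\bar x}$. A final short Frege step using $a^{\forall x} \leftrightarrow a^x \wedge a^{\bar x}$ delivers $C \vee a^{\forall x}$, as required. The $(\exists \vdash)$ case is dual: starting from $C \vee \bar a_p$ the same construction yields $C \vee \bar a^x$ and $C \vee \bar a^{\bar x}$, which combine via $a^{\exists x} \leftrightarrow a^x \vee a^{\bar x}$ into $C \vee \bar a^{\exists x}$. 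The main obstacle I anticipate is the careful bookkeeping of dependency sets needed to license the reductions on $x$ and $p$; once the correspondence between free variables and dependency sets is secured via the proposition following Definition~\ref{def:extqbf}, the rest of the derivation is essentially mechanical.
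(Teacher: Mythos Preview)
Your proposal is correct and follows essentially the same approach as the paper. The only cosmetic difference is the order of operations: the paper first reduces $x$ from the clause $\bar p \vee \bar x \vee \bar a' \vee a^x$ (noting that $a'$ still blocks reduction of $p$ at that point), then resolves away $a'$ against the premise, and only afterwards reduces $p$; you instead resolve first and then reduce both $x$ and $p$ together, which works equally well since after the cut neither universal is blocked.
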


\begin{proof}

If we have derived a clause representing $\Gamma \vee \Sigma \vee A$, we have already represented the QBF $A$ with extension variable, which we refer here as $a'$, therefore we say our premise clause is $\Gamma \vee \Sigma \vee a'$.

We first define an extension variable $a$ which is constructed using the same definitions as $a'$ but replaces universal variable $p$ with universal variable $x$, i.e. $a= a'[x/p]$, unless of course a variable like this is already defined from some previous rule, then we would simply reuse this. 
In either case, we can prove by induction that $(p\leftrightarrow x)\rightarrow (a\leftrightarrow a')$  via Lemma~\ref{lem:aequiv}.

Next we define three independent extension variables (if they do not exist already):
$x\rightarrow (a^x\leftrightarrow a)$,\quad
$\neg x\rightarrow (a^{\bar x}\leftrightarrow a)$,\quad $(a^{\forall x}\leftrightarrow a^{\bar x} \wedge a^x)$.

Now we prove $\neg p \vee \neg x \vee \neg a' \vee a^x$
and 
$p \vee x \vee \neg a' \vee a^{\bar x}$. 
 Since we have independence for $x$ (but not $p$ yet) we can reduce to get 
 $\neg p \vee \neg a' \vee a^x$
 and 
 $p \vee \neg a' \vee a^{\bar x}$.
Now we resolve with the premise clause to get $\Gamma \vee \Sigma \vee \neg p  \vee a^x$ and $\Gamma \vee \Sigma \vee p \vee a^{\bar x}$. Since we now have independence for $p$ we reduce to get $\Gamma \vee \Sigma\vee a^x$ and $\Gamma \vee \Sigma \vee a^{\bar x}$. By distributivity we get $\Gamma \vee \Sigma\vee (a^x \wedge a^{\bar x})$. By definition we can replace $(a^x \wedge a^{\bar x})$ to get $\Gamma \vee \Sigma \vee a^{\forall x}$ where $a^{\forall x}$ is our extension variable representing $\forall x A[x/p]$. As with $\forall x A[x/p]$, $a^{\forall x}$ is independent of $x$ and $p$, but retains all free variables in its dependency set.

The proof for the left existential rule ($\exists\vdash$) is symmetric to the right universal rule ($\vdash\forall$).
\qed
\end{proof}

Now observe the right existential rule and how it would be represented as clauses.
\begin{prooftree}
	\AxiomC{$\Gamma \vee \Sigma \vee A$}
	\UnaryInfC{$\Gamma \vee \Sigma \vee \exists x A[x/b]$}
\end{prooftree}

With no variable in $D_b\cup \{x\}$ being quantified in  $A$.

\begin{lemma}\label{lem:Eintro}
	\dFregeRed can p-simulate $(\vdash \exists)$ and $(\forall \vdash)$.
\end{lemma}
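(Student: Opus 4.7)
The plan is to mirror the derivation used for Lemma~\ref{lem:Aintro}, swapping $\forall/\exists$ (and the accompanying $\wedge/\vee$) and adapting the substitution step. The case $(\forall \vdash)$ follows from $(\vdash \exists)$ by negating both sides of the sequent, so I focus on $(\vdash \exists)$. Starting from the premise clause $\Gamma \vee \Sigma \vee a$, where $a$ is the extension variable representing $A$ (i.e.\ $A(x)[x/b]$), the target clause is $\Gamma \vee \Sigma \vee a^{\exists x}$ where $a^{\exists x}$ represents $\exists x\, A[x/b]$.

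First I build an extension variable $a''$ that represents $A[x/b]$ by replaying the definitions of $a$ with the universal variable $x$ substituted for $b$, reusing any existing identical extensions. Lemma~\ref{lem:aequiv} applies because the sequent-rule side condition ``no variable in $D_b \cup \{x\}$ is quantified in $A$'' is exactly what the lemma asks of the $\alpha$-conditions of the extension clauses defining $a''$; it yields a short proof of $(x \leftrightarrow b) \rightarrow (a'' \leftrightarrow a)$. I then introduce, or reuse, the three extensions from Definition~\ref{def:extqbf}: $x \rightarrow (a^x \leftrightarrow a'')$, $\neg x \rightarrow (a^{\bar x} \leftrightarrow a'')$, and $a^{\exists x} \leftrightarrow (a^x \vee a^{\bar x})$. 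By the Independent Extension rule, $D_{a^{\exists x}} \subseteq D_{a''} \setminus \{x\}$, so $a^{\exists x}$ does not depend on $x$.

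A short Frege-style chain among the definitional clauses produces $\neg x \vee \neg b \vee \neg a \vee a^{\exists x}$: resolve the equivalence clause $\neg x \vee \neg b \vee \neg a \vee a''$ (from Lemma~\ref{lem:aequiv}) with $\neg x \vee \neg a'' \vee a^x$ and then with $\neg a^x \vee a^{\exists x}$. Symmetrically, using the other branch of $(x \leftrightarrow b)$, I obtain $x \vee b \vee \neg a \vee a^{\exists x}$. Resolving each against the premise $\Gamma \vee \Sigma \vee a$ gives $\Gamma \vee \Sigma \vee \neg x \vee \neg b \vee a^{\exists x}$ and $\Gamma \vee \Sigma \vee x \vee b \vee a^{\exists x}$. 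Since $x$ is the newly bound variable (so, after $\alpha$-renaming, fresh in $\Gamma, \Sigma, B$), no existential variable in either clause has $x$ in its dependency set, and the reduction rule eliminates the $x$ literals. A final resolution on $b$ yields $\Gamma \vee \Sigma \vee a^{\exists x}$. The rule $(\forall \vdash)$ is fully dual, using $a^{\forall x} \leftrightarrow a^x \wedge a^{\bar x}$ and flipping the polarity of $a$ throughout; the same chain of derivations produces the required clause $\Gamma \vee \Sigma \vee \neg a^{\forall x}$.

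The main obstacle is the bookkeeping around side conditions: one must verify that the sequent-rule hypotheses translate exactly into the two independence facts the construction uses, namely the Lemma~\ref{lem:aequiv} hypothesis that neither $x$ nor any variable of $D_b$ appears in the $\alpha$-condition of any extension definition used while building $a''$, and the independence of every existential variable of the intermediate clauses from $x$ which licenses the universal reductions. Both follow from $x$ being the variable newly introduced by the quantifier and from the requirement that the free variables of $B$ are not bound inside $A$.
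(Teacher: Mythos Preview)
Your proposal is correct and follows essentially the same approach as the paper: use Lemma~\ref{lem:aequiv} to link the $b$-version and the fresh-$x$-version of $A$, introduce the three IndExt definitions for $a^x,a^{\bar x},a^{\exists x}$, then reduce on $x$ and resolve on $b$. The only cosmetic difference is the order of operations: the paper first reduces $x$ in the small clauses $\neg b\vee\neg a'\vee a^x$ and $b\vee\neg a'\vee a^{\bar x}$, resolves on $b$ to obtain $\neg a'\vee a^{\exists x}$, and only then resolves with the premise; you instead resolve with the premise first and reduce $x$ inside the larger clause containing $\Gamma,\Sigma$. Both orderings are valid because $x$ is fresh, so no existential literal in $\Gamma,\Sigma,b$ has $x$ in its dependency set, but the paper's ordering avoids having to argue that point.
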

\begin{proof}
We say our premise clause is $\Gamma \vee \Sigma \vee a'$, with $a'$ representing $A$.
 
Once again we define $a$ using the same circuit structure as $a'$ but replacing variable $b$ (which may be existential when an extension variable) with a fresh universal variable $x$. By Lemma~\ref{lem:aequiv}, we know $(x \leftrightarrow b)\rightarrow (a\leftrightarrow a')$ can be obtained.

Now we define:
$x\rightarrow (a^x\leftrightarrow a)$,\quad
$\neg x\rightarrow (a^{\bar x}\leftrightarrow a)$,\quad $(a^{\exists x}\leftrightarrow a^{\bar x} \vee a^x)$.
We next prove $\neg b \vee \neg x \vee \neg a' \vee a^x$.
and 
$b \vee x \vee \neg a' \vee a^{\bar x}$ and we reduce on $x$ to get $\neg b\vee \neg a' \vee a^x$
and 
$b\vee \neg a' \vee a^{\bar x}$, respectively. We can resolve the two over $b$ to get  $\neg a' \vee a^{\bar x} \vee a^{x}$ or even $\neg a' \vee a^{\exists x}$ once we use the definition. Now we can simply replace the $a'$ in $\Gamma \vee \Sigma \vee a'$ with $ a^{\exists x}$.

The left universal rule is symmetric to the right existential rule.\qed
\end{proof}

The combination of Lemmas~\ref{lem:LK}, \ref{lem:Aintro} and \ref{lem:Eintro} give us the following:

\begin{corollary}\label{cor:gsim}
	If there is a \Gfull proof $\pi$ of sequent $\mathcal{X}$, we can obtain its corresponding clause  with respect to Definition~\ref{def:extqbf} in a polynomial size number (in the size of $\pi$)  of \dFregeRed steps. 
\end{corollary}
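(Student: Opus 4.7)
The plan is a straightforward induction on the length of the \Gfull proof $\pi$, leveraging the three lemmas already established for the individual rule classes. Concretely, I would maintain as an invariant that after processing the first $k$ sequents of $\pi$, the \dFregeRed derivation has: (i) defined, via Definition~\ref{def:extqbf}, an extension variable $a_\phi$ for every QBF subformula $\phi$ that has appeared on either side of some processed sequent (reusing variables whenever the same subformula is encountered again); and (ii) derived the clausal encoding of each such sequent as a \dFregeRed line. The base case is vacuous; a \Gfull proof starts with axioms such as ($\vdash$), ($\bot\vdash$), ($\vdash\top$), which are handled by Lemma~\ref{lem:LK}, together with the definitional extensions they trigger.

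For the inductive step, I would split on the last rule applied. If it is any of the purely propositional/structural rules or a cut, Lemma~\ref{lem:LK} produces the new clause in polynomially many steps from the clauses already derived. If it is a quantifier introduction rule, I would first ensure the extension variable for the newly introduced quantified QBF is defined according to Definition~\ref{def:extqbf} (adding $O(1)$ independent extensions if not already present), and then invoke Lemma~\ref{lem:Aintro} for $(\vdash\forall)$ and $(\exists\vdash)$, or Lemma~\ref{lem:Eintro} for $(\vdash\exists)$ and $(\forall\vdash)$, to transform the premise clause into the conclusion clause. Because each \Gfull rule is p-simulated with a polynomial blow-up by the corresponding lemma, summing over the $|\pi|$-many rule applications still yields a polynomial-size \dFregeRed derivation.

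The main obstacle is bookkeeping around the side conditions of the quantifier rules, particularly the freshness condition on the eigenvariable $p$ in $(\vdash\forall)$ and $(\exists\vdash)$, and the substitution $A[x/b]$ in $(\vdash\exists)$ and $(\forall\vdash)$. I need to verify that the hypotheses of Lemma~\ref{lem:aequiv} hold each time we invoke Lemma~\ref{lem:Aintro} or Lemma~\ref{lem:Eintro}: namely, that the eigenvariable (resp.\ the term $b$) is not already bound inside the formula being quantified, and that no variable from $D_b$ occurs in the condition $\alpha$ of any Independent Extension used to define the relevant extension variables. The \Gfull side conditions (``Variable $p$ is not free on the lower sequents'', ``The free variables of $p$ and $B$ are not bound in $A$'') are exactly what is needed here, and because our inductive construction only introduces conditions $\alpha$ coming from previous quantifier introductions on variables that are still quantified in the current formula, the condition transfers cleanly.

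Finally, a small technical point: we must take care to reuse extension variables across sequents wherever the same subformula recurs, so that the clausal encoding of identical QBFs is literally the same clause across different lines of the derivation. With that convention, the polynomial bound is immediate, as each of Lemmas~\ref{lem:LK}, \ref{lem:Aintro}, and \ref{lem:Eintro} contributes only $\operatorname{poly}(|\pi|)$ \dFregeRed steps per \Gfull inference, yielding the claimed polynomial simulation.
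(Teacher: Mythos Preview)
Your proposal is correct and matches the paper's approach: the paper states the corollary as an immediate consequence of Lemmas~\ref{lem:LK}, \ref{lem:Aintro}, and \ref{lem:Eintro}, and your induction on the \Gfull proof, dispatching each rule to the appropriate lemma, is exactly how those lemmas are combined. The additional bookkeeping you describe (variable reuse, checking the side conditions of Lemma~\ref{lem:aequiv} via the \Gfull side conditions) is a faithful expansion of what the paper leaves implicit.
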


\subsubsection{Refutational alignment}
Despite Corollary~\ref{cor:gsim} we have given ourselves the restriction that \dFregeRed is a refutational system that starts with axioms and a proof is only complete once we reach the symbol $0$.
Our interpretation of p-simulating $\Gfull$ is to say that for any $\Gfull$ proof of a sequent $\Pi \phi \vdash$, where $\Pi \phi$ is a closed QBF, with prefix $\Pi$ and matrix $\phi$ then we can construct in polynomial time an $\dFregeRed$ refutation of $\Pi \phi$. 

First it is useful to operate under a different set of variables in \Gfull, since we use universal variables for all atoms in Definition~\ref{def:extqbf}. We can unify them later using Lemma~\ref{lem:aequiv}.

\begin{lemma}\label{lem:newvar}
	Given a \Gfull proof of $\Pi \phi \vdash$, we can construct a $\Gfull$ proof of an isomorphic renaming of the variables of $\Pi \phi \vdash$ in polynomial times.
\end{lemma}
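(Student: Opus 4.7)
The plan is to prove this by applying the renaming pointwise to every sequent in the given \Gfull proof. Let $\rho$ denote the bijective renaming of variables, extended in the obvious way to QBFs (replacing every occurrence, both free and bound, of each variable $v$ by $\rho(v)$), to sets of QBFs, and to sequents $\Gamma \vdash \Delta$ as $\rho(\Gamma) \vdash \rho(\Delta)$. Given a \Gfull proof $\pi$ of $\Pi\phi \vdash$, I would construct $\rho(\pi)$ by replacing each sequent in $\pi$ with its $\rho$-image. Since $\rho$ changes only variable names and not the shape of any formula, $|\rho(\pi)| = |\pi|$, so the transformation is trivially polynomial (indeed linear).

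The correctness check is a straightforward structural induction on $\pi$. The axiom schemes $A \vdash A$, $\bot \vdash$, and $\vdash \top$ are clearly closed under $\rho$, since $\rho(A) \vdash \rho(A)$ is again an axiom and the constants are fixed. For the propositional rules and weakening/cut, renaming commutes with the propositional connectives and with set union, so each inference remains a valid instance of the same rule on the renamed sequents.

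The main obstacle — and really the only subtlety — is verifying the side conditions on the quantifier rules $(\forall \vdash)$, $(\vdash \forall)$, $(\exists \vdash)$, $(\vdash \exists)$: specifically that $p$ be not free in the lower sequent of $(\exists \vdash)$ and $(\vdash \forall)$; that $x$ replace $p$ (or $B$) only and not appear elsewhere in $A$; and that the free variables of $p$ (or $B$) not be bound in $A$. Because $\rho$ is a bijection applied uniformly to all variable occurrences, each such condition transfers: $p$ is free in a sequent $\mathcal{X}$ iff $\rho(p)$ is free in $\rho(\mathcal{X})$; $x$ fails to appear in $A$ outside its substituted position iff $\rho(x)$ fails to appear in $\rho(A)$ outside the corresponding position; and no new binding collisions are introduced because the bijection preserves the set of bound variables and its disjointness from any given set of free variables.

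Finally, the root of $\rho(\pi)$ is $\rho(\Pi\phi) \vdash$, which is precisely the sequent obtained by applying $\rho$ to $\Pi\phi \vdash$, as required. Thus $\rho(\pi)$ is a \Gfull proof of the renamed sequent, constructed in polynomial time from $\pi$ and $\rho$.
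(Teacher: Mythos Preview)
Your proposal is correct and follows exactly the paper's approach: the paper's proof is simply ``We copy all proof steps under the renaming,'' and you have faithfully expanded this one-line argument with the routine verification that each rule (including the quantifier side conditions) is preserved under a bijective renaming.
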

\begin{proof}
	We copy all proof steps under the renaming.\qed
\end{proof}

In \dFregeRed we only have access to formula $\phi$ to begin with, without quantifiers. Where our $\Gfull$ proof gives use $\neg \Pi \phi$ and our previous lemmas only give us a singleton clause that expresses this in a single extension literal, so we need some way to unravel the quantification to cause a contradiction.  

\begin{lemma}\label{lem:Aunfold}
	Suppose $\psi(Z,x)$ is a QBF with free variables $Z\sqcup\{x\}$ and also consider $\psi(Z,y)$ which replaces $x$ with $y$ so the free variables are $Z\sqcup\{y\}$.
	$x$ and $y$ are universally quantified in \dFregeRed.
	Let $s_x, s_y$ be extension variables representing $\psi(Z,x)$ and $\psi(Z,y)$, respectively using Definition~\ref{def:extqbf}. $s_x$ and $s_y$ must be defined isomorphically. 
	We use three additional definitions for quantification:
		$$\bar y \rightarrow (s_y^{\bar y}\leftrightarrow s_y), \quad y \rightarrow (s_y^y\leftrightarrow s_y), 
		\quad s_y^{\forall y}\leftrightarrow  s_y^{\bar y} \wedge s_y^y.
		$$
	Using these definitions there is a polynomial size \dFregeRed proof for 
$$(\neg s^{\forall y}_y \wedge (x \leftrightarrow \neg s^{ y}_y))\rightarrow \neg s_x$$
	 
\end{lemma}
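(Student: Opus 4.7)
The plan is to reduce the target implication to a short propositional \Frege argument by first establishing, for each polarity of $x$, a clean equivalence between $s_x$ and one of $s_y^y, s_y^{\bar y}$. The latter two have dependency sets that exclude $y$, so at a key point we will be able to eliminate $y$ with a single reduction step and expose a pure relation between $s_x$ and the variables that already appear in the hypothesis.

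First I would invoke Lemma~\ref{lem:aequiv} with the universal variable $y$ playing the role of the replaced input and $x$ playing the role of the replacement, treating $s_x$ and $s_y$ as the two isomorphically built extension chains. The side condition---that $x$ and $y$ do not appear in any IndExt condition $\alpha$ along the construction---holds because, per Definition~\ref{def:extqbf}, the only conditions arising while building $s_x$ or $s_y$ come from bound variables of quantifiers inside $\psi$, which by standard bound-variable hygiene are distinct from the free variables $x$ and $y$. This yields a polynomial-size \dFregeRed derivation of $(x \leftrightarrow y) \rightarrow (s_x \leftrightarrow s_y)$.

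Next I would combine that line with the extension clause $y \rightarrow (s_y^y \leftrightarrow s_y)$ to obtain $(x \wedge y) \rightarrow (s_x \leftrightarrow s_y^y)$ by a short \Frege step. Neither $s_x$ (built over $Z \cup \{x\}$) nor $s_y^y$ (from which IndExt has stripped $y$) has $y$ in its dependency set, so the $(1$-\textsf{red}$)$ rule applies to this formula and yields $x \rightarrow (s_x \leftrightarrow s_y^y)$. Symmetrically, using $\bar y \rightarrow (s_y^{\bar y} \leftrightarrow s_y)$ and $(0$-\textsf{red}$)$ I would derive $\bar x \rightarrow (s_x \leftrightarrow s_y^{\bar y})$.

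Finally, a short propositional \Frege derivation closes the goal: $\neg s_y^{\forall y}$ together with $s_y^{\forall y} \leftrightarrow s_y^{\bar y} \wedge s_y^y$ gives $\neg s_y^y \vee \neg s_y^{\bar y}$; in the first branch $\neg s_y^y$ combined with $x \leftrightarrow \neg s_y^y$ forces $x$, and then $x \rightarrow (s_x \leftrightarrow s_y^y)$ delivers $\neg s_x$; in the second branch $s_y^y \wedge \neg s_y^{\bar y}$ combined with $x \leftrightarrow \neg s_y^y$ forces $\bar x$, and $\bar x \rightarrow (s_x \leftrightarrow s_y^{\bar y})$ again delivers $\neg s_x$. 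The main obstacle I anticipate is the bookkeeping for the side condition of Lemma~\ref{lem:aequiv}---i.e., checking that the recursive construction of $s_x$ via Definition~\ref{def:extqbf} really does keep $x$ and $y$ out of every $\alpha$ encountered along the way. Apart from that, every step is either a textbook propositional \Frege inference or a direct application of a \dFregeRed reduction whose eligibility is immediate from the IndExt dependency-set prescription, so the total size stays polynomial in $|\psi|$.
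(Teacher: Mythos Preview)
Your proposal is correct and follows essentially the same route as the paper: invoke Lemma~\ref{lem:aequiv} to obtain $(x\leftrightarrow y)\rightarrow(s_x\leftrightarrow s_y)$, combine with the two conditional extension axioms, reduce $y$ to obtain the two one-sided equivalences for $x$ and $\bar x$, and then finish propositionally from $\neg s_y^{\forall y}$ and $x\leftrightarrow\neg s_y^{y}$. Your pairing $x\rightarrow(s_x\leftrightarrow s_y^{y})$ and $\bar x\rightarrow(s_x\leftrightarrow s_y^{\bar y})$ is in fact the one consistent with the given definitions; the paper's proof writes the superscripts the other way around, which appears to be a typo there rather than a difference in method.
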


\begin{proof}
	We prove $(x\leftrightarrow y)\rightarrow (s_x \leftrightarrow s_y)$ using Lemma~\ref{lem:aequiv}.
	Now it means we can derive $(\neg x \wedge \neg y) \rightarrow (s^y_y \leftrightarrow s_x)$ and 
	$( x \wedge y) \rightarrow (s^{\bar y}_y \leftrightarrow s_x)$.
	In both case we can reduce $y$ to get $\neg x \rightarrow (s^y_y \leftrightarrow s_x)$ and 
	$x \rightarrow (s^{\bar y}_y \leftrightarrow s_x)$.
	
	If we assume $\neg s_y^{\forall y}$ we get that $s^{\bar y}_y \rightarrow \neg s^y_y $. These are represented in the formulas we derive $\neg s_y^{\forall y}\wedge x \wedge \neg s_y^y \rightarrow \neg s_x$ and $\neg s_y^{\forall y}\wedge \neg x \wedge  s_y^y \rightarrow \neg s_x$. We can compose this together to get the formula 
	$(\neg s^{\forall y}_y \wedge (x \leftrightarrow \neg s^{ y}_y))\rightarrow \neg s_x$
	\qed
\end{proof}

The intuition behind $(x \leftrightarrow \neg s^{ y}_y)$ is that in order to remove the universal quantifier we need a Herbrand function. Being able to express a Herbrand function as a circuit would be strategy extraction, which is likely not possible for all $\Gfull$ proofs, instead we represent the strategy by a QBF witness which is encoded as an independent extension variable.

Note that when we repeat this idea for an existential variable $x$, we do not need the side condition of $x$ taking a strategy, as we will see.

\begin{lemma}\label{lem:Eunfold}
	Suppose $\psi(Z,x)$ is a QBF with free variables $Z\sqcup\{x\}$ and also consider $\psi(X,y)$ which replaces $x$ with $y$ so the free variables are $Z\sqcup\{y\}$.
	$x$ is existential, with $D_x$ a superset of the universal variables in $\bigcup_{z\in Z} D_z$.
	None of the variables of $D_x$ are bound in $\psi$.	
	$y$ is universally quantified and not contained in $\bigcup_{z\in Z} D_z$ nor in $D_x$. 
	Let $s_x, s_y$ be extension variables representing $\psi(X,x)$ and $\psi(X,y)$, respectively using Definition~\ref{def:extqbf}. $s_x$ and $s_y$ must be defined isomorphically. 
	We use three additional definitions for quantification:
	$$\bar y \rightarrow (s_y^{\bar y}\leftrightarrow s_y), \quad y \rightarrow (s_y^y\leftrightarrow s_y), 
	\quad s_y^{\exists y}\leftrightarrow  s_y^{\bar y} \vee s_y^y.
	$$
	Using these definitions there is a polynomial size \dFregeRed proof for 
	$$\neg s^{\exists y}_y\rightarrow \neg s_x.$$
	
\end{lemma}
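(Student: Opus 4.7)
The plan is to mirror Lemma~\ref{lem:Aunfold} but exploit that $x$ is existential: no Herbrand-style witness for $y$ needs to be encoded, and the variable $x$ itself can be left symbolic on both sides of the conclusion. Concretely, I would first establish that $s_y$ and $s_x$ agree when $y$ and $x$ agree, then split on $y$ and reduce, and finally combine the two branches under the assumption $\neg s_y^{\exists y}$.

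First I would invoke Lemma~\ref{lem:aequiv} with the universal variable $y$ as the original and the existential variable $x$ as the substitution target, obtaining $(y\leftrightarrow x)\rightarrow(s_y\leftrightarrow s_x)$. The side conditions---that neither $y$ nor any variable from $D_x$ appears as an $\alpha$-condition inside the defining extensions of $s_y$---follow from the hypotheses that $y$ is fresh and not in $\bigcup_{z\in Z}D_z$, and that no variable of $D_x$ is bound in $\psi$. Splitting the biconditional on the value of $y$ and combining with the given definitions $\bar y\rightarrow(s_y^{\bar y}\leftrightarrow s_y)$ and $y\rightarrow(s_y^y\leftrightarrow s_y)$ yields $(\neg y\wedge\neg x)\rightarrow(s_y^{\bar y}\leftrightarrow s_x)$ and $(y\wedge x)\rightarrow(s_y^y\leftrightarrow s_x)$. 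In the resulting clauses every remaining existential literal ($s_y^{\bar y}$, $s_y^y$, $s_x$, and $x$) is independent of $y$---the first two by the dependency-subtraction built into the IndExt rule, the latter two by the assumption $y\notin D_x$---so the $\forall$-reduction rule legally eliminates $y$, leaving $\neg x\rightarrow(s_y^{\bar y}\leftrightarrow s_x)$ and $x\rightarrow(s_y^y\leftrightarrow s_x)$.

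Finally, the definition $s_y^{\exists y}\leftrightarrow s_y^{\bar y}\vee s_y^y$ Frege-implies $\neg s_y^{\exists y}\rightarrow(\neg s_y^{\bar y}\wedge\neg s_y^y)$; plugging this into the two implications above gives $\neg x\rightarrow\neg s_x$ and $x\rightarrow\neg s_x$, and a cut on $x$ then produces $\neg s_y^{\exists y}\rightarrow\neg s_x$. The one delicate point---and also the structural reason this lemma looks simpler than Lemma~\ref{lem:Aunfold}---is verifying that the reduction on $y$ is sound in both branches: for the universal unfolding one had to first define $x$ as a function of $s_y^y$ so that the existential $x$ would not block reduction of $y$, whereas here $x$ is existential by hypothesis with $y\notin D_x$, so it may appear freely in both branches and no strategy side condition enters the statement.
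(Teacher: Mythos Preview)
Your proposal is correct and follows essentially the same route as the paper's proof: invoke Lemma~\ref{lem:aequiv}, combine with the definitions of $s_y^{\bar y}$ and $s_y^y$, reduce $y$ (legitimate because $s_y^{\bar y},s_y^y,s_x,x$ are all independent of $y$), and then finish with the definition of $s_y^{\exists y}$. The only cosmetic difference is the order of the last two steps: the paper resolves the two reduced clauses over $x$ first to obtain $s_y^{\bar y}\vee s_y^y\vee\neg s_x$ and then applies the definition of $s_y^{\exists y}$, whereas you first unpack $\neg s_y^{\exists y}$ into $\neg s_y^{\bar y}\wedge\neg s_y^y$ and then cut on $x$; both orderings are equally valid short \Frege manipulations.
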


\begin{proof}
	We start with Lemma~\ref{lem:aequiv} to show $(x\leftrightarrow y)\rightarrow(s_x \leftrightarrow s_y)$.
	This gives us $\neg x \vee \neg y \vee s^y_y \vee \neg s_x$ and  $ x \vee  y \vee s^y_{\bar y} \vee \neg s_x$ and in these we can reduce $y$ to get
	$\neg x \vee s^y_y \vee \neg s_x$ and  $ x \vee s^y_{\bar y} \vee \neg s_x$. Now we resolve over $x$ to get $s^y_{\bar y} \vee s^y_y \vee \neg s_x$ using the definition of $s_y^{\exists y}$ we get $\neg s^{\exists y}_y\rightarrow \neg s_x$.\qed
\end{proof}

\begin{theorem}
	\dFregeRed p-simulates \Gfull.
\end{theorem}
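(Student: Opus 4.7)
The plan has three phases: a forward simulation of the given $\Gfull$-proof, a quantifier-peeling phase powered by Lemmas~\ref{lem:Aunfold} and \ref{lem:Eunfold}, and a concluding reduction phase that drives the derivation to $\bot$. In phase one I rename the atoms of the given $\Gfull$-proof via Lemma~\ref{lem:newvar} to fresh symbols disjoint from the variables of $\Pi$, and apply Corollary~\ref{cor:gsim} to obtain in $\dFregeRed$ the singleton clause $\{\neg s\}$, where $s$ is the extension variable representing the renamed closed QBF $\Pi'\phi'$ according to Definition~\ref{def:extqbf}.

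In phase two I strip the quantifiers of $\Pi'$ from outside in, maintaining the invariant that after $i$ steps I have derived an implication of the shape $\bigwedge_{j \in J_i} c_j \rightarrow \neg s_i$, in which $s_i$ represents the still-quantified inner QBF with the first $i$ renamed atoms replaced by their original counterparts in $\Pi$. An outer $\exists x_i'$-step is handled by Lemma~\ref{lem:Eunfold} with target $x_i$, the original existential of $\Pi$, and simply composes with the current implication; the side condition is satisfied because $D_{x_i}$ equals the set of universals outer to $x_i$ in $\Pi$. An outer $\forall u_i'$-step uses Lemma~\ref{lem:Aunfold} with target $u_i$, the original universal, and appends a single new conjunct $c_i := (u_i \leftrightarrow \neg h_i)$ to the antecedent, where $h_i := s^{u_i'}_\cdot$ is the independent extension variable the lemma introduces. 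Crucially, $h_i$'s dependency set is exactly the set of universals strictly left of $u_i$ in $\Pi$, so $h_i$ is independent of every $u_k$ with $k \geq i$.

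In phase three, after all $m$ peeling steps, $s_m$ represents the matrix $\phi$ over the original variables of $\Pi$ and I have a derivation of $\bigwedge_j c_j \rightarrow \neg s_m$. A short \Frege derivation using the matrix clauses of $\phi$ as axioms, composed with the chain of connective extension definitions built into $s_m$, yields the singleton $s_m$, so combining gives $\bigvee_j (u_j \leftrightarrow h_j)$. I then close the refutation by eliminating universals from right to left. Independence of $u_m$ from every $h_j$ lets me apply both ($0$-$\textsf{red}$) and ($1$-$\textsf{red}$) to $u_m$, producing the formulas specialised by $u_m \mapsto 0$ and $u_m \mapsto 1$; conjunction followed by distributivity then annihilates the $j = m$ disjunct, leaving $\bigvee_{j < m}(u_j \leftrightarrow h_j)$. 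Iterating this for $u_{m-1}, \dots, u_1$ (each valid because $u_k \notin D_{h_j}$ whenever $k \geq j$) drains the entire disjunction and derives $\bot$.

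The main obstacle is the careful dependency bookkeeping in phase two. Every target-substitution is ultimately an application of Lemma~\ref{lem:aequiv}, whose side condition forbids the substitution target $x$ and any variable from $D_b$ from appearing in any $\alpha$-condition of the extension clauses built so far. The $\alpha$'s that arise are the single-literal conditions $v'_k$, $\bar v'_k$ produced by Definition~\ref{def:extqbf} for not-yet-peeled quantifiers, together with the single-literal conditions $u'_k$, $\bar u'_k$ appearing in the definitions of the $h_k$'s already introduced. Thus one needs to maintain the explicit invariant that when substituting $v'_i \mapsto v_i$ in the $i$-th peeling step, neither $v_i$ nor $v'_i$ is named in any outstanding $\alpha$. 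This is true by construction — $v_i$ is a fresh original variable absent from everything built so far, and $v'_i$ only appears in $\alpha$'s of extensions that reside strictly inside the scope of the quantifier currently being peeled — but verifying it at each step is where the technical weight of the argument lies.
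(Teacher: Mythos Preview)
Your three-phase proposal is correct and matches the paper's approach: simulate the renamed $\Gfull$-proof via Corollary~\ref{cor:gsim} to obtain the singleton $\neg s$, peel quantifiers outside-in via Lemmas~\ref{lem:Aunfold} and~\ref{lem:Eunfold} (accumulating the conjuncts $(u_j\leftrightarrow\sigma_{u_j})$ for universal steps), derive the matrix extension variable from the axioms, and then eliminate the disjunction $\bigvee_j(u_j\oplus\sigma_{u_j})$ right-to-left by double reduction and resolution. The paper additionally treats the case of a true QBF (a $\Gfull$-proof of $\vdash\Pi\phi$) by first extending that proof to $\bar\Pi\neg\phi\vdash$, but this is a short addendum to the main refutational argument you cover; your final paragraph on the Lemma~\ref{lem:aequiv} side conditions is in fact more explicit than the paper's own treatment.
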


\begin{proof}
	Suppose we have a \Gfull proof of sequent $\Pi \phi \vdash$.
	We create a new set of universal variables to create and proof isomorphic sequent $\Pi' \phi' \vdash$.
	Using Definition~\ref{def:extqbf} to create an extension variable $d$ that represents $\Pi' \phi'$ and we can prove singleton clause $\neg d$.
	
	Note that in Lemma~\ref{lem:Aunfold} the variable $s^y_y$ does not contain $x$ in its dependency set. We consider it a strategy for $x$ and in the context we will refer to this  $\neg s^y_y$ as $\sigma_x$. So we will rewrite each use of Lemma~\ref{lem:Aunfold} as 	$(\neg s^{\forall y}_y \wedge (x \leftrightarrow \sigma_x))\rightarrow \neg s_x$.
	
	By using Lemma~\ref{lem:Aunfold} or Lemma~\ref{lem:Eunfold} we can remove the outer quantifier step by step. We eventually end up with 
	$\bigwedge_{x\in \forall, x\in \Pi} (x\leftrightarrow \sigma_x) \rightarrow \neg e$.
	Where $e$ is an extension variable that represents $\sigma$. Let the universal variables in $\Pi$ be $u_1 \dots u_n$ in order from left to right. Therefore we can simply derive disjunction $\bigvee_{i=1}^{n} (u_i\oplus \sigma_{u_i})$. The remaining proof proceeds similarly to the normal form originally used for \eFregeRed \cite{BBCP20}. So we realise that no existential variable in $\bigvee_{i=1}^{n} (u_i\oplus \sigma_{u_i})$ contains $u_n$ in its dependency set. Therefore we can reduce in both $0$ and $1$ to get $\sigma_{u_n}\vee \bigvee_{i=1}^{n-1} (u_i\oplus \sigma_{u_i})$ and $\neg \sigma_{u_n}\vee \bigvee_{i=1}^{n-1} (u_i\oplus \sigma_{u_i})$. We can resolve these two together to get $\bigvee_{i=1}^{n-1} (u_i\oplus \sigma_{u_i})$, so in this way we can repeatedly remove a disjunct, until reaching an empty disjunction i.e. $\bot$. 
	
	To remove any doubt we will also deal with QBFs that \Gfull can prove true i.e. there is a \Gfull proof of sequent $\vdash \Pi \phi$. We will say an \dFregeRed proof of $\vdash \Pi \phi$ is a refutation of $\bar \Pi \neg \phi$. 
	To do this we will show that a \Gfull proof of sequent $\vdash \Pi \phi$ can be extended easily to a \Gfull proof of sequent $\bar \Pi \neg \phi \vdash$.
	First we use \cite[Lem.~12]{CH22}  to show $\phi \vdash \phi$ and then the negation rule gives us $\phi, \neg \phi\vdash $. Now we add quantifiers to each QBF using the left rules. If we always begin with the left existential rule first we will satisfy the condition for the left universal rule, thus we proceed to introduce the quantifiers (this works the same way as the proof of \cite[Lem.~12]{CH22}) and we get $\Pi \phi, \bar \Pi \neg \phi \vdash$ the cut rule with $\vdash \Pi \phi$ gets us $\bar \Pi \neg \phi \vdash$ and we can then refute it in \dFregeRed by following the steps in this proof.\qed
\end{proof}

\begin{corollary}
	\dFregeRed p-simulates QRAT, on both true and false QBFs.
\end{corollary}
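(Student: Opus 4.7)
The plan is to reduce everything to results that are already in hand. The false-QBF direction was established at the end of Section~\ref{sec:qrat} by a direct line-by-line simulation of QRAT's rules ATA, QRATA, QRATU, and EUR, so only the case of true QBFs requires new argument. Following the convention fixed in the preceding theorem, I would interpret a $\dFregeRed$ ``proof'' of truth of $\Pi\phi$ as a $\dFregeRed$ refutation of $\bar\Pi\neg\phi$; with that convention, there is nothing novel to design.

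For the true case, I would chain two polynomial translations. First, invoke the result of Chew and Heule~\cite{CH22} that $\Gfull$ p-simulates QRAT: any QRAT derivation that certifies $\Pi\phi$ as true becomes, in polynomial time, a $\Gfull$ proof of the sequent $\vdash \Pi\phi$. Second, apply the theorem just established, which already contains the extra machinery needed here — its proof explicitly treats sequents of the form $\vdash \Pi\phi$ by appending quantifier introductions on the left together with a single cut to obtain $\bar\Pi\neg\phi \vdash$, from which a $\dFregeRed$ refutation of $\bar\Pi\neg\phi$ is extracted via Lemmas~\ref{lem:Aunfold} and~\ref{lem:Eunfold}.

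Composing the two polynomial-time maps gives a polynomial-time procedure sending any QRAT proof of a true QBF to a $\dFregeRed$ refutation of its negation. Combined with the false-QBF case, this yields the corollary.

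I do not anticipate a genuine obstacle: both component simulations are polynomial, and composition preserves this. The only conceptual point to be careful about is that ``p-simulation of QRAT on true QBFs'' is not a literal statement about refutations, and must be read through the truth/refutation convention of the preceding theorem; once that is fixed, the argument is a one-line transitivity.
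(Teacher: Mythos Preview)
Your proposal is correct and matches the paper's intended argument. The paper leaves this corollary without an explicit proof, relying precisely on the transitivity you spell out: the false case is the earlier corollary in Section~\ref{sec:qrat}, and the true case follows from the Chew--Heule result that \Gfull p-simulates QRAT~\cite{CH22} together with the preceding theorem, whose proof already handles sequents of the form $\vdash \Pi\phi$ under the convention that a \dFregeRed proof of a true QBF is a refutation of its dual.
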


\section{Conclusion}\label{sec:conclusion}
We have introduced a powerful new proof rule that p-simulates most (D)QBF pre-processing and solving rules.
Some proof complexity questions remain, in particular whether \dFregeRed is reciprocally p-simulated by some of these proof systems.
We have not provided any direct p-simulations that work for long-distance Q resolution and its variations in QBF proof systems. 
A major obstacle is that long-distance resolution in DQBF is unsound. 
It will still be possible to represent these rules with side conditions that force the prefix to be ordered and like a QBF. 
Technically a p-simulation is already known because QRAT p-simulates long-distance Q-resolution via blocked literal eliminations (using QRATU) \cite{KHS17}. 
It may be possible that other simulations work via the methods we have used.

Other techniques that may have simulations include other dependency scheme rules such as the tautology free dependency scheme~\cite{BBP20}. One difficulty in simulating this currently is that not much is yet known about the Skolem transformations which have helped us find the right definitions for replacement variables.

\section*{Acknowledgments}
Thanks to Marijn Heule and Mikol\'{a}\v{s} Janota for their discussions. 
This work is supported by FWF Project ESP197.

\bibliographystyle{splncs04}
\bibliography{mainlncs}

\end{document}